\let\@secnumfont\bfseries
\def\section{\@startsection{section}{1}%
  \z@{4\linespacing\@plus\linespacing}{\linespacing}%
  {\bfseries\centering}}
\def\introsection{\@startsection{section}{1}%
  \z@{3\linespacing\@plus\linespacing}{\linespacing}%
  {\bfseries\centering}}
\def\subsection{\@startsection{subsection}{2}%
   \z@{1.25\linespacing\@plus.7\linespacing}{.5\linespacing}%
   {\normalfont\bfseries}}
\def\subsectionsinline{\def\subsection{\@startsection{subsection}{2}%
  \z@{1\linespacing\@plus.7\linespacing}{-.5em}%
  {\normalfont\bfseries}}}
\theoremstyle{definition}
\newtheorem{example}[equation]{Example}
\newtheorem{question}[equation]{Question}
\newtheorem{construction}[equation]{Construction}
\newtheorem*{definition*}{Definition}
\newtheorem*{example*}{Example}
\newtheorem*{problem*}{Problem}
\newtheorem*{exercise*}{Exercise}
\newtheorem*{question*}{Question}
\newtheorem*{construction*}{Construction}
\theoremstyle{remark}
\newtheorem{remark}[equation]{Remark}
\newtheorem*{note*}{Note}
\newtheorem*{notation*}{Notation}
\newtheorem*{remark*}{Remark}
\newtheorem*{data*}{Data}
\theoremstyle{plain}
\newtheorem{proposition}[equation]{Proposition}
\newtheorem{claim}[equation]{Claim}
\newtheorem{proposal}[equation]{Proposal}
\newtheorem{hypothesis}[equation]{Hypothesis}
\newtheorem*{theorem*}{Theorem}
\newtheorem*{corollary*}{Corollary}
\newtheorem*{lemma*}{Lemma}
\newtheorem*{proposition*}{Proposition}
\newtheorem*{conjecture*}{Conjecture}
\newtheorem*{claim*}{Claim}
\newtheorem*{proposal*}{Proposal}
\newtheorem*{conclusion*}{Conclusion}
\newtheorem*{hypothesis*}{Hypothesis}
\newtheorem*{assumption*}{Assumption}
\numberwithin{equation}{section}
\definecolor{refkey}{rgb}{0,.6,.4}
\renewcommand{\:}{\colon}
\newcommand{\Ahat}{{\hat A}}
\newcommand{\CC}{{\mathbb C}}
\newcommand{\CP}{{\mathbb C\mathbb P}}
\newcommand{\FF}{\mathbb F}
\DeclareMathOperator{\Hom}{Hom}
\DeclareMathOperator{\id}{id}
\DeclareMathOperator{\Map}{Map}
\newcommand{\PP}{{\mathbb P}}
\DeclareMathOperator{\pt}{pt}
\newcommand{\QQ}{{\mathbb Q}}
\newcommand{\RP}{{\mathbb R\mathbb P}}
\newcommand{\RR}{{\mathbb R}}
\newcommand{\TT}{\mathbb T}
\DeclareMathOperator{\Spin}{Spin}
\newcommand{\ZZ}{{\mathbb Z}}
\newcommand{\chiup}{\raise.5ex\hbox{$\chi$}}
\newcommand{\cir}{S^1}
\newcommand{\dbar}{{\bar\partial}}
\newcommand{\inv}{^{-1}}
\DeclareRobustCommand{\mstrut}{^{\vphantom{1*\prime y\vee M}}}
\newcommand{\mlstrut}{_{\vphantom{1*\prime y}}}
\newcommand{\temsquare}{\raise3.5pt\hbox{\boxed{ }}}
\newcommand{\zmod}[1]{\ZZ/#1\ZZ}
\newcommand{\zt}{\zmod2}
\renewcommand{\cir}{\ensuremath{S^1}}
\def\epsorpdf{pdf}
\def\epsorpdf{eps}
\definecolor{refkey}{rgb}{0,.8,.2}\definecolor{labelkey}{rgb}{1,0,0}
\DeclareMathOperator{\Bord}{Bord}
\DeclareMathOperator{\Cliff}{Cliff}
\DeclareMathOperator{\Euler}{Euler}
\DeclareMathOperator{\Ext}{Ext}
\DeclareMathOperator{\Line}{Line}
\DeclareMathOperator{\Pin}{Pin}
\DeclareMathOperator{\Rep}{Rep}
\DeclareMathOperator{\Sign}{Sign}
\DeclareMathOperator{\Vect}{Vect}
\newcommand{\Anob}{Anom_{\textnormal{bose}}(d,G,\phi )}
\newcommand{\Anof}{Anom_{\textnormal{fermi}}(d,G,\phi )}
\newcommand{\BF}{\Bord_n(\sF)}
\newcommand{\BNG}{B\mstrut _\nabla G}
\newcommand{\CZ}{\CC/\ZZ}
\newcommand{\Cb}{\overline{\CC}}
\newcommand{\Cx}{\CC^\times }
\newcommand{\Fg}{F_{\textnormal{geometric}}}
\newcommand{\Ft}{F_{\textnormal{topological}}}
\newcommand{\ft}{\FF_2}
\newcommand{\ICZ}{I\CZ}
\newcommand{\IZ}{I{\ZZ}}
\newcommand{\RZ}{\RR/\ZZ}
\newcommand{\SPTba}[3]{SPT_{\textnormal{bose}}(#1,#2,#3)}
\newcommand{\SPTb}{SPT_{\textnormal{bose}}(d,G,\phi )}
\newcommand{\SPTf}{SPT_{\textnormal{fermi}}(d,G,\phi )}
\newcommand{\SREba}[3]{SRE_{\textnormal{bose}}(#1,#2,#3)}
\newcommand{\SREb}{SRE_{\textnormal{bose}}(d,G,\phi )}
\newcommand{\SREfa}[3]{SRE_{\textnormal{fermi}}(#1,#2,#3)}
\newcommand{\SREf}{SRE_{\textnormal{fermi}}(d,G,\phi )}
\newcommand{\TZ}{T\ZZ}
\newcommand{\Zf}{\ZZ_\phi }
\newcommand{\Zt}{\widetilde{\ZZ}}
\newcommand{\bc}{\bar{c}}
\newcommand{\bord}[1]{\Bord_{#1}}
\newcommand{\bort}{\Bord_{\langle n-1,n  \rangle}}
\newcommand{\bo}{\mathbf{1}}
\newcommand{\fA}{{\phi \mstrut _{A}}}
\newcommand{\fB}{{\phi \mstrut _{B}}}
\newcommand{\fE}{{\phi \mstrut _{E}}}
\newcommand{\fHZ}{{\phi \mstrut _{H\ZZ}}}
\newcommand{\fIZ}{{\phi \mstrut _{I\ZZ}}}
\newcommand{\fTZ}{{\phi \mstrut _{T\ZZ}}}
\newcommand{\otIZ}{{\bar\tau \mstrut _{I\ZZ}}}
\newcommand{\otTZ}{{\bar\tau \mstrut _{T\ZZ}}}
\newcommand{\otriv}{\underline{\RR}}
\newcommand{\pmo}{\mu _2}
\newcommand{\sCfx}{\sC_{\textnormal{fermi}}^{\times }}
\newcommand{\sCx}{\sC^{\times }}
\newcommand{\sC}{\mathcal{C}}
\newcommand{\sFg}{\sF_{\textnormal{geometric}}}
\newcommand{\sFt}{\sF_{\textnormal{topological}}}
\newcommand{\sF}{\mathcal{F}}
\newcommand{\sH}{\mathscr{H}}
\newcommand{\sX}{\mathscr{X}}
\newcommand{\tF}{\widetilde{F}}
\newcommand{\tIZ}{{\tau \mstrut _{I\ZZ}}}
\newcommand{\tTZ}{{\tau \mstrut _{T\ZZ}}}
\newcommand{\ta}{\tau \mstrut _{\le n}\alpha }
\newcommand{\tia}{\tilde{\alpha }}
\newcommand{\trF}{\tau \mstrut _{\le\,d-1}F}
\newcommand{\tsF}{\widetilde{\sF}}
\newcommand{\vtriv}[1]{\underline{#1}}
\newcommand{\wIZ}{{w \mstrut _{I\ZZ}}}
\newcommand{\wTZ}{{w \mstrut _{T\ZZ}}}
\begin{document}

\abovedisplayskip18pt plus4.5pt minus9pt
\belowdisplayskip \abovedisplayskip
\abovedisplayshortskip0pt plus4.5pt
\belowdisplayshortskip10.5pt plus4.5pt minus6pt
\baselineskip=15 truept
\marginparwidth=55pt

\makeatletter
\renewcommand{\tocsection}[3]{%
  \indentlabel{\@ifempty{#2}{\hskip1.5em}{\ignorespaces#1 #2.\;\;}}#3}
\renewcommand{\tocsubsection}[3]{%
  \indentlabel{\@ifempty{#2}{\hskip 2.5em}{\hskip 2.5em\ignorespaces#1%
    #2.\;\;}}#3} 
\renewcommand{\tocsubsubsection}[3]{%
  \indentlabel{\@ifempty{#2}{\hskip 5.4em}{\hskip 5.4em\ignorespaces#1%
    #2.\;\;}}#3} 
\makeatother

\setcounter{tocdepth}{3}

%**end of header

% lasteq@180
% lastsec@  8
% lastthm@ 56
% lastfig@ 13

 \title[Short Range Entanglement and Invertible Field Theories]{Short-Range Entanglement and Invertible Field Theories} %% replace \today with short title in final version
 \author[D. S. Freed]{Daniel S.~Freed}
 \thanks{The work of D.S.F. is supported by the National Science Foundation
under grant DMS-1207817.  Some of this work was undertaken at the
Mathematical Sciences Research Institute as part of the program on Algebraic
Topology.}  
 \address{Department of Mathematics \\ University of Texas \\ Austin, TX
78712} 
 \email{dafr@math.utexas.edu}
% \dedicatory{}
 \date{August 10, 2014}
 \begin{abstract} 
 Quantum field theories with an energy gap can be approximated at long-range
by \emph{topological} quantum field theories.  The same should be true for
suitable condensed matter systems.  For those with \emph{short range
entanglement} (SRE) the effective topological theory is \emph{invertible},
and so amenable to study via stable homotopy theory.  This leads to concrete
topological invariants of gapped SRE phases which are finer than existing
invariants.  Computations in examples demonstrate their effectiveness.
 \end{abstract}
\maketitle

%\pagestyle{myheadings}   % omit in final
%\markboth{PRELIMINARY VERSION (\today)}{PRELIMINARY VERSION (\today)}  % omit

%{\small\tableofcontents}

{\small
\def\reftext{References}
\renewcommand{\tocsection}[3]{%
  \begingroup 
   \def\tmp{#3}% 
   \ifx\tmp\reftext
  \indentlabel{\phantom{1}\;\;} #3%
  \else\indentlabel{\ignorespaces#1 #2.\;\;}#3%
  \fi\endgroup}
\tableofcontents}

   \section{Introduction}\label{sec:1}
% lastsubsec@000

The long-range behavior of gapped systems in condensed matter physics is
accessible via topology.  For noninteracting fermionic systems there is a
classification of topological phases using ideas related to
$K$-theory~\cite{K1}.  Over the past few years the interacting case has been
vigorously studied, for both fermionic systems and bosonic systems, with an
emphasis on \emph{short-range entanglement} (SRE); a small sampling of papers
is~\cite{CGW1, CGW2, CGLW, GW, LV, VS, We, PMN, CFV, WPS, HW, WS, Ka1, Ka2,
WGW, KTTW}.  Particular \emph{symmetry protected topological} (SPT) phases
are captured by group cohomology~\cite{CGLW, GW}, but other investigations
(e.g.~\cite{VS}) reveal the existence of additional SRE phases and raise the
question of a complete classification.  In this paper we propose an invariant
of bosonic and fermionic SRE topological phases constructed from effective
field theory.  Computations and examples demonstrate that it effectively
detects known SRE phases.

Our proposal applies to gapped systems which at low energy (long time) can be
approximated by topological field theories.  Such a system must be
sufficiently local that it can be formulated on arbitrary manifolds, and it
must have a continuum limit which is a field theory, at least at low energy.
We do not investigate microscopic behavior at all in this paper, but rather
simply assume the existence of a long-range topological
theory.\footnote{Rather than a single long-range approximation, we envision a
connected space of long-range theories.}  Short range entanglement, or the
absence of topological order, is a microscopic assumption.
Kitaev~\cite{K2,K3,K4} has been studying SRE phases from first principles
microscopically,\footnote{We remark that there is an alternative microscopic
definition of SRE proposed by Chen-Gu-Wen~\cite{CGW1}.} and has suggested the
macroscopic consequence that the long-range topological theory has a unique
vacuum on any background manifold.  We go further and assume that the
long-range topological field theory describing an SRE phase is \emph{fully
extended} and \emph{invertible}, concepts that we explain below.  From
mathematical investigations it has been known for a long time that fully
extended, invertible, \emph{topological} field theories are equivalent to
maps between spectra in the sense of algebraic topology.  This link with
stable homotopy theory is the basis of our proposal.
 
While our proposal in~\S\ref{subsec:5.2} is specific and precise, we neither
formulate nor prove a mathematical theorem which justifies it.  Also, while
we enumerate groups which should house invariants of SRE phases, we do not
argue either that the effective field theory is a complete invariant or that
every possible effective field theory is realized by a microscopic system.
In place of proof the paper marshals evidence in two stages.
Pre-\S\ref{subsec:5.2} is a long conceptual march leading to the proposal.
Post-\S\ref{subsec:5.2} is a series of experimental checks, including the
relationship to group cohomology, boundary terminations, and specific
computations.  We include a long discussion in~\S\ref{subsec:6.3} about
detecting Kitaev's $E_8$~phase using invertible topological field theories.
There are additional possible effective field theories which are
``$4^{\textnormal{th}}$ roots'', perhaps an indication that not all possible
effective theories are realized by microscopic systems.  Another example, the
``3d bosonic $E_8$~phase with half-quantized surface thermal Hall
effect''~\cite{VS}, \cite{BCFV}, is also treated in detail.

It may be useful to broadly characterize our proposal in field-theoretic
language: whereas the group cohomology captures pure gauge theories, the
additional SRE phases contain couplings to gravity or are purely
gravitational.  The SPT phases have no purely gravitational component.  More
fundamentally, the long-range field theory is envisioned as the low-energy
behavior of the coupling to gravity of the original system.  (And, if there
are global symmetries, we gauge them and so couple to gauge theory too.)
 
Bordism as a tool to classify SPT phases appears differently in the recent
papers of Kapustin~\cite{Ka1}, \cite{Ka2}, \cite{KTTW}.  In unpublished work
Kitaev~\cite{K2,K3,K4} develops a classification of SRE phases based on
microscopic considerations.  Their results and approach differ from ours, and
it will be very interesting to reconcile them.  It may be that there is more
microscopic information in the physics which leads to different or additional
input into the effective field theories.  In particular, there are a few
ingredients in our proposal (choice of tangential structure, choice of target
spectrum) which involve leaps of faith and can easily be adjusted if further
microscopic implications are discovered.

We begin in~\S\ref{sec:2} with an exposition of several formal points in
field theory: extended field theory, invertible field theory, relative field
theory, anomalies, global symmetries and gauging, topological field theory,
unitarity.  While many of these concepts are familiar to physicists, the
mathematical language may be unfamiliar and we hope to provide some bridge
here.  We touch on the cobordism hypothesis, which classifies fully extended
topological theories, in~\S\ref{subsec:2.5}; the precise nature of this
classification is illuminated in~\S\ref{subsec:2.7} with a toy example in
preparation for a later discussion of the ``Kitaev $E_8$~phase''
in~\S\ref{subsec:6.3}.  But the cobordism hypothesis is overkill for
\emph{invertible} theories.  In~\S\ref{subsec:2.6} we describe the link
between fully extended, invertible, topological field theories and stable
homotopy theory in general terms.  That involves particular
\emph{Madsen-Tillmann spectra}, which are unstable analogs of Thom's bordism
spectra; we describe them in~\S\ref{sec:4}.  There we also discuss the
crucial theorem of Galatius-Madsen-Tillmann-Weiss \cite{GMTW} which
identifies these spectra as geometric realizations of bordism categories.
That these \emph{unstable} bordism spectra are appropriate to field theories
is natural since field theories are dimension-specific.  The material
in~\S\ref{sec:2} and~\S\ref{sec:4} is general background not particular to
condensed matter systems.
 
In~\S\ref{sec:3} we give much of the general argument about effective field
theories for SRE topological phases.  We begin in~\S\ref{subsec:3.1} with
elementary thoughts indicating why topology may sufficiently describe the low
energy behavior of gapped systems.  Most of our assumptions are stated
explicitly in~\S\ref{subsec:3.3}.  There is still conceptual work to
translate those assumptions into concrete mathematical statements.
Specifically, once we know that SRE gapped phases give rise to invertible
topological theories, there are still parameters to choose: the tangential
structure on manifolds representing space and the target category for the
field theory.  The latter is discussed in~\S\ref{subsec:5.1} separately for
bosonic and fermionic theories; the bosonic case is a bit surprising and we
settle on a kludge for the target spectrum.  For the tangential structures we
assume without much justification that in bosonic theories the space
manifolds are oriented and in fermionic theories they are spin.  Along the
way we encounter a few tricky issues, for example the gauging of antilinear
symmetries (\S\S\ref{subsubsec:2.4.3}, \ref{subsubsec:5.1.3}) and
implementation of unitarity (\S\ref{subsubsec:4.2.5}).
 
We state our proposal in~\S\ref{subsec:5.2}.  We divide theories into bosonic
and fermionic.  Also, symmetries may be anomalous and we propose a
classification of anomaly theories and anomalous gauged theories as well.
Possible effective invertible topological theories for SRE phases with fixed
symmetry form an abelian group; for SPT phases they form a subgroup which we
also delineate.
 
Our first deduction in~\S\ref{subsec:5.3} from the proposal is that the
phases previously identified using group cohomology are included.
In~\S\ref{subsec:6.1} we show that for bosonic theories in $d=1$ space
dimensions group cohomology provides a complete classification, which agrees
with known results~\cite{CLW}.  Already for fermionic theories in~$d=1$ the
situation is more interesting, as described in~\S\ref{subsec:6.2}: we detect
the Majorana chain~\cite{K6} in our classification.  In~\S\ref{subsec:6.3} we
identify bosonic $d=2$ SRE phases.  These were introduced by
Kitaev~\cite{K5}, \cite{K2} and are related to 2-spacetime\footnote{The
number~$d$ above is the dimension of space; spacetime has dimension~$d+1$.}
dimensional chiral conformal field theories whose chiral central charge is an
integer divisible by~8.  These central charges do not show up in the usual
account of the associated 3-spacetime dimensional topological field theory;
there only the reduction mod~8 is used.  Our explanation of how they fit in
here, and so the role of the chiral central charge as a real number not taken
mod~8, is based on the easier examples discussed in~\S\ref{subsec:2.7}.
In~\S\ref{subsec:6.4} we illustrate a constraint imposed by unitarizability.
In~\S\ref{subsec:6.5} we compute that the abelian group of $d=3$ bosonic
time-reversal symmetric effective SRE field theories is isomorphic
to~$(\zt)^{\times 2}$.  One generator is accounted for by group cohomology,
and we claim the other is the 3d bosonic $E_8$~phase with half-quantized
surface thermal Hall effect mentioned earlier.  Finally, in~\S\ref{sec:7} we
give a general discussion of boundary conditions/terminations/excitations and
use it to justify the aforementioned claim.  An appendix includes topological
computations which are needed in the text.
 
The notion of a non-extended \emph{invertible} field theory arose in joint
work with Greg Moore~\cite[\S5.5]{FM1}.  Fully extended invertible
topological theories, and the relation to stable homotopy theory, has been a
longstanding discussion topic with Mike Hopkins and Constantin Teleman, as
have many other general ideas described in~\S\ref{sec:2}.  In particular, we
used these ideas in~\cite{FHT} to construct a topological field theory based
on the Verlinde ring.  The specific application to SRE phases described here
crystallized during the \emph{Symmetry in Topological Phases} workshop in
Princeton, and I thank the organizers for inviting me.  I had long
conversations with Alexei Kitaev after a first draft of this paper was
complete, and those inspired a significant modification
of~\S\ref{subsubsec:5.1.1} and~\S\ref{subsec:6.3}.  I thank him for sharing
his perspectives.  I also thank Zheng-Cheng Gu, Mike Hopkins, Anton Kapustin,
Constantin Teleman, Ashvin Vishwanath, Kevin Walker, Oscal Randal-Williams,
and Xiao-Gang Wen for very helpful conversations and correspondence.

   \section{Field theories from a bordism point of view}\label{sec:2}
% lastsubsec@  8

We begin with a formal viewpoint on the structure of a field theory, which is
the lens through which we analyze the long-range effective topological theory
in~\S\ref{sec:3}.  In the mathematics literature this approach was abstracted
in Segal's axioms for two-dimensional conformal field theory~\cite{S1} and in
Atiyah's axioms for topological field theories~\cite{A1}.  The
lectures~\cite{S2} treat general quantum field theories from this
perspective.  Our focus in this paper is on \emph{invertible} field theories,
which we define in~\S\ref{subsec:2.2}.  Other general topics we quickly
review include extended field theories, relative field theories, anomalies,
gauging symmetries, and unitarity.  We then focus on fully extended
topological theories, for which the powerful \emph{cobordism
hypothesis}~\cite{BD,L,F1} provides a classification result.  Invertible
topological theories can be analyzed using homotopy theory, and in this
section we explain why that is true but defer a more precise description
to~\S\ref{sec:4}.  We conclude with a few toy examples which illuminate
subtleties we will encounter in the condensed matter systems
of~\S\ref{sec:6}.  The subtleties discussed there may have broader interest.
There are many expositions of this material, in addition to the ones
referenced earlier in this paragraph, and here we offer another.  A somewhat
different point of view on topological field theories may be found
in~\cite{MW}.  The reader may wish to use this section for reference and skip
on first reading to later parts of the paper.

We remark that this formal viewpoint does not distinguish ``classical'' from
``quantum'', and indeed we will give examples of both types.  Another remark
is that the field theories which arise in condensed matter physics are
usually defined on spacetimes which are products of space and time, whereas
the discussion in this section models theories defined on more general
spacetimes.  We discuss the necessary modification in~\S\ref{subsec:4.2} and
account for it in the proposals of~\S\ref{subsec:5.2}.

  \subsection{Field theories}\label{subsec:2.1}

  \begin{figure}[ht]
  \centering
  \includegraphics[scale=1]{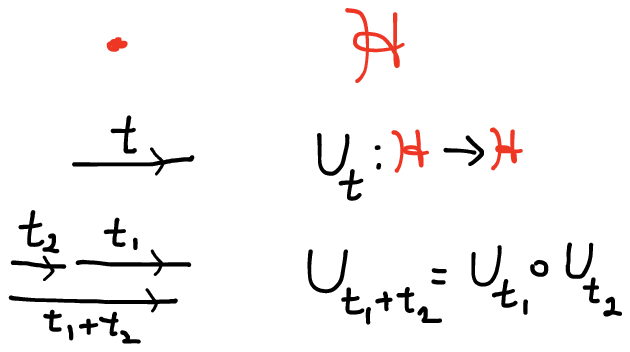}
  \caption{Quantum mechanical evolution}\label{fig:1}
  \end{figure}

Let $n$~be the \emph{spacetime} dimension of a field theory~$F$, which we
simply call the dimension of~$F$.  We write $n=d+1$ where $d$~is the
\emph{space} dimension.\footnote{In the condensed matter literature $d$~is
often called the dimension of the theory, whereas in quantum field theory and
string theory literature it is~$n$ which is the dimension.  We use the terms
`spacetime dimension' and `space dimension' to avoid confusion.}  The case
$n=1$ ($d=0$) is mechanics; there is only time.  A quantum mechanical system
assigns a complex vector space~$\sH$ (the `quantum Hilbert space') to a point
and the time evolution $U_t\:\sH\to\sH$ to a closed interval of length~$t$.
The group law $U_{t_1+t_2}=U_{t_1}\circ U_{t_2}$ is encoded by gluing
intervals, as illustrated in Figure~\ref{fig:1}.  An $n$-dimensional
Euclidean field theory assigns a \emph{partition function}~$F(X)\in \CC$ to a
compact $n$-dimensional manifold~$X$ with no boundary.  There is a complex
vector space~$F(Y)$ for each compact $(n-1)$-manifold, thought of as a
spatial slice, and now this `quantum Hilbert space' may depend on~$Y$.
Roughly speaking, the vector space~$F(S^{n-1})$ attached to a small
sphere~$S^{n-1}$ is the space of local operators, and to a closed
manifold~$X$ with $k$~small open balls removed we attach the
\emph{correlation functions}
  \begin{equation}\label{eq:1}
     F(X\setminus \bigcup\limits_{i=1}^k B^n)\:F(S^{n-1})\otimes
     \cdots\otimes F(S^{n-1})\longrightarrow \CC,
  \end{equation}
as illustrated in Figure~\ref{fig:2}.  Here all boundary components are
``incoming'', whereas each interval in Figure~\ref{fig:1} has an incoming
boundary component and an outgoing boundary component, each a single point.
There are analogous quantum evolution operators in any dimension for
manifolds with both incoming and outgoing components, and the group law of
quantum mechanics has a generalization.

  \begin{figure}[ht]
  \centering
  \includegraphics[scale=.8]{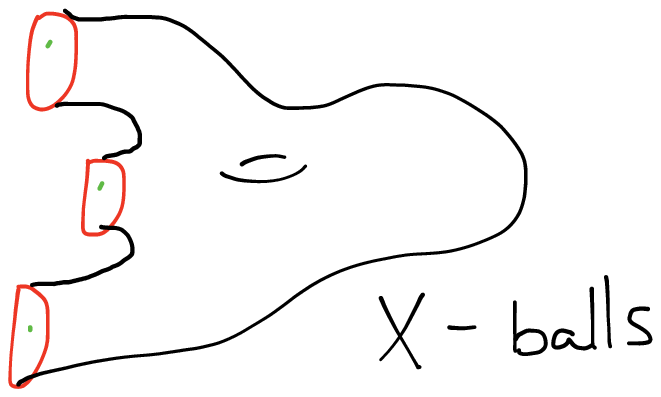}
  \caption{Correlation functions}\label{fig:2}
  \end{figure}

The mathematical expression of this formal structure is the assertion that 
  \begin{equation}\label{eq:2}
     F\:\bort\longrightarrow \Vect 
  \end{equation}
is a homomorphism, or \emph{functor}, between \emph{symmetric monoidal
categories}.  The \emph{bordism} category $\bort$ consists of
closed\footnote{A manifold is \emph{closed} if it is compact without
boundary.}  $(n-1)$-manifolds and bordisms between them.  A bordism
$X\:Y_0\to Y_1$ is a compact $n$-manifold with boundary, the boundary has a
continuous partition $p\:\partial X\to\{0,1\}$ which divides it into incoming
and outgoing components, and there are diffeomorphisms
$Y_0\xrightarrow{\;\cong \;}(\partial X)_0$ and $Y_1\xrightarrow{\;\cong
\;}(\partial X)_1$; see Figure~\ref{fig:3} in which one should view time as
flowing from left to right.\footnote{In that figure the diffeomorphisms map
open collar neighborhoods.  This guarantees that the composition, or gluing,
operation on bordisms yields smooth manifolds.}  The target category has
complex vector spaces as objects and linear maps as morphisms.\footnote{One
should use \emph{topological} vector spaces and \emph{continuous} linear
maps.  The topology on the vector spaces is not relevant for topological
field theories since the vector spaces are finite dimensional so have a
unique linear topology.}
  \begin{figure}[ht]
  \centering
  \includegraphics[scale=.5]{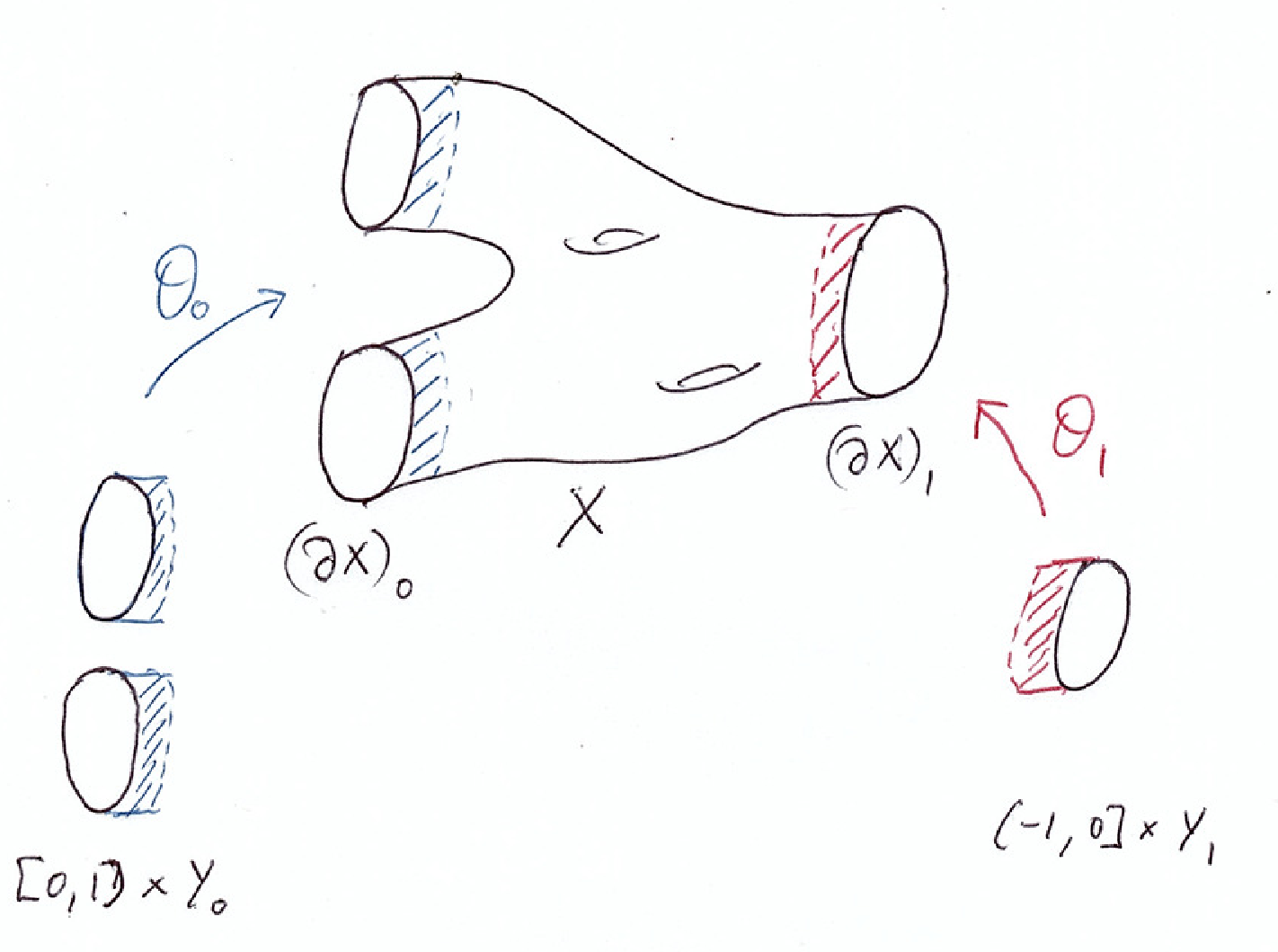}
  \caption{A bordism $X\:Y_0\to Y_1$}\label{fig:3}
  \end{figure}

\noindent 
 There are two kinds of composition.  The internal composition glues
morphisms (Figure~\ref{fig:4}) and the external composition is disjoint
union.  Similarly $\Vect$ ~has two composition laws: the internal composition
is the usual composition of linear maps and the external composition is
tensor product.  The homomorphism~\eqref{eq:2} is required to preserve both
composition laws.  This formulation is rather compact and one must unpack it
to see the usual structures in field theory.

  \begin{figure}[ht]
  \centering
  \includegraphics[scale=.5]{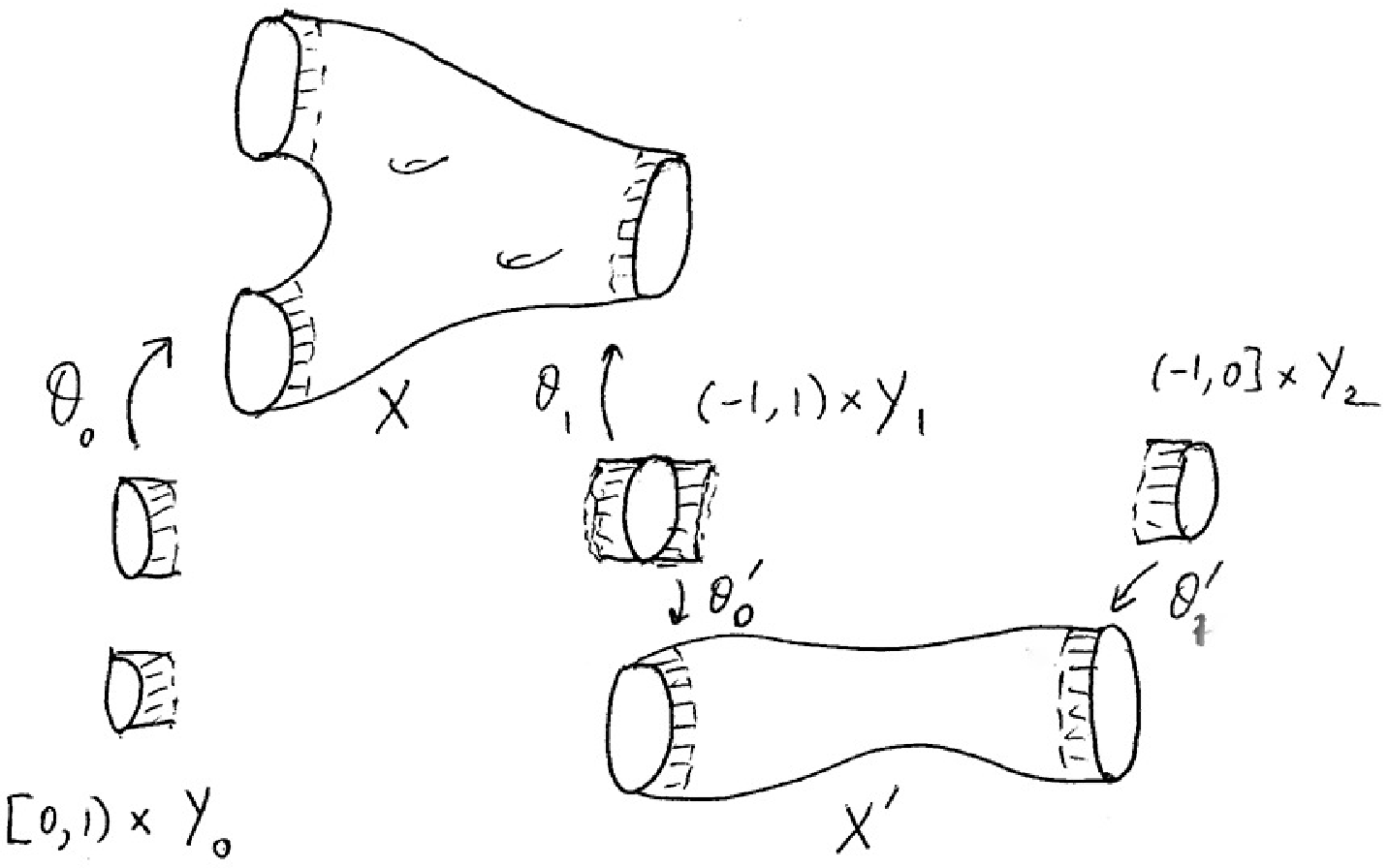}
  \caption{Composition~$X'\circ X$ of morphisms $X\:Y_0\to Y_1$ and
  $X'\:Y_1\to Y_2$}\label{fig:4}
  \end{figure}

Typically one does not have bare $n$-manifolds, but rather each
$n$-manifold~$X$ is endowed with a space\footnote{Some fields, such as gauge
fields, have internal symmetries, and they form a \emph{stack} rather than a
space.}~$\sF(X)$ of fields.  For example, in quantum mechanics
(Figure~\ref{fig:1}) the 1-manifolds have a Riemannian metric: the total
length represents time.  Higher dimensional field theories are often
formulated on Riemannian manifolds, though conformal field theories only
require a conformal structure.  Other possible fields include orientations,
spin structures, scalar fields, spinor fields, etc.  There is a bordism
category~$\bort(\sF)$ of manifolds equipped with a specified collection of
fields, and a functor
  \begin{equation}\label{eq:3}
     F\:\bort(\sF)\longrightarrow \Vect 
  \end{equation}
represents a field theory with $\sF$~as the set of (background) fields. 

  \begin{example}[]\label{thm:8}
 To illustrate the notation, consider $n=3$~spacetime dimensional
Chern-Simons theory with gauge group~$\TT=U(1)$.  There is a classical and a
quantum theory.  The fields~$\sF(X)$ in the classical theory on a
3-manifold~$X$ consist of an orientation~$o$ and a principal $\TT$-bundle
with connection~$A $.  A closed 3-manifold~$X$ appears in the bordism
category~$\Bord_{\langle 2,3 \rangle}$ as a morphism $X\:\emptyset
^2\to\emptyset ^2$ from the empty 2-manifold to itself.  We have $F(\emptyset
^2)=\CC$ and so $F(X)\:\CC\to\CC$ is multiplication by a complex number, and
if we make the fields explicit we denote it as $F(X;o,A)$.  It is given by
the formula
  \begin{equation}\label{eq:10}
     F(X;o,A) = \exp\left(-\frac{i}{2\pi }\int_{X,o}A\wedge dA\right) 
  \end{equation}
where implicitly, since $A$~is a 1-form on the total space of a $\TT$-bundle,
we have used a global section to pull it down to the base~$X$.\footnote{If
a section does not exist, there is a more complicated definition.}  The
orientation is used to define integration of differential forms.  An oriented
2-manifold~$Y$ with $\TT$-connection has an attached Chern-Simons line
$F(Y;o,A)$, and a bordism with fields has a relative Chern-Simons invariant
mapping between the Chern-Simons lines of the boundaries.  This theory is
\emph{invertible} in the sense described in~\S\ref{subsec:2.2}.
 
In the quantum theory~\cite{W1} one integrates over the gauge field~$A$, and
to get a well-defined 3-dimensional theory one needs in addition to the
orientation~$o$ a field~$f$ which is a certain sort of ``framing'' called a
\emph{$p_1$-structure}.  The quantum invariant $F(X;o,f)\in \CC$ of a closed
3-manifold is an arbitrary complex number---for example, it vanishes for some
manifolds---and the quantum vector spaces~$F(Y;o,f)$ do not necessarily have
dimension one.  So the quantum theory is not generally invertible.  
  \end{example}

  \begin{remark}[]\label{thm:1}
 This description of a field theory has an important deficiency: it does not
encode \emph{smooth} dependence on parameters.  For example, the partition
function on a closed $n$-manifold~$X$ must depend smoothly on the background
fields in~$\sF(X)$.  (Example: the smooth dependence of the partition
function of 2-dimensional Yang-Mills theory on the area of a surface.)
Formally, the definitions are enlarged to include fiber bundles $\sX\to S$ of
$n$-manifolds with fields, and these must map to smoothly varying linear maps
of smooth vector bundles over~$S$.  For topological theories, which are our
main concern, instead of families one usually postulates instead that the
morphism sets in the categories~$\bort(\sF)$ and~$\Vect$ have a topology
and all maps are \emph{continuous}.  There are two obvious topologies on the
set of linear maps~$\Hom(V_0,V_1)$ between two finite dimensional vector
spaces, equivalently on the set of $n\times m$ matrices.  We can use the
usual topology induced from the usual topology on the real numbers, or we can
use the discrete topology.  Both are used in the classification scheme
of~\S\ref{subsec:5.2}.  
  \end{remark}

  \begin{remark}[]\label{thm:2}
 One should also allow smooth families of field theories~$F$.  For
topological field theories it is more natural to study \emph{continuous}
families, i.e., to form a \emph{topological space} of topological field
theories.  This is crucial for the classification of gapped topological
phases: if two gapped systems are connected by a continuous path, we expect
their effective topological field theories can also be joined by a continuous
path.
  \end{remark}

Finally, in this paper we will always consider \emph{fully extended} field
theories, usually topological.  An $n$-dimensional fully extended theory
assigns invariants to manifolds of all dimensions~$\le n$.  These compact
manifolds with fields, which are now allowed corners, are organized into an
algebraic structure called a \emph{symmetric monoidal $(\infty
,n)$-category}, denoted~$\Bord_n(\sF)$.  The target for an extended field
theory is a symmetric monoidal $(\infty ,n)$-category~$\sC$, which typically
has its ``$(n-1)^{\textnormal{st}}$ loop space'' isomorphic to~$\Vect$.  If
so, then an extended field theory
  \begin{equation}\label{eq:5}
     F\:\Bord_n(\sF)\longrightarrow \sC
  \end{equation}
restricts on $(n-1)$- and $n$-manifolds to a usual field theory~\eqref{eq:3}.
As already indicated, the invariants attached to manifolds of dimension~$\le
n-2$ tend to be categorical in nature.  A theory which extends in this way is
fully local, and it is natural to make this strong locality hypothesis for
the effective topological theory which comes from a gapped physical theory.
See~\cite{L} for a modern description of fully extended topological field
theories.

  \subsection{Invertible field theories}\label{subsec:2.2}

`Invertibility' refers to the tensor product operation on vector spaces and
linear maps.  First, $\CC$~is a ``unit element'' for tensor product in the
sense that for any vector space~$V$ we have an isomorphism $\CC\otimes
V\xrightarrow{\;\cong \;}V$ which is naturally defined.  Thus we call~$\CC$ a
\emph{tensor unit}.  A complex vector space~$V$ is \emph{invertible} if there
exists a vector space~$V'$ and an isomorphism $V\otimes
V'\xrightarrow{\;\cong \;}\CC$.  Since $\dim (V\otimes V')=(\dim V)(\dim
V')$, it follows immediately that if $V$~is invertible, then $\dim V=1$.
Conversely, if $V$~is 1-dimensional then $V\otimes V^*$ is isomorphic
to~$\CC$.  Thus the invertible vector spaces are precisely the \emph{lines}.

Invertibility of a linear map $T\:V_0\to V_1$ under tensor product is
equivalent to the usual definition of invertibility, namely that there exist
$S\:V_1\to V_0$ such that the compositions $S\circ T$ and~$T\circ S$ are
identity maps.  If $L$~is a line, then a linear map $\lambda \:L\to L$ is
multiplication by a complex number~$\lambda \in \CC$ and it is invertible
if and only if~$\lambda \not= 0$.  We denote the nonzero complex numbers
as~$\Cx$. 
 
Invertible complex vector spaces and invertible linear maps between them form
a subcategory $\Line\subset \Vect$ which by definition has the property that
all morphisms are invertible.  Such a category is called a \emph{groupoid}.
 
An \emph{invertible field theory} $F\:\BF\to\Vect$ is one for which all
vector spaces~$F(Y)$ and linear maps $F(X)$ are
invertible.\footnote{\label{foot}For a fully extended invertible field theory
the value of ~$F$ on \emph{any} manifold of dimension~$\le n$ is invertible
under the symmetric monoidal product of the target.  A theorem of the author
and Constantin Teleman asserts that for oriented theories if the
number~$F(S^n)$ is nonzero and the vector spaces~$F(S^p\times S^{n-1-p})$ are
one-dimensional, then $F$~is invertible.}  We can express that by saying that
$F$~factors through a functor $\BF\longrightarrow \Line$.  Thus all quantum
Hilbert spaces are one-dimensional and all propagations are invertible.  The
tensor product of invertible theories is invertible, so (isomorphism classes
of or deformation classes of) invertible theories form an abelian group.

  \begin{example}[]\label{thm:3}
 Here is a simple example with~$n=1$.  Fix a smooth manifold~$M$ of any
dimension and a smooth complex line bundle $L\to M$ with connection.  We
define a 1-dimensional field theory whose set of fields~$\sF(X)$ on a
1-manifold~$X$ consists of a pair~$(o ,\phi )$ of an orientation~$o $ and a
smooth map $\phi \:X\to M$.  Then to $Y=\pt_+$ a point with the positive
orientation and $\phi (\pt)=m\in M$ we set $F(Y)=L_m$ to be the fiber of the
line bundle $L\to M$ at~$m$.  To $X=[0,1]$ with the usual orientation and a
map $\phi \:[0,1]\to M$ we assign the parallel transport $F(X)\:L_{\phi
(0)}\to L_{\phi (1)}$ along the path~$\phi $.  The reader can easily work out
the values of~$F$ on other manifolds.  We can make a family of such field
theories by varying the line bundle $L\to M$ and its connection.  The path
components of this family of field theories are parametrized by the
topological equivalence classes of line bundles $L\to M$ (without
connection.)  In~\S\ref{sec:4} and~\S\ref{sec:5} we will learn that the set
of path components can be computed by stable homotopy theory.\footnote{The
computation: $[\Sigma ^1MTSO_1\wedge M_+,\Sigma ^2H\ZZ]\cong H^2(M;\ZZ)$.  In
all computations $[X,Y]$ denotes \emph{pointed} homotopy classes of maps
between the pointed spaces~$X,Y$.}
  \end{example}

  \begin{example}[]\label{thm:6}
 Continuing with~$n=1$ we now take the line bundle to be one of the fields,
rather than being pulled back from an external manifold.  Thus let $\sF(X)$
consist of an orientation~$o$ and a complex line bundle $L\to X$ \emph{with
connection}.  The definition of the theory is similar to that in
Example~\ref{thm:3}.  (The two theories are related: choose the universal
line bundle $L\to \CP^{\infty}$ in Example~\ref{thm:3}.)  We continue this
example in~\S\ref{subsec:2.7}.
  \end{example}

  \begin{example}[]\label{thm:4}
 Let~$n=2$ and suppose $\sF$~includes just an orientation.  The line~$F(Y)$
attached to any closed 1-manifold~$Y$ is the trivial line~$\CC$ and the
number attached to any closed 2-manifold~$X$ is 
  \begin{equation}\label{eq:18}
     \lambda ^{\Euler(X)}, 
  \end{equation}
the exponential of the Euler number with base some~$\lambda \in \Cx$.  This
is a connected\footnote{As agrees with the computation $[\Sigma
^2MTSO_2,\Sigma ^3H\ZZ]\cong H^3(BSO_2;\ZZ)=0$.} family of theories with
parameter space\footnote{The computation of the parameter space: $[\Sigma
^2MTSO_2,\Sigma ^2H\CZ]\cong H^2(BSO_2;\CZ)\cong \CZ$.  Aficionados may
relish the following.  If we compose with $\Sigma ^2H\CZ\to \Sigma ^2I\CZ$,
then the theories with parameter~$\lambda \in \Cx$ and~$-\lambda \in \Cx$
become isomorphic.  Here $I\CZ$~is the Brown-Comenetz dual of the sphere
spectrum~(\S\ref{subsubsec:5.1.1}).  Note the numerical invariants of
2-manifolds only depend on~$\lambda ^2$.}~$\Cx$.  If we drop the
orientation, then there is another theory not connected to this
family:\footnote{\label{foot:12}We compute: $[\Sigma ^2MTO_2,\Sigma
^3H\ZZ]\cong H^3(BO_2;\widetilde{\ZZ})\cong \zt$.  Here $\widetilde{\ZZ}$~is
the nontrivial local system on~$BO_2$.}  the invariant of a closed
surface~$X$ is $(-1)^{w_1^2(X)}$, where $w_1^2(X)$~is the characteristic
number associated to the square of the first Stiefel-Whitney class of the
tangent bundle.  (It is nontrivial for the real projective plane, for
example.)
  \end{example}

Classical lagrangian field theories are invertible field theories: the
invariant of a closed $n$-manifold is the \emph{exponentiated} action
$e^{iS(X)}$.  Here `$X$'~includes a choice of background fields. 

  \begin{example}[]\label{thm:5}
 Finite gauge theories provide a typical example of a classical topological
theory in any dimension~$n$~\cite{DW,FQ}.  Let $G$~be a finite group and fix
a cocycle which represents a cohomology class~$\lambda \in H^n(BG;\RZ)$.  The
fields~$\sF(X)$ are an orientation and a principal $G$-bundle $P\to X$.
(As principal $G$-bundles have automorphisms---deck transformations---the
fields in this case form a groupoid, or stack, rather than a space;
see~\cite{FH} for one mathematical treatment.)  Let $\lambda (P)\in \RZ$ be the
pairing of the characteristic class~$\lambda $ of~$P$ in~$H^n(X;\RZ)$ with
the fundamental class of the orientation.  The invariant of the invertible
field theory is~$e^{2\pi i\lambda (P)}$.  Note that no orientation is
required if the cocycle vanishes.  This \emph{classical} Dijkgraaf-Witten
theory is invertible; the \emph{quantum} theory, obtained by a finite sum
over bundles~$P$, is typically not invertible.  The case~$n=3$ is a special
case of Chern-Simons theory (briefly described in Example~\ref{thm:8} with
gauge group~$\TT$).
  \end{example}

  \subsection{Relative field theories and anomalies}\label{subsec:2.3}

A field theory~$F$ as described in~\ref{subsec:2.1} might be termed
\emph{absolute}.  Suppose $\alpha $~is an (absolute) $(n+1)$-dimensional
field theory.  Then we can have an $n$-dimensional theory~$F$ which is
defined \emph{relative} to~$\alpha $.  In fact, it is more precise and
important to realize that we need only the truncation~$\ta$ of~$\alpha $
which remembers the values on manifolds of dimension~$\le n$.  Thus $\alpha
$~need only be defined on such manifolds in the first place.  To a closed
$n$-manifold~$X$ (with fields) the theory~$\alpha $ assigns a vector
space~$\alpha (X)$.  A relative theory~$F$ then assigns either a linear map
  \begin{equation}\label{eq:6}
     F(X)\:\CC\longrightarrow \alpha (X) 
  \end{equation}
or a linear map 
  \begin{equation}\label{eq:7}
     F(X)\:\alpha (X)\longrightarrow \CC. 
  \end{equation}
In the first case we evaluate the map on~$1\in \CC$ to obtain a vector
in~$\alpha (X)$; in the second case $F(X)$~is a covector, an element of the
dual vector space.  There are similar statements for lower dimensional
manifolds.  In the first case we write 
  \begin{equation}\label{eq:8}
     F\:\bo\longrightarrow \ta 
  \end{equation}
and in the second 
  \begin{equation}\label{eq:9}
     F\:\ta\longrightarrow \bo,
  \end{equation}
where $\bo$~is the trivial theory.   
 
If $\alpha $~is invertible, then $F$~is termed an \emph{anomalous} field
theory with \emph{anomaly} theory~$\alpha $. 
 
We refer to~\cite{FT} for more explanations and for nontrivial examples.
Here is an easy one, which illustrates the relationship with boundary
conditions\footnote{Rather than a boundary `condition', a relative theory can
be viewed as a boundary `theory'.}.  Relative theories can be viewed as
boundary conditions in any dimension, an idea we take up in~\S\ref{sec:7}.

  \begin{example}[]\label{thm:7}
 Let $n=0$ and suppose $\alpha $~is a quantum mechanics theory
(Figure~\ref{fig:1}) with Hilbert space~$\sH$ attached to a point.  Then a
relative theory~$F\:\bo\to \tau \mstrut _{\le0}\alpha $ is determined by
evaluating on 0-manifolds, and it is enough to evaluate on a point with each
orientation.  Since $\alpha (\pt_+)=\sH$ and $\alpha (\pt_-)=\sH^*$ we obtain
a vector~~$\Omega _F\in \sH$ and a dual vector~$\theta _F\in \sH^*$.  We can
use the relative theory~$F$ as a boundary condition for~$\alpha $, as
illustrated in Figure~\ref{fig:5}.
  \end{example}

  \begin{figure}[ht]
  \centering
  \includegraphics[scale=1]{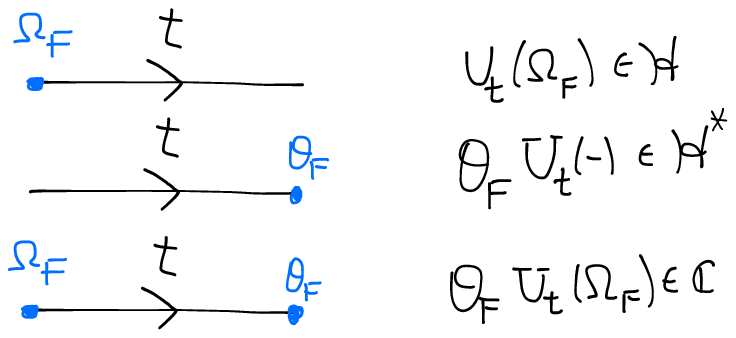}
  \caption{Quantum mechanics with boundary}\label{fig:5}
  \end{figure}

  \subsection{Global symmetries and equivariant extensions}\label{subsec:2.4}

\subsubsection{General discussion}\label{subsubsec:2.4.1}
 
Let $G$~be a Lie group.  For simplicity we discuss a non-extended field
theory~\eqref{eq:3}.  A global symmetry group may have an action on fields,
and it also may have an action on the category~$\Vect$.  In the simplest
cases those actions are trivial, and then $G$~is a \emph{global symmetry} if
the functor~\eqref{eq:3} lifts to a functor
  \begin{equation}\label{eq:11}
     F\:\bort(\sF)\longrightarrow \Rep_G 
  \end{equation}
into the category of representations of~$G$.  More plainly, the group~$G$
acts on the vector space~$F(Y)$ attached to each $(n-1)$-manifold and the
linear maps assigned to bordisms are $G$-invariant. 
 
In this situation we might ``gauge the symmetry'' or, in less ambiguous
terms, construct a \emph{$G$-equivariant extension} of the theory.  This
means that there is a new field which is a $G$-connection; if $G$~is finite,
then a $G$-connection is simply the underlying principal $G$-bundle.  Whence
the terminology: this is the gauge field.  There is a new set of fields
$\tsF$ which maps to the single field\footnote{$\BNG$ is most naturally a
simplicial sheaf on the category of smooth manifolds~\cite{FH}.}~$\BNG$ of a
$G$-connection, and the fiber over the trivial $G$-connection is the old
set~$\sF$ of fields.  The $G$-equivariant extension is a functor
  \begin{equation}\label{eq:12}
     \tF\:\bort(\tsF)\longrightarrow \Vect 
  \end{equation}
whose restriction to the trivial $G$-connection is the original
theory~\eqref{eq:11}.  This makes sense since the trivial $G$-connection has
the group~$G$ as its automorphism group.   

  \begin{example}[]\label{thm:9}
 A typical example in field theory is a $\sigma $-model into a Riemannian
manifold~$M$ with a group~$G$ of isometries.  The classical model makes sense
in any spacetime dimension~$n$.  The fields~$\sF(X)$ on an $n$-manifold
consist of a metric~$g$ and a map $\phi \:X\to M$.  In the $G$-equivariant
extension an element of~$\tsF(X)$ is a triple~$(g,\Theta ,\phi)$ where
$\Theta $~is a connection on a principal $G$-bundle $P\to X$ and now $\phi
$~is a $G$-equivariant map $P\to M$.  The map to~$\BNG$ sends~$(g,\Theta
,\phi )$ to~$\Theta $.  If $\Theta $~is the trivial $G$-connection then $\phi
$~is equivalent to a map $X\to M$, and the deck transformations of the
trivial bundle $X\times G\to X$ become the original $G$-action on the
fields. 
 
This example does \emph{not} fit our simplified description, since the global
symmetry group~$G$ \emph{does} act on the fields~$\sF$, but there is a
modification which covers this situation.  In the quantum $\sigma $-model we
integrate over the field~$\phi $, and as there is no $G$-action on the
remaining fields our description applies as is.
  \end{example}

\subsubsection{Anomalies}\label{subsubsec:2.4.2}

There may be obstructions to constructing this $G$-equivariant extension:
see~\cite{KT} for a recent discussion and examples.  One well-known example
is the gauged WZW model~\cite{W5}.  In good cases there is only a single
obstruction which can be interpreted as an \emph{anomaly}.  In these cases
the extension~\eqref{eq:12} does not exist but rather there is an invertible
$(n+1)$-spacetime dimensional theory~$\alpha $ and an extension~$\tF$ which
is a theory relative to~$\alpha $ in the sense of~\S\ref{subsec:2.3}.

  \begin{example}[]\label{thm:10}
 A standard example in~$n=4$ is quantum chromodynamics.  This theory has a
global $SU_N\times SU_N$ symmetry (for $N$~the number of flavors) which is
anomalous.
  \end{example}

We account for such anomalies in our proposals (\S\ref{subsec:5.2}). 
 
\subsubsection{Gauging antilinear symmetries}\label{subsubsec:2.4.3}

In quantum mechanics, due to Wigner's theorem, the global symmetry group~$G$
is equipped with a homomorphism\footnote{Notation: $\mu _k=\{\lambda \in
\CC:\lambda ^k=1\}$ is the group of $k^{\text{th}}$~roots of unity.}
  \begin{equation}\label{eq:98}
     \phi \:G\to \pmo=\{\pm1\} 
  \end{equation}
which tracks whether a given symmetry acts linearly or antilinearly.  Because
states are lines in a Hilbert space, rather than vectors, the group~$G$ acts
projectively and there is an extension by the group~$\TT$ of unit norm
scalars.  For theories tied to spacetime, as opposed to abstract theories,
one can also track whether or not symmetries reverse the orientation of time
by another homomorphism
  \begin{equation}\label{eq:99}
     t\:G\to\pmo .
  \end{equation}
In many situations~$t=\phi $, but that needn't be so in general.
See~\cite[\S\S1,3]{FM2} for a general discussion.

We handle the extension by simply replacing~$G$ with the extended symmetry
group.  In doing so we must take care that the group of scalars acts by
scalar multiplication on all vector spaces in the theory.  We discuss
time-reversal in~\S\ref{subsubsec:4.2.3}.  Here we explain how to gauge
antilinear symmetries.
 
Consider a 1-spacetime dimensional theory~$F$, so a quantum mechanical model
as in Figure~\ref{fig:1}.  Let $F(\pt_+)=\sH$ and suppose $G$~is a group of
global symmetries as above.  For simplicity, assume $G$~is a discrete group.
As explained in~\S\ref{subsubsec:2.4.1} a $G$-equivariant extension~$\tF$ is
a theory on manifolds equipped with a principal $G$-bundle.  Now $\sH$~is the
value of~$\tF$ on~$\pt_+$ equipped with the trivial (which means
\emph{trivialized}) $G$-bundle.  Consider the 1-manifold~$X=[0,1]$ with a
(necessarily trivial, but not trivialized) $G$-bundle, and suppose there are
trivializations over the endpoints~$\{0,1\}$.  Let $g\in G$~be the parallel
transport.  Then $\tF(X,g)\:\sH\to\sH$ is the action of the global
symmetry~$g$.  However, if $\phi (g)=-1$---i.e., if $g$~acts
antilinearly---then this does not fit into~\eqref{eq:12} since the
category~$\Vect$ has only \emph{linear} maps.  The way out is that $\phi
$~defines a 2-dimensional invertible anomaly theory~$\alpha $, and $\tF$~is
an anomalous theory with anomaly~$\alpha $, as in~\eqref{eq:9}.  The anomaly
theory assigns $\alpha(X,g)=\Cb$, the complex conjugate to the trivial
line of complex numbers, and then the relative theory gives a \emph{linear}
map 
  \begin{equation}\label{eq:100}
     \tF(X,g)\:\Cb\otimes \sH\longrightarrow \sH.
  \end{equation}
(A linear map $\Cb\otimes V\to W$ is equivalent to an antilinear map $V\to
W$.  See~\cite[(2.7)]{FT} for the analog of~\eqref{eq:100} in an arbitrary
relative theory.)
 
This discussion extends to higher dimensions.  For extended field theories
with values in a higher category~$\sC$ we would need to explain how complex
conjugation acts on~$\sC$.  In this paper we focus on invertible field
theories, and we will implement this ``antilinearity anomaly'' using twisted
cohomology; see~\S\ref{subsubsec:5.1.3}.

  \subsection{Unitarity}\label{subsec:2.8}

The formal setup of topological field theory described here is based on the
\emph{Euclidean} version of quantum field theory.  For Euclidean QFTs
unitarity is expressed by both a reality condition and a
\emph{reflection-positivity} condition.  (A standard reference is~\cite{GJ};
see~\cite[p.~690]{Detal} for a heuristic explanation.)  The unitarity
condition for a fully extended topological theory~\eqref{eq:5} implements
only the reality condition.  Namely, assuming the fields~$\sF$ include an
orientation there is an involution of~$\Bord_n(\sF)$ which reverses the
orientation.  Also, assuming that the target~$\sC$ is based on complex
numbers, then it has an involution of complex conjugation.  Unitarity is the
statement that $F\:\Bord_n(\sF)\to\sC$ is equivariant for these involutions.
A formal justification from the path integral stems from a basic fact:
orientation reversal conjugates the Euclidean action.
 
In this paper we indicate how to implement unitarity for invertible
topological field theories, which are maps of spectra.  The involutions are
quite explicit, and unitarity amounts to a twisted extension of the field
theory to unoriented manifolds.  One subtlety, which also occurs in
non-invertible theories, is that for spin theories there are two notions of
unitarity---the two different Euclidean pin groups lead to two
orientation-reversing involutions on spin manifolds.\footnote{I thank Kevin
Walker for emphasizing this point.  I do not know a physical argument
which distinguishes one of them as the preferred choice, though in specific
examples often one is preferred over the other.}

Because of the gaps in our understanding of unitarity, related to positivity
and the choice of pin group, we do not implement unitarity fully in our
proposal in~\S\ref{subsec:5.2}.  We discuss unitarity further
in~\S\S\ref{subsubsec:4.2.4}, \ref{subsubsec:4.2.5}.

  \subsection{Topological field theories and the cobordism hypothesis}\label{subsec:2.5}

One can debate which theories deserve the moniker `topological'.\footnote{For
example, with our definition classical Chern-Simons theory
(Example~\ref{thm:8}) is not topological if the gauge group is not discrete.}
We will say that a fully extended theory~\eqref{eq:5} is topological if the
fields~$\sF$ are topological, and the fields are topological if they satisfy
homotopy invariance: if $f_t\:X'\to X$ is a homotopy of local diffeomorphisms
of $n$-manifolds, then the pullbacks by~$f_0$ and~$f_1$ on fields are
equal.\footnote{Some fields, such as gauge fields, have internal symmetries
and then the pullbacks are not strictly `equal' but rather are `equivalent'
or `homotopic'.}  Thus orientations, spin structures, and $G$-bundles for
discrete groups~$G$ are all examples of topological fields.  Metrics,
conformal structures, and connections for positive dimensional Lie groups are
all examples of non-topological fields.  On the other hand, \emph{flat}
$G$-connections are topological fields for any Lie group~$G$.
 
Fully extended topological field theories are a topic of current interest in
topology and other parts of mathematics.  The \emph{cobordism hypothesis},
conjectured by Baez-Dolan~\cite{BD} and proved by Hopkins-Lurie in
dimensions~$\le 2$ and in general by Lurie~\cite{L}, is a powerful result
which determines the space of fully extended topological theories of a fixed
type.  The `type' refers to both the fields~$\sF$ and the target~$\sC$.  Thus
one speaks of ``oriented'' theories or ``framed'' theories, which tells about
the topological fields in the theory.  The theorem very roughly states that a
theory~$F$ is determined by its value~$F(\pt)$ on the 0-manifold consisting
of a single point.  One can intuitively think of this as the value on an
$n$-dimensional ball, and the idea is that any $n$-manifold is glued together
from balls, so that if the theory is fully local then its values can be
reconstructed from those on a point.  Furthermore, the value on a point is
constrained to satisfy strong finiteness conditions.  We refer the reader
to~\cite{L} and the expository account~\cite{F1}.

  \begin{remark}[]\label{thm:11}
 If $n=1$ and we let the field~$\sF$ be an orientation, then a theory 
  \begin{equation}\label{eq:13}
     F\:\Bord_{\langle 0,1 \rangle}(\sF)\longrightarrow \Vect 
  \end{equation}
is determined by the vector space~$F(\pt_+)$.  The finiteness condition is
that $F(\pt_+)$~is finite dimensional.  Of course, in usual quantum mechanics
the quantum Hilbert space is typically infinite dimensional; not so in this
topological version.
  \end{remark}

The cobordism hypothesis tells not just about individual theories, but rather
about the collection of theories with fixed~$\sF$ and~$\sC$.  The first
(easy) theorem is that this collection is an ordinary \emph{space} rather
than a more abstract category.  One should think of this space as
parametrizing families of theories, as in Example~\ref{thm:3} and
Example~\ref{thm:4}.  Yet in the homotopical setting for field theories, it
is only the \emph{homotopy type} of the space of theories which is
well-defined; see~\S\ref{subsec:2.7} for more discussion.  The theorem in
particular computes the set of path components of this space.  Two theories
lie in the same path component if and only if they can be continuously
connected.  This matches well the notion of a \emph{topological phase}, and
indeed the cobordism hypothesis is a powerful tool for distinguishing
topological phases of gapped theories.  In this paper we focus on
\emph{invertible} theories, which describe SRE phases, and the cobordism
hypothesis reduces to a much easier statement, as we explain
in~\S\ref{subsec:2.6}.  We emphasize that the cobordism hypothesis---for
invertible and non-invertible theories---determines the complete homotopy
type of the space of theories, not just the set of path components.

  \subsection{Invertible topological theories and maps of spectra}\label{subsec:2.6}

We begin with an analogy.  Let $\CC[x]$~be the ring of polynomials in a
variable~$x$ with complex coefficients and let $\CC$~be the ring of complex
numbers.  Define the ring homomorphism $F\:\CC[x]\to\CC$ which sends a
polynomial~$f(x)$ to its value~$f(0)$ at~$x=0$.  Let $S\subset \CC[x]$ denote
the subset of polynomials with nonzero constant term.  Note that $S$~is
closed under multiplication: if $f_1,f_2\in S$, then $f_1f_2\in S$.  Extend
the homomorphism~$F$ to ratios of polynomials $f/g$ where~$g\in S$.  This is
for the simple reason that $g(0)\not= 0$ if~$g\in S$, so $f(0)/g(0)$~makes
sense.  We write $S\inv \CC[x]$ for the ring of such ratios: we have inverted
elements in~$S$.  This inversion construction is easy in this case since
$\CC[x]$ ~is a \emph{commutative} ring; it is trickier in the noncommutative
case and in the categorical context to which we now turn.
 
Suppose 
  \begin{equation}\label{eq:14}
     F\:\Bord_n(\sF)\longrightarrow \sC
  \end{equation}
is an \emph{invertible} topological field theory.  By definition $F$~takes
values in the subset~$\sC^\times \subset \sC$ consisting of invertibles:
invertible objects, invertible 1-morphisms, and invertible morphisms at all
levels.  Now, as in the analogy, the fact that all values are
invertible\footnote{In our analogy, only some values are invertible; here
\emph{all} are.} means that the theory factors through the symmetric monoidal
$(\infty ,n)$-category $ |\bord n(\sF)|$ obtained by adjoining inverses of
all morphisms:
  \begin{equation}\label{eq:15}
     \xymatrix{\bord n(\sF)\ar[r]^(.6)F \ar[d]&\sC\\ |\bord
     n(\sF)|\ar[r]^(.6){\widetilde F }& \sC^\times \ar[u]} 
  \end{equation}
The map $\tF$ encodes all information about the theory~$F$. 
 
The domain and codomain of~$\tF$ are each a higher category, in fact an
$\infty $-category, in which all arrows are invertible.  Such categories are
called \emph{$\infty $-groupoids}.  The basic idea is that an $\infty
$-groupoid is equivalent to a space.  This is easiest to see in the opposite
direction: from a space~$S$ we can extract an $\infty $-groupoid~$\pi
_{\le\infty }S$.  Putting aside the~ `$\infty $' for a moment, we extract an
ordinary groupoid called the \emph{fundamental groupoid}~$\pi _{\le1}S$.  Its
objects are the points of~$S$ and a morphism $x_0\to x_1$ between points
$x_0,x_1\in S$ is a continuous path~$x\:[0,1]\to S$ from~$x_0$ to~$x_1$ up to
homotopy.  This is a groupoid because paths are invertible: reverse time.
The higher groupoids track homotopies of paths, homotopies of homotopies,
etc.  The conclusion is that $\tF$~may be considered as a continuous map of
spaces.  This already brings us into the realm of topology.  But more is
true.  The domain and codomain of~$\tF$ are \emph{symmetric monoidal} $\infty
$-groupoids, which induces more structure on the corresponding spaces.
(Recall that the monoidal product on the bordism category is disjoint union
and on the category~$\sC$ it is some sort of tensor product.)  Namely, those
spaces are \emph{infinite loop spaces}.  So for each of the spaces~$S$ there
exist a sequence of pointed spaces $S_0=S, S_1, S_2, \dots $ such that the
loop space of~$S_n$ is\footnote{The Clinton question: here best to take
`is'=`is homeomorphic to'.  See~\S\ref{subsec:4.1} for further
discussion.}~$S_{n-1}$.  Furthermore, the fact that $F$~preserves the
symmetric monoidal structure---a field theory takes compositions to
compositions and disjoint unions to tensor products---implies that $\tF$~is
an infinite loop map.  Such sequences of spaces are called \emph{spectra} and
an infinite loop map gives rise to a map of spectra.

The bottom line is that the space of invertible field theories (with
specified~$\sF,\sC$) is\footnote{The space is only determined up to homotopy
equivalence, i.e., there is only a well-defined homotopy type.} a space of
maps in homotopy theory.  We are interested in the abelian group of path
components, which houses an invariant of gapped topological phases, and that
is the group of homotopy classes of maps between spectra.  Therefore, the
computation of invertible field theories starts by recognizing the spectra in
the domain and codomain, which in turn depend on the choice of~$\sF$
and~$\sC$.  The domain spectra, obtained from bordism multicategories, will
be discussed in~\S\ref{sec:4}.  We remark that to study invertible field
theories one does not need the cobordism hypothesis; the power of the latter
is for more general non-invertible topological field theories.

  \begin{remark}[]\label{thm:12}
 The set of path components here has a natural abelian group structure.  In
fact, a spectrum~$\sX$ determines a collection $\{\pi _n\sX\}_{n\in \ZZ}$ of
abelian groups, its homotopy groups.  There is additional information in the
spectrum which binds these groups together, but a first heuristic is that a
spectrum is some topological version of a $\ZZ$-graded abelian group.  
  \end{remark}

  \begin{example}[]\label{thm:13}
 We continue with Example~\ref{thm:4}.  In this case~$n=2$ and the
field~$\sF$ is an orientation.  Now the target~$\sC$ is a 2-category, and we
need to determine the spectrum corresponding to the invertibles~$\sC^\times
$.  A typical choice is to take $\sC$~to be the 2-category of complex linear
categories.  This conforms to the usual picture that a 2-dimensional field
theory assigns a complex number to a closed 2-manifold, a complex vector
space to a closed 1-manifold, and a complex linear category to a 0-manifold.
In the invertible sub 2-groupoid~$\sC^\times $ all of these are
``1-dimensional''.  This means that the category is equivalent to~$\Vect$,
the category of vector spaces; the vector spaces are isomorphic to~$\CC$; and
the numbers we encounter are nonzero, so elements of~$\Cx$.  When we make the
corresponding spectrum~$\sX$, the fact that invertible categories are all
equivalent implies\footnote{We reprise the following argument
in~\S\ref{subsubsec:5.1.1}.} $\pi _0\sX=0$.  The fact that all invertible
1-morphisms are equivalent implies $\pi _1\sX=0$.  Finally we come to~$\pi
_2\sX$, which captures the 2-morphisms~$\Cx$.  Here we get two different
answers, depending on whether we consider~$\Cx$ to have the discrete topology
or the continuous topology.  In the discrete case we have $\pi
_2\sX_{\textnormal{discrete}}\cong \Cx$.  For the ordinary topology we use
$\pi _0\Cx=0,\;\pi _1\Cx\cong \ZZ$ to deduce $\pi
_2\sX_{\textnormal{continuous}}=0,\;\pi _3\sX_{\textnormal{continuous}}\cong \ZZ$.
Higher homotopy groups of~$\sX$ vanish.  Hence each of
~$\sX_{\textnormal{discrete}}$ and~$\sX_{\textnormal{continuous}}$ has only a
single nonzero homotopy group.  Such spectra are called
\emph{Eilenberg-MacLane spectra} and are basic building blocks.  The notation
is
  \begin{equation}\label{eq:16}
     \begin{aligned} \sX_{\textnormal{discrete}}&\simeq \Sigma ^2H\Cx, \\
      \sX_{\textnormal{continuous}}&\simeq \Sigma ^3H\ZZ.\end{aligned} 
  \end{equation}
The domain spectrum, to be explained in~\S\ref{sec:4}, is denoted~$\Sigma
^2MTSO_2$.  Thus the two sets of path components are: 
  \begin{equation}\label{eq:17}
     \begin{aligned} \ [\Sigma ^2MTSO_2,\Sigma ^2H\Cx]&\cong \Cx, \\ [\Sigma
      ^2MTSO_2,\Sigma ^3H\ZZ]&=0.\end{aligned} 
  \end{equation}
Resuming Example~\ref{thm:4} we see that the first of the computations
in~\eqref{eq:17} distinguishes the Euler theories, parametrized by the
base~$\lambda \in \Cx$ of the exponential in~\eqref{eq:18}.  The discreteness
in~$\sX_{\textnormal{discrete}}$ means that theories for distinct~$\lambda $
cannot be connected by a smooth path.  In the usual topology they can, and
this explains the second computation in~\eqref{eq:17}: the space of theories
is connected.
  \end{example}

  \begin{remark}[]\label{thm:42}
 For a general invertible bosonic theory in higher than 2-spacetime
dimensions we need more nonzero homotopy groups in the target spectrum.  This
surprise is discussed in~\S\ref{subsec:5.1}.
  \end{remark}

  \begin{remark}[]\label{thm:43}
 The distinction between $\sX_{\textnormal{discrete}}$
and~$\sX_{\textnormal{continuous}}$ is important and carries through to more
elaborate target spectra.  The discrete target, based on~$\Cx$ with the
discrete topology, is where we detect individual theories.  The other target,
based on~$\Cx$ with its usual topology and manifested as $\ZZ$~shifted up one
degree, is where we detect deformation classes of theories.  In the next
section we elaborate on the meaning of spaces of maps into targets such
as~$\sX_{\textnormal{continuous}}$.
  \end{remark}

  \subsection{An illuminating example}\label{subsec:2.7}

We have alluded several times to the \emph{homotopical} setting of the
cobordism hypothesis and so too the computations of deformation classes of
invertible field theories.  It often happens that the geometric
interpretation of a homotopical computation is subtle.  There are many
examples in topology, enumerative geometry, etc.  We illustrate in our
present context with a simple example.  In~\S\ref{sec:6} we encounter a more
sophisticated example of the same type in the classification of gapped
topological phases.
 
In Example~\ref{thm:6} we discussed an invertible field theory in
$n=1$~spacetime dimensions with fields 
  \begin{equation}\label{eq:19}
     \sFg=\{(o,L,\nabla )\} 
  \end{equation}
a triple consisting of an orientation, complex line bundle, and connection.
Then there is an obvious theory
  \begin{equation}\label{eq:20}
     \Fg\:\bord1(\sFg)\longrightarrow \Line 
  \end{equation}
with values in the category of complex lines.  Namely, to a point~$\pt_+$
with positive orientation and a line bundle $L\to\pt_+$ we attach the
line~$L$.  (A line bundle over a point is a single line.)  If we reverse the
orientation of~$\pt_+$, so consider~$\pt_-$, we take the dual line; if there
are several points we form the tensor product.  If $(L,\nabla )\to[0,1]$ is a
line bundle with covariant derivative over the interval with its usual
orientation, the field theory assigns to it the parallel transport
$\Fg\:L_0\to L_1$ from the fiber over the initial endpoint to the fiber at
the terminal endpoint.  If $(L,\nabla )\to\cir$ is a line bundle with
connection over an oriented circle, then $\Fg$~assigns to it the holonomy,
which is a number in~$\Cx$.  This is a well-defined theory, and it is not
``topological'' according to our definition, since the connection~$\nabla $
is not a homotopy-invariant field.
 
We can instead take the set of topological fields
  \begin{equation}\label{eq:21}
     \sFt=\{(o,L)\} 
  \end{equation}
consisting of an orientation and a complex line bundle but no connection.
Now we ask to \emph{classify} topological field theories
  \begin{equation}\label{eq:22}
     \Ft\:\bord1(\sFt)\longrightarrow \Line 
  \end{equation}
As emphasized in Example~\ref{thm:13} it is important to specify which
topology we use on linear isomorphisms in the category~$\Line$: the discrete
or continuous topology.  Here we use the continuous topology.  The
computation of equivalence classes of theories is
  \begin{equation}\label{eq:23}
     [\Sigma ^1MTSO_1\wedge B\Cx_+,\Sigma ^2\ZZ] = [S^0\wedge B\Cx_+,\Sigma
     ^2\ZZ] = H^2(B\Cx;\ZZ) \cong \ZZ.
  \end{equation}
According to the discussions in~\S\ref{subsec:2.5} we conclude that the space
of theories~\eqref{eq:22} is not connected, but rather there is an integer
invariant which distinguishes deformation classes of theories.  There is
always a trivial theory---it sends every 0-manifold with fields to the
trivial line~$\CC$ and every closed 1-manifold with fields to the
number~$1\in \Cx$---and it is in the deformation class of theories labeled by
the integer~0 in~\eqref{eq:23}.  So we are led to ask: 

  \begin{question}[]\label{thm:14}
 What theory~\eqref{eq:22} is labeled by the integer~$k$ in~\eqref{eq:23}?
Can we construct a single example of such a theory?

  \end{question}

It is clear what to do on 0-manifolds.  For example,
  \begin{equation}\label{eq:24}
     \Ft(L\to\pt_+) = L^{\otimes k}, 
  \end{equation}
the $k^{\textnormal{th}}$~tensor power of the line~$L$.  In other words, we
observe that the truncation~$\tau \mstrut _{\le0}\Fg$ of the geometric theory
above does not use the covariant derivative, and so we take $\Ft$ on
0-manifolds to be that theory to the $k^{\textnormal{th}}$~power.
 
What do we do on 1-manifolds?  The theory~ $\Fg$ uses the connection on
$L\to[0,1]$ to define a definite linear map---parallel transport---and the
connection on $L\to\cir$ to define a definite number---the holonomy.  But the
now the fields~\eqref{eq:21} do not include a connection and we have no
apparent way to determine these linear maps and numbers.

There are several ways out, and they illuminate what is is being computed
in~\eqref{eq:23} and in later computations.  Similar remarks apply to all
topological theories and to the cobordism hypothesis.
 
The first comment, already made in Remark~\ref{thm:1}, is that a field theory
gives invariants for \emph{families} of manifolds with fields parametrized by
a smooth manifold~$S$, not just for single manifolds.  So, for example, we
can consider a family of points~$\pt_+$ parametrized by the 2-sphere~$S=S^2$
endowed with a complex line bundle $L\to S^2$ on the total space.  Let $d$~be
the degree of this line bundle.   A field theory~$\Ft$ returns another
line bundle over~$S$, and according to~\eqref{eq:24} $\Ft$~of this family is
the $k^{\textnormal{th}}$~tensor power $L^{\otimes k}\to S$, which has
degree~$kd$.  The ratio $kd/d=k$ of the degrees is the integer
in~\eqref{eq:23}, and it is detected by this 2-parameter family of points.
Similarly, we can consider a 1-parameter family of circles $S\times \cir\to
S$ parametrized by $S=\cir$.  The fields are an orientation along the fibers
of this map and a line bundle $L\to S\times \cir$ on the total space, say of
degree~$d$.  A field theory~$\Ft$ returns a map $S\to\Cx$ and what the
computation~\eqref{eq:23} tells is that the winding number of this map around
the origin in~$\CC$ is ~$kd$.  In other words, the computation~\eqref{eq:23}
determines homotopical information in any particular field theory in the
deformation class, and that information may need to be measured in families.
 
This still does not construct a particular theory~\eqref{eq:22}; it only
constrains any such.  To construct a theory we need to make some
choices---which in itself is not surprising---but what may be surprising is
that we can choose to change the domain $\bord 1(\sFt)$ or the
codomain~$\Line$ in order to construct a particular theory.  What's more, we
may go outside the realm of \emph{topological} field theories and use more
general field theories.

For instance we can replace $\sFt$ with~$\sFg$.  There is an obvious map
$\sFg\to\sFt$ which forgets the connection~$\nabla $, and the important point
is that the fibers of this map are \emph{contractible}.  That is, the space
of covariant derivatives~$\nabla $ on a fixed line bundle $L\to X$ is a
contractible space.  (In fact, it is an infinite dimensional affine space.)
If we make this substitution, then we know how to construct a theory.
Namely, we take $\Fg^{\otimes k}$ where $\Fg$~is defined after~\eqref{eq:20}.
Another choice would be to fix a smooth model of the classifying space for
line bundles, say by fixing a complex Hilbert space~$\sH$ and taking the
classifying space to be the infinite dimensional projective space~$\PP(\sH)$.
Furthermore, we fix a smooth line bundle $H\to\PP(\sH)$ with covariant
derivative.   Augment the topological fields~\eqref{eq:21} by a
contractible choice: a classifying map $\gamma \:X\to\PP(\sH)$ for each line
bundle $L\to X$.  Again there is a map from triples~$(o,L,\gamma )$
to~$(o,L)$ and the fibers are contractible.  With these choices and
replacements it is easy to construct a particular theory using parallel
transport in the bundle $H^{\otimes k}\to\PP(\sH)$, pulled back via the
classifying map~$\gamma $.   

  \begin{remark}[]\label{thm:28}
 A different possibility is to keep the domain fields as is ~\eqref{eq:21}
but change the target category~$\Line$.  We can replace it by the
\emph{2-category} of $\ZZ$-gerbes; the automorphisms of any object comprise
the category of $\ZZ$-torsors, and the automorphism group of any automorphism
is~$\ZZ$.  This is the target of a \emph{2-dimensional} field theory with
fields~\eqref{eq:21}.  This theory assigns to a line bundle $L\to X$ over a
closed oriented surface its degree, which is an integer.  The information on
lower dimensional manifolds can be used to compute this degree locally;
see~\cite{F2}.
  \end{remark}

Another insight can be gleaned from~\eqref{eq:15}, which we unwrap:
  \begin{equation}\label{eq:25}
     \bord n(\sF)\longrightarrow |\bord
     n(\sF)|\xrightarrow{\;\;\widetilde{F}\;\;} \sC^\times 
%     \longrightarrow |\sC^\times | 
  \end{equation}
Let double vertical bars around a higher category denote its \emph{geometric
realization}, which is a topological space.  The process of geometric
realization inverts all arrows in the category, so factors through the single
bar construction.  In these terms what we calculate in~\eqref{eq:23} is the
group of homotopy classes of maps of \emph{spaces}
  \begin{equation}\label{eq:26}
     \|\bord n(\sF)\|\longrightarrow \|\sC^\times \| 
  \end{equation}
whereas a particular field theory~$\Ft$ is a map of \emph{categories}
  \begin{equation}\label{eq:27}
     \bord n(\sF)\longrightarrow \sC^\times 
  \end{equation}
The functor `geometric realization' takes a map of categories~\eqref{eq:27}
to a map of spaces~\eqref{eq:26}.  Furthermore, Grothendieck's \emph{homotopy
hypothesis} asserts that the functor of geometric realization is an
\emph{equivalence} between higher \emph{groupoids} and spaces, so we expect
to be able to invert
  \begin{equation}\label{eq:168}
     \sCx\longmapsto \|\sCx\| 
  \end{equation}
since $\sCx$~is a higher groupoid.  Such an inversion would let us pass from
a map of spaces~\eqref{eq:26} to a field theory~\eqref{eq:27}, but in
practice we cannot explicitly construct an inverse without making choices.
In our example, $\sC^\times =\Line$ is the groupoid of complex lines and its
geometric realization as a space is $|\sC^\times |\simeq \CP^{\infty}$.  We
need to identify that geometric realization with an explicit model
of~$\CP^{\infty}$ to get back to one-dimensional vector spaces.

In summary, the most relevant interpretation we can offer of the
computation~\eqref{eq:23} for this paper is our first: it computes homotopy
classes of a class of theories obtained by augmenting the topological fields
with geometric fields which constitute a contractible choice.  This resonates
with the construction of quantum Chern-Simons theory~\cite{W1}, for example,
in which a Riemannian metric is introduced to evaluate the path integral.

  \begin{remark}[]\label{thm:44}
 The complication here arises since we seek to interpret \emph{nontorsion}
elements of the group of deformation classes, computed in~\eqref{eq:23}.
\emph{Torsion} elements, such as in footnote~\ref{foot:12}, lift to maps into
$\Cx_{\textnormal{discrete}}$ shifted down a degree, and so are realized by
definite topological theories.
  \end{remark}

   \section{The long-range effective topological field theory}\label{sec:3}
% lastsubsec@  4

In this section we begin to apply the generalities about invertible
topological field theories to short-range entangled phases.  We start
in~\S\ref{subsec:3.1} with some general remarks about scaling, energy gaps,
and effective theories.  In~\S\ref{subsec:3.3} we give general arguments
about short-range entangled (SRE) phases in gapped condensed matter systems.
In the last subsection~\S\ref{subsec:3.4} we bring in local symmetries,
including time-reversal, and in particular define {symmetry protected
topological} (SPT) phases in terms of invertible topological field theories.

  \subsection{Low energy, long time}\label{subsec:3.1}

In classical nonrelativistic physics there are three basic
\emph{dimensions}:\footnote{There are others, such as temperature and
electric current, but they do not play a role in this discussion.}
length~($L$), time~($T$), and mass~($M$).  (For mathematical discussions of
dimensions and units, see~\cite[\S2.1]{Ta}, \cite[\S2.1]{DF}.)  The
dimensions of other physical quantities can be expressed in terms of these.
For example, energy has dimension
  \begin{equation}\label{eq:28}
     [E] = \frac{ML^2}{T^2}.
  \end{equation}
Universal physical constants also have dimensions, for example Planck's
constant~$\hbar$ and the speed of light~$c$: 
  \begin{equation}\label{eq:29}
     [\hbar] = \frac{ML^2}{T},\qquad [c]=\frac LT. 
  \end{equation}
The constant~$\hbar$ is present in any quantum system, the constant~$c$ in
any relativistic system, and both in a relativistic quantum system.  These
constants allow us to convert between dimensions, often silently by assuming
units in which $\hbar=1$ and~$c=1$.  Thus in a quantum system we have 
  \begin{equation}\label{eq:30}
     [E]=\frac 1T\qquad \textnormal{(assuming~$\hbar$)}. 
  \end{equation}
This dimensional analysis suggests that the \emph{low energy} behavior of a
quantum system is reflected in its \emph{long time behavior}.  In a
relativistic quantum system we have $L=T$ using~$c$, and so
  \begin{equation}\label{eq:31}
     [E]=M=\frac 1T=\frac 1L\qquad \textnormal{(assuming~$\hbar,c$)},
  \end{equation}
thereby relating low energy to low mass and long time to large distance.
 
Consider, then, a Riemannian manifold~$M$ with Riemannian metric~$g$.  Let
$\Delta $~be the Laplace operator on differential forms, which we take as
the Hamiltonian of a nonrelativistic quantum system.\footnote{The
dimensionally correct expression is $H=(\hbar^2/m)\Delta $, where
$m=\textnormal{mass}$; instead, we set $\hbar=1$ and~$m=1$.}  The eigenvalues
of~$\Delta $ are energies, so the low energy behavior involves the low lying
eigenvalues.  Equivalently, we can consider the long time evolution, which is
$e^{it\Delta }$ for $t$~large.  Nonzero eigenvalues lead to oscillations,
which tend to cancel out if $t$~is large, and once more we are led to the low
lying spectrum.\footnote{In Riemannian geometry the heat
operator~$e^{-t\Delta }$ is more familiar and leads to the same conclusion.}
Now we encounter a fundamental dichotomy.  If 0~is an isolated point of the
spectrum, then for $t$~large the kernel of~$\Delta $ dominates.  However, if
0~is not an isolated point of the spectrum, then the low lying continuous
spectrum mixes inextricably with the kernel.  Therefore, we isolate the
kernel at long time only if there is a \emph{gap} in the spectrum above~0.
This always happens if $M$~is compact.  In that case the Hodge and de Rham
theorems combine to prove that the kernel of the Laplace operator~$\Delta $
on differential forms has \emph{topological} significance: the dimension of
the kernel of~$\Delta $ on~$\Omega ^q(M)$ is the $q^{\textnormal{th}}$~ Betti
number of $M$.

A \emph{gapped quantum system} is one in which 0~is an isolated point of the
spectrum of the Hamiltonian~$H$.  By the dimensional analysis above, this
\emph{energy gap} is equivalent to a \emph{mass gap} in a relativistic
quantum system.  One is interested in the long time (and large distance)
behavior of a system since that is what we can observe, and often that
behavior is described by an \emph{effective} system.  We use the general
term `long-range' for either `long time' or `large distance'.  For example,
quantum field theories at large distance are often well-approximated by
another quantum field theory, and the approximating theory contains more than
the vacuum state if there is no mass gap; see~\cite[\S2.5]{W2} for a
discussion.  It is sometimes said that if there is a mass gap then the low
energy theory is trivial, but there is a more nuanced truism:\footnote{I do
not know where this idea originated; one older reference
is~\cite[p.~405]{W3}.}
  \begin{equation}\label{eq:32}
     \textit{The low energy behavior of a gapped system is approximated by a
     \emph{topological} field theory.} 
  \end{equation}
For this to make sense, we need to study the theory on arbitrary spacetimes,
so couple to background gravity.  That can be done for many field theories.
For example, 4-dimensional Yang-Mills theory makes sense on arbitrary
Riemannian manifolds, and conjecturally there is a mass gap, so one should
obtain a 4-dimensional topological field in the low energy limit.  (Here
$n=4$~is the spacetime dimension.)

  \begin{remark}[]\label{thm:15} 
 The $N=1$~supersymmetric Yang-Mills theory also has a conjectural mass gap,
and presumably the low energy effective topological theory is more
interesting in that case.  
  \end{remark}

We turn to condensed matter systems, which we treat \emph{very}
heuristically.  We assume the quantum Hilbert space~$\sH$ is the tensor
product over a set~$S$ of sites of finite dimensional Hilbert spaces
  \begin{equation}\label{eq:33}
     \sH=\bigotimes\limits_{s\in S}\sH_s 
  \end{equation}
and we may assume $\sH_s=V$ is a fixed Hilbert space independent of the
site~$s$; the sites are initially arranged in a regular pattern, such as a
lattice; and the Hamiltonian~$H$ is \emph{local} in that it is a sum
  \begin{equation}\label{eq:34}
     H=\sum\limits_{s\in S}H_s ,
  \end{equation}
where $H_s=T\otimes \id$ relative to a decomposition
$\sH=\bigl(\bigotimes_{s'}\sH_{s'}\bigr)\;\otimes
\;\bigl(\bigotimes_{s''}\sH_{s''}\bigr)$ in which $s'$~runs over sites in a
small vicinity of~$s$ and the operator~$T$ is independent of~$s$.  The
sites~$S$ are located in a background \emph{space}~$Y$ of dimension~$d$.  (We
emphasize that $d$~is the dimension of space and $d+1$~the dimension of
spacetime.)  This is a nonrelativistic system, and Galilean boosts have been
broken by the sites, which don't move in time.  Thus time is completely
separate from the geometry of space.  Initially $Y$~typically lies in flat
Euclidean space, and coupling to gravity in this context means that the
system can be studied on any curved $d$-dimensional manifold~$Y$.  Fix a
compact~$Y$ and imagine that the finite set of sites~$S$ becomes more and
more dense.  With no pretense of precision we assume that the system is
gapped in that 0~is the lowest eigenvalue of~$H$ and the spectrum of~$H$ has
a fixed size gap above~0 which persists in the limit that $S$~becomes dense.
Furthermore, we assume the kernel of~$H$ stabilizes to a finite dimensional
vector space~$F(Y)$.  In this situation we would like to
assert~\eqref{eq:32}: there is an effective $d$-space dimensional topological
field theory~$F$ which approximates the low energy/long time behavior of the
system.  The vector spaces~$F(Y)$ are part of that theory.  The deformation
class of the theory~$F$ is then a \emph{topological invariant} of the
original system, much the same way that the Betti numbers are topological
invariants of the Laplace operator on a compact Riemannian manifold.

  \begin{remark}[]\label{thm:18}
 Some parts of the gluing laws of a topological field theory are clearly
going to hold.  For example, if $Y=Y'\amalg Y''$ is a disjoint union, and we
envision a system as described after Remark~\ref{thm:15}, then the quantum
Hilbert space~\eqref{eq:33} of~$Y$ is the tensor product of those on~$Y'$
and~$Y''$ and the Hamiltonian~\eqref{eq:34} decomposes accordingly.
Therefore, so too does the kernel~$F(Y)$.  One expects a more subtle
decomposition emerges from a continuum limit process.
  \end{remark}

  \begin{remark}[]\label{thm:17}
 This procedure does not obviously give numerical invariants on all compact
$(d+1)$-dimensional manifolds.  We expect invariants corresponding to time
evolution, and since the field theory is defined for compact manifolds we
must take circular time~$\cir$.  When we return to the initial time the
manifold and its fields can undergo a symmetry, so the manifold we obtain is
a fiber bundle $X^{d+1}\to\cir$.  We do expect invariants for these
\emph{mapping cylinders}.  This kind of impoverished $(d+1)$-dimensional
field theory, perhaps with variations, goes under many names: a
\emph{$(d+\epsilon )$-dimensional theory}, a theory of \emph{$H$-type},
\dots.  Many, in fact most, of the effective theories of $H$-type that we
encounter do extend to full $(d+1)$-dimensional theories, and it may be that
additional considerations in the microscopic theory can imply that property
of the long-range topological theory.  Even more is possible.  A 1-parameter
family of $(d+1)$-manifolds parametrized by~$S$ is the total space of a fiber
bundle $M^{d+2}\to\cir$, and the theory gives a map~$\cir\to\Cx$.  Its
winding number is an integer invariant associated to~$M$.  It may happen that
these integer invariants are defined for \emph{arbitrary} closed
$(d+2)$-manifolds, not just mapping cylinders.  This would impose a more
severe constraint on the low energy effective theory, as we illustrate
in~\S\ref{subsec:6.3}.   Geometrically, an $H$-type theory gives invariants
for manifolds equipped with a rank~$d$ bundle stably equivalent to the
tangent bundle.
  \end{remark}

  \begin{remark}[]\label{thm:16}
 If one starts with a gapped quantum field theory, at first glance one can
imagine its low energy behavior giving rise to a specific topological field
theory, or at least a contractible space of theories depending on some mild
choices in the approximation.  For a condensed matter system there seem to be
more choices as one must, in addition to any cutoffs, take a continuum limit.
This leads to the expectation that we obtain a space of theories---again
presumably contractible, but hopefully at least connected---and makes it more
plausible that one could encounter the phenomenon highlighted
in~\S\ref{subsec:2.7}.  Indeed, we will in ~\S\ref{subsec:6.3}.
  \end{remark}

  \subsection{Short-range entanglement hypothesis and its
consequences}\label{subsec:3.3} 

We make explicit the assumptions beyond~\eqref{eq:32} which underlie our
proposal in~\S\ref{subsec:5.2}.  The most drastic of these is 
invertibility, which is an expression of short-range entanglement;
see~(\ref{srehyp}) below.

  \begin{enumerate}[(1)]
\narrower\itemsep1em

 \item We assume that the low energy effective topological theory~$F$ is
\emph{fully extended}, in the sense discussed after Remark~\ref{thm:2}.  This
is a strong form of locality, and it seems reasonable since the continuum
limit theory is obtained from local discrete systems, as in~\eqref{eq:33}
and~\eqref{eq:34}.  To define the extended theory we must specify what sorts
of (higher categorical) invariants $F$~computes on low dimensional manifolds.
In other words, we must specify the target~$\sC$ for the field theory;
see~\eqref{eq:5}.  Since we restrict to invertible theories, we need only
specify a spectrum.  That is one of the key choices to be made;
see~\S\ref{subsec:5.1} for a full discussion.

 \item We assume that $F$~is \emph{unitary}, since microscopic condensed matter
systems are typically unitary.

 \item One important consideration which affects the choice of~$\sC$ is the
\emph{topology} on the space of theories, as already mentioned and
illustrated in Remark~\ref{thm:2}, Example~\ref{thm:4}, and
Example~\ref{thm:13}.  To the extent that the theory~$F$ has numerical
invariants of closed $(d+1)$-manifolds, they lie in~$\CC$; the invariants of
closed $d$-manifolds are $\CC$-vector spaces.  In both cases we use the usual
topology on~$\CC$ to allow the theory~$F$ to deform by continuously varying
the numerical invariants, the linear maps between vector spaces, etc.  Two
microscopic gapped systems related by a continuous deformation are considered
to define the same topological phase, and they should give rise to effective
topological field theories which are deformation equivalent.  On the other
hand, we also consider anomaly theories~$\alpha $, which can arise when
gauging an anomalous global symmetry.  (See~\S\ref{subsec:2.3} and the text
before Example~\ref{thm:10}.)  In that case we do not allow continuous
deformations of the anomaly theory: anomalous theories relative to distinct
anomaly theories should not be viewed as the same topological phase.  Thus,
for the classification of anomalies we take the \emph{discrete} topology
on~$\CC$.\footnote{I thank Xiao-Gang Wen for explaining this point to me.}
 
 \item Another important choice is of the \emph{background fields}~$\sF$ in
the low energy effective topological theory~\eqref{eq:5}.  We expect only
topological fields.  Our choice here is based on limited experience, and is
more or less a guess.  Naively one may think a lattice system on a manifold
induces a framing, but we hope there is rotational invariance which allows us
to formulate the theory on more general manifolds.  As orientations are
typically used to define local Hamiltonians~\eqref{eq:34}, we require
invariance only under rotations connected to the identity.  Therefore, for
purely bosonic systems we choose $\sF$~to include an orientation, and if the
theory includes fermions then we augment this to a spin structure.  In the
fermionic case one can think that the coupling of such a system to background
gravity involves spacetime spinor fields, whence the necessity of a spin
structure.  There is an additional field---a background principal bundle---if
we gauge a global symmetry, as we elaborate below in~\S\ref{subsec:3.4}.

 \item As already discussed in Remark~\ref{thm:17} we take $F$ to be a theory
defined in $d$~space dimensions which defines complex numbers only on special
closed $(d+1)$-manifolds.  In Remark~\ref{thm:17} we explain then that
integers are obtained for special closed $(d+2)$-manifolds.  `Special' in
both cases means the manifold is equipped with a rank~ $d$ vector bundle and
a stable isomorphism with the tangent bundle.  

 \item \label{srehyp} Finally, we assume $F$~is \emph{invertible}, which is
the hypothesis of \emph{no long-range entanglement}, referred to as
\emph{short-range entanglement}.  The macroscopic definition given in some
literature (for example~\cite{VS}) is that $\dim F(Y)=1$ for all closed
$d$-manifolds.  This \emph{is} precisely invertibility at this
level---dimension~$d$---of the field theory.  It is not too much of a stretch
to extrapolate that to invertibility at all levels.  The theorem mentioned in
footnote~\ref{foot} supports this extrapolation.
  \end{enumerate}

  \subsection{Symmetries and SPT phases}\label{subsec:3.4}

Suppose that a Lie group~$G$ acts as global symmetries of a condensed matter
system.  This can include internal symmetries which act on the local Hilbert
space~$\sH_s=V$ in~\eqref{eq:33}.  It can also include time-reversal
symmetry, since we consider time as external to space and so time-reversal
symmetries fix the points of space.  We do not, however, consider symmetries
which move points of space since we want to consider the theory on an arbitrary
$d$-manifold~$Y$.  For example, this rules out rotation and reflection
symmetries of theories on Euclidean space.
 
Just as we study the condensed matter system on arbitrary space manifolds~$Y$
to explore its low energy behavior (coupling to gravity), to explore the
effect of the global symmetry we attempt to construct an equivariant
extension, as in~\S\ref{subsec:2.4}.  Recall that this means that we augment
the set of fields to include a $G$-connection on a principal $G$-bundle.  As
mentioned after Example~\ref{thm:9} there may be obstructions to constructing
a $G$-invariant extension.  We assume that any obstruction, if it exists, can
be expressed as an anomaly theory in one higher dimension.  Passing to the
low energy approximation, we stipulate that there is a topological anomaly
theory~$\alpha $ and the original long-range effective theory~$F$ is extended
to be an anomalous theory~$\tF$ with anomaly~$\alpha $.  The topological
theories~$\alpha $ and~$\tF$ have a \emph{principal $G$-bundle} as an
additional background field---the choice of connection does not appear in the
classification of effective topological theories.  (See the discussion
in~\S\ref{subsec:2.7}.)  Also, both theories are truncated only on manifolds
of dimension~$\le d$, as in \S\ref{subsec:3.3}(4).  As for any anomaly
theory, $\alpha $ ~is invertible.  We make the hypothesis that for
short-range entanglement the $G$-equivariant long-range topological
theory~$\tF$ is also invertible.  When we come to classify anomalies
in~\S\ref{subsec:5.2} we use the considerations of~ \S\ref{subsec:3.3}(2) to
guide the choice of topology on the space of anomaly theories.
 
In summary, then, we will assume one of two cases.  Either the original
long-range effective theory~$F$ extends to a $G$-equivariant theory~$\tF$, or
there is an anomaly~$\alpha $ and there is an anomalous $G$-equivariant
extension~$\tF$.  In both cases the original set of fields~$\sF$ in~$F$ is
augmented to
  \begin{equation}\label{eq:35}
    \tsF=\sF\cup \{\textnormal{$G$-bundle}\}. 
  \end{equation}

There is an embedding $\sF\to\tsF$ which chooses the trivial $G$-bundle.
Restriction along this map allows us to study the effective theory~$\tF$
ignoring the symmetry~$G$.  This restriction is simply the theory~$F$.  A
topological phase is said to be \emph{symmetry protected} if this
restriction~$F$ is the trivial theory.  (Notice that the anomaly
theory~$\alpha $ is trivialized under this restriction since the original
long-range effective theory~$F$ is an absolute theory---it has no anomaly.)

   \section{Bordism and homotopy theory}\label{sec:4}
% lastsubsec@  2

In this section we recall the \emph{Madsen-Tillmann} spectra which appear in
the study of invertible topological field theories.  In the notation
of~\S\ref{subsec:2.6} the simplest of these is the spectrum\footnote{The
`$\Sigma ^n$' denotes a shift, or suspension, and is present because of an
unfortunate indexing convention.  What should appear on the left hand side
of~\eqref{eq:36} is the 0-space of the Madsen-Tillmann spectrum, which is
standardly denoted~`$\Omega ^{\infty}\Sigma ^nMTO_n$', but we blur the
notation.}
  \begin{equation}\label{eq:36}
     \Sigma ^nMTO_n\simeq  |\bord n| 
  \end{equation}
which is the geometric realization of the bordism category, obtained by
inverting all morphisms.  The Madsen-Tillmann spectra were introduced
in~\cite{MT} and versions of~\eqref{eq:36}---including nontrivial
fields~$\sF$---are proved in~\cite{MaWe}, \cite{GMTW}, \cite{Ay}.  Those
theorems treat the geometric realization of a topological 1-category, whereas
the right hand side of~\eqref{eq:36} is the geometric realization of an
$(\infty ,n)$-category.  The $(\infty ,n)$ version is stated
in~\cite[\S2.5]{L} and the techniques to prove it are most likely contained
in the cited references.  We proceed to use the $(\infty ,n)$~statement as
the basis for our proposal in~\S\ref{subsec:5.2}.  We give a brief
introduction in~\S\ref{subsec:4.1}; the class notes~\cite{F3} contain much
more detail.  In~\S\ref{subsec:4.2} we note the modifications to include
symmetry and the modifications for theories of $H$-type discussed in
Remark~\ref{thm:17}.  We also explain orientation-reversal and unitarity in
this context.

  \subsection{Madsen-Tillmann spectra}\label{subsec:4.1}

Pontrjagin studied smooth maps 
  \begin{equation}\label{eq:38}
     f\:S^{n+q}\to S^q 
  \end{equation}
by choosing a regular value~$p\in S^q$ and focusing on the inverse
image~$M=f\inv (p)\subset S^{n+q}$, an $n$-dimensional closed submanifold.
It carries an additional topological structure.  Namely, for all~$m\in M$ the
differential
  \begin{equation}\label{eq:37}
      df_m\:T_mS^{n+q}/T_mM\to T_pS^q 
  \end{equation}
is an isomorphism from the normal space to~$M$ at~$m$ to a fixed vector
space, the tangent space of~$S^q$ at~$p$.  This is a \emph{framing} of the
normal bundle.  We emphasize that it is a \emph{normal}, rather than
tangential, framing.  The submanifolds for different regular values and
homotopic maps are all \emph{framed bordant} in the sense that for any
two~$M_0,M_1$ there exists a compact $(n+1)$-dimensional manifold $N\subset
[0,1]\times S^{n+q}$ with boundary $\{0\}\times M_0\;\amalg\; \{1\}\times
M_1$; the manifold~$N$ carries a normal framing which agrees with that
of~$M_0$ and~$M_1$ on the boundary.  This is the fundamental connection
between bordism and homotopy theory~\cite[\S7]{Mi}, which was elaborated and
given computational punch by Thom in his PhD thesis~\cite{Th}.  If we seek to
understand not $n$-dimensional closed submanifolds of a fixed
sphere~$S^{n+q}$ but rather $n$-dimensional abstract closed framed manifolds,
then a basic theorem of Whitney tells that every abstract manifold embeds in
some sphere, so it suffices to take $q$~ large.  We increase~$q$ by
suspending~\eqref{eq:38}; large~$q$ is realized by iterated suspension.  The
suspension of~\eqref{eq:38} is a map between the suspension of the domain and
codomain spheres, and the suspension of a sphere is a sphere of dimension one
greater.  A sequence of spaces related by suspension is a \emph{spectrum},
and Thom introduced special spectra to compute bordism groups.  The manifolds
classified by these bordism groups may carry geometric structures (framings,
orientations, spin structures, \dots ) on their \emph{stable normal} bundle.

The bordism question for invertible topological field theory differs in a
fundamental way.  A field theory has a definite spacetime dimension~$n$, and
the geometric structures live on the $n$-dimensional tangent bundle.  So
whereas Thom's theory is \emph{stable normal}, the Madsen-Tillmann spectra
which arise encode \emph{unstable\footnote{$MT$~spectra are still part of
stable homotopy theory---they \emph{are} spectra---but the dimension is
fixed, not stabilized.}  tangential} bordism.  We sketch the basic
example~$MTO_n$ and indicate the modifications~$MTSO_n$ for oriented
$n$-manifolds and $MT\Spin_n$ for spin $n$-manifolds.

Formally, a \emph{prespectrum}~$T_{\bullet }$ is a sequence~$\{T_q\}_{q\in
\ZZ^{}}$ of pointed spaces and pointed maps $s_q\:\Sigma T_q\to T_{q+1}$,
where `$\Sigma $'~denotes suspension.  It is a \emph{spectrum} if the induced
maps $t_q\:T_{q}\to\Omega T_{q+1}$ are homeomorphisms, where `$\Omega
$'~denotes based loops.  Any prespectrum has an associated spectrum, and
furthermore it suffices to define~$T_q$ for $q\ge q_0$ for some~$q_0\in
\ZZ^{}$.  The simplest example, indicated above, is the sphere prespectrum
with $T_q=S^q$.  We now construct a prespectrum whose associated spectrum
is~$MTO_n$ for a fixed~$n\in \ZZ^{>0}$.
 
Let $Gr_n(W)$ be the Grassmannian of $n$-dimensional subspaces of the real
vector space~$W$.   A point of~$[V]\in Gr_n(W)$ is an $n$-dimensional
subspace~$V\subset W$.  The Grassmannian is a smooth manifold, and there is a
tautological rank~$n$ \emph{universal subbundle}
  \begin{equation}\label{eq:47}
     S\longrightarrow  Gr_n(W) 
  \end{equation}
whose  fiber at~$[V]$  is~$V$.  Even  better, there  is a  tautological exact
sequence
  \begin{equation}\label{eq:39}
     0\longrightarrow S\longrightarrow \vtriv W\longrightarrow
     Q\longrightarrow 0 
  \end{equation}
of vector bundles; the fiber of the \emph{universal quotient bundle}~$Q\to
Gr_n(W)$ at~$V\in Gr_n(W)$ is the quotient vector space~$W/V$, and the vector
bundle $\vtriv W\to Gr_n(W)$ has constant fiber~$W$.  For any integer~$q>0$
we define the space~$T_{n+q}$ to be the \emph{Thom space} of the universal
quotient bundle
  \begin{equation}\label{eq:40}
     Q(q)\longrightarrow Gr_n(\RR^{n+q}). 
  \end{equation}
The Thom space is obtained from the manifold~$Q(q)$ by introducing an inner
product on the vector bundle~\eqref{eq:40} and collapsing the subspace of all
vectors of norm~$\ge R$ to a point, where $R>0$ is any real number, as
indicated in Figure~\ref{fig:6}.  The structure map~$s_{n+q}$ is obtained by
applying the Thom space construction to the map
  \begin{equation}\label{eq:41}
 \begin{gathered}
    \xymatrix{\underline{\RR}\oplus Q(q)\ar[r]^{} \ar[d]_{} & Q(q+1)\ar[d]^{}
    \\ Gr_n(\RR^{n+q})\ar[r]^{} & Gr_n(\RR^{n+q+1})} 
 \end{gathered} 
  \end{equation}
The bottom arrow takes a subspace~$V\subset \RR^{n+q}$ and regards it as a
subspace of~$\RR^{n+q+1}$ all of whose vectors have first coordinate zero.

  \begin{figure}[ht]
  \centering
  \includegraphics[scale=.8]{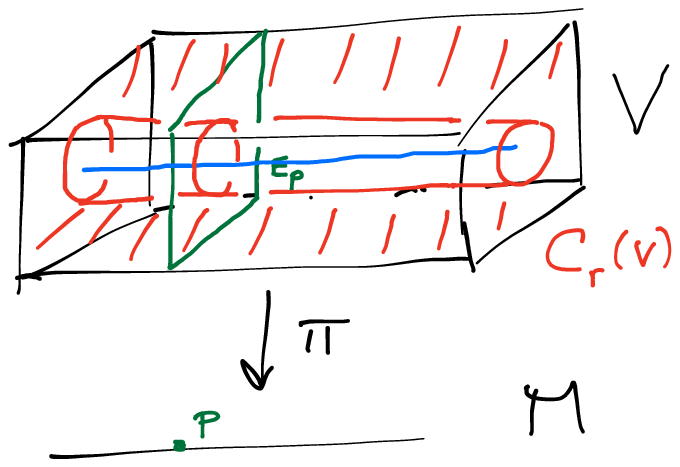}
  \caption{The Thom space of a vector bundle $V\to M$}\label{fig:6}
  \end{figure}
 
More useful to us is the $n^{\textnormal{th}}$~suspension $\Sigma ^nMTO_n$,
which is represented by the shift of this prespectrum with $(\Sigma
^nT)_{q}=T_{n+q}$.  A map $S^m\to\Sigma ^nMTO_n$ is represented by a pointed
map $S^{m+q}\to T_{n+q}$ for some~$q$ large, so a map $f\:S^{m+q}\to Q(q)$
which sends the basepoint of~$S^{m+q}$ to a vector of norm~$\ge R$.  Suppose
the map is transverse to the zero section~$Z\subset Q(q)$.  Then the pullback
$M:=f\inv (Z)\subset S^{m+q}$ is a submanifold of dimension~$m$.  Its normal
bundle is identified via~$df$ with the normal bundle to~$Z$ in~$Q(q)$, which
in turn is identified with the vector bundle~\eqref{eq:40}.  But because of
the degrees it is more natural to consider the virtual
bundle~$Q(q)-\vtriv{\RR^{n+q}}\approx -S(n)$ whose pullback is then
identified with the negative of the \emph{tangent} bundle to~$M$, stabilized
to have rank~$n$.  (Recall that $S(n)\to Gr_n(\RR^{n+q})$ is the universal
subbundle~\eqref{eq:47}.)  The choice of embedding into a sphere~$S^{m+q}$
disappears in the limit~$q\to\infty $ and, as in Thom's bordism theory, we
obtain abstract manifolds rather than embedded ones.

  \begin{remark}[]\label{thm:53}
 The pullback of $S(n)\to Gr_n(\RR^{n+q})$ to~$M$ is a rank~$n$ bundle which
is equipped with a stable isomorphism to the tangent bundle~$TM$. 
  \end{remark}

More generally, for any smooth manifold~$S$, a map $S\times S^m\to \Sigma
^nMTO_n$ leads to a smooth proper map $\pi \:N\to S$ with $\dim N-\dim S=m$
and with a rank~$n$ vector bundle~$V\to N$ equipped with a stable isomorphism
with $TN-\pi ^*TS$.  The Galatius-Madsen-Tillmann-Weiss theorem~\cite{GMTW}
asserts that the spectrum $\Sigma ^nMTO_n$ classifies a bordism theory of
proper \emph{fiber bundles}, rather than arbitrary proper maps~$\pi $.

  \begin{remark}[]\label{thm:19}
 There is a sequence of maps 
  \begin{equation}\label{eq:44}
     \Sigma ^1MTO_1\longrightarrow \Sigma ^2MTO_2\longrightarrow \Sigma
     ^3MTO_3\longrightarrow \cdots 
  \end{equation}
whose ``limit'' is the Thom spectrum~$MO$ which classifies unoriented
manifolds.  In this precise sense the Madsen-Tillmann spectra approximate
Thom spectra.
  \end{remark}

To construct the Madsen-Tillmann spectra ~$MTSO_n$ ($MT\Spin_n$) use the
Grassmannian of oriented (spin) subspaces. 
 
Quite generally, a map $S\to T$ of spectra represents a $T$-cohomology class
on~$S$.  The $T$-cohomology of a Madsen-Tillmann spectrum $S=\Sigma ^nMTSO_n$
is isomorphic to the $T$-cohomology of the space~$BSO_n$ by the \emph{Thom
isomorphism}.  For example, if $T=\Sigma ^qH\ZZ$ then there is a Thom
isomorphism 
  \begin{equation}\label{eq:134}
     [\Sigma ^nMTSO_n,\Sigma ^qH\ZZ]\cong H^q(BSO_n;\ZZ). 
  \end{equation}
There is an orientation condition which is satisfied here because we consider
the Madsen-Tillmann spectrum with group~$SO_n$.  If instead we use~$O_n$,
then we obtain cohomology with twisted coefficients: 
  \begin{equation}\label{eq:135}
     [\Sigma ^nMTO_n,\Sigma ^qH\ZZ]\cong H^q(BO_n;\ZZ^w), 
  \end{equation}
where $\ZZ^w\to BO_n$ is the nontrivial local system with holonomy~$-1$
around the nontrivial loop in~$BO_n$.  The general form of the twisted Thom
isomorphism is 
  \begin{equation}\label{eq:136}
     [\Sigma ^nMTO_n,\Sigma ^qT]\cong T^{\tau +q}(BO_n), 
  \end{equation}
where $\tau $~is the twisting of $T$-cohomology defined by the virtual vector
bundle $\vtriv {\RR^n}-S(n)\to BO_n$.  There are similar statements for the
groups~$SO_n$ and~$\Spin_n$.  We refer to~\cite{ABGHR} and references therein
for a modern treatment of twisted cohomology, orientations, and the Thom
isomorphism.

  \subsection{Variations}\label{subsec:4.2}

 \subsubsection{Global symmetry groups}\label{subsubsec:4.2.1}
 Let $G$~be a Lie group.  As explained in~\S\ref{subsec:3.4} the long-range
effective topological theory approximating a condensed matter system with
global symmetry group~$G$ includes a $G$-bundle among its background fields;
see~\eqref{eq:35}.  Such a bundle can be obtained from a universal $G$-bundle
$EG\to BG$ by pullback.  To incorporate the bundle into the Madsen-Tillmann
spectrum~$MTO_n$ replace~\eqref{eq:40} with the pullback bundle
  \begin{equation}\label{eq:42}
     Q(q)\longrightarrow Gr_n(\RR^{n+q})\times BG 
  \end{equation}
over the Cartesian product.  A map $f\:S^{m+q}\to Q(q)$ which is transverse
to the zero section of~\eqref{eq:42} gives a manifold~$M\subset S^{m+q}$ and
a $G$-bundle $P\to S^{m+q}$.  The structure maps defined from~\eqref{eq:41}
extend to incorporate the $BG$~factor.  The spectrum so obtained is denoted
  \begin{equation}\label{eq:43}
     MTO_n\wedge BG_+ 
  \end{equation}
The wedge, pronounced ``smash'', is the appropriate product for pointed
spaces; the~`$+$' denotes a disjoint basepoint.\footnote{The classifying
space~$BG$ does have a basepoint which represents the trivial $G$-bundle, and
we use it at the end of~\S\ref{subsec:5.2} to implement the embedding
described after~\eqref{eq:35} in the discussion of SPT phases.  The basepoint
of~$BG$ is ignored in the construction of~$BG_+=BG\amalg \pt$, the disjoint
union of~$BG$ and a point.}  Of course, there are similar spectra
to~\eqref{eq:43} for oriented and spin manifolds.

 \subsubsection{Theories of $H$-type}\label{subsubsec:4.2.2}
 In a $d$-space dimensional theory we obtain numerical invariants only for
$(d+1)$-manifolds which are essentially products: time is not mixed with
space; see Remark~\ref{thm:17}.  In terms of~\eqref{eq:5} an ``$H$-type''
theory is defined on the subcategory of~$\bord{d+1}(\sF)$ which only contains
top dimensional bordisms which are fibered over 1-manifolds:
  \begin{equation}\label{eq:45}
     F\:\Bord_d(\sF)\longrightarrow \sC 
  \end{equation}
An invertible theory of that form, if $\sF$~is empty, is a map out of the
spectrum~ $\Sigma ^dMTO_d$.  In terms of the explicit prespectrum described
in~\S\ref{subsec:4.1}, the $q^{th}$-space of~$\Sigma ^dMTO_d$ is the Thom
space of 
  \begin{equation}\label{eq:50}
     Q(q)\longrightarrow  Gr_d(\RR^{d+q}). 
  \end{equation}
We  explained in the  paragraph after~\eqref{eq:41}  that manifolds  in the
usual $(d+1)$-spacetime dimensional theory have tangent bundles stabilized to
rank~$d+1$.  Here in the space theory they are stabilized to rank~$d$.  So to
get  a  bundle  of  rank~$d+1$  we  simply add  an  extra  rank  one  trivial
bundle~$\underline{\RR}$;  the  fiber  represent  time, which  is  visibly  a
product and does not mix with space.
 
 \subsubsection{Time-reversal symmetries}\label{subsubsec:4.2.3}

The general picture of symmetries in quantum mechanics
(\S\ref{subsubsec:2.4.3}) distinguishes time-preserving
vs.~time-reversing~\eqref{eq:99} from linear vs.~antilinear~\eqref{eq:98}.
Often, of course, they are equal dichotomies.  In that sense an accounting of
antilinearity suffices to account for time-reversal.  But more to the point,
as we consider theories of $H$-type (\S\ref{subsubsec:4.2.2}) in which there
is no explicit time, there is no need to track~\eqref{eq:99} other than
through antilinearity.

 \subsubsection{Orientation-reversal and unitarity}\label{subsubsec:4.2.4} 
 
This is to implement the unitarity~(\S\ref{subsec:2.8}).  The geometric
realization of the \emph{oriented} bordism category~$\bord n(\sF)$ is~$\Sigma
^nMTSO_n$, if $\sF$~consists of an orientation.  Thus orientation-reversal
induces an involution on~$\Sigma ^nMTSO_n$.  Introducing the notation $M^V$
for the Thom spectrum of the virtual vector bundle $V\to M$, we can summarize
~\eqref{eq:41} as 
  \begin{equation}\label{eq:129}
     \Sigma ^nMTSO_n = BSO_n{}_{\phantom{M}}^{\vtriv{\RR^n}-S(n)}, 
  \end{equation}
where $S(n)\to BSO_n$ is the universal rank~$n$ bundle.  Orientation-reversal
on rank~$n$ bundles is represented as the deck transformation of the double
cover 
  \begin{equation}\label{eq:130}
     BSO_n\longrightarrow  BO_n. 
  \end{equation}
There is an induced involution on the Thom spectrum~\eqref{eq:129}.   

In a unitary theory orientation-reversal maps to complex conjugation.
Complex conjugation on~$\Cx$ corresponds to the involution
  \begin{equation}\label{eq:132}
     z\longmapsto -\bar z 
  \end{equation}
on~$\CZ$ via exponentiation.  Let $H\Cx$ be the target of an invertible
unitary theory $\Sigma ^nMTSO_n\to \Sigma ^nH\Cx$.  This represents a twisted
cohomology class of~$\Sigma ^nMTO_n$.  The twisted Thom isomorphism
theorem~\eqref{eq:135} identifies this twisted cohomology group with 
  \begin{equation}\label{eq:137}
      H^n(BO_n;\widetilde{\CZ}), 
  \end{equation}
where the coefficients are twisted by the complex conjugation action (no
sign).  Note the short exact sequence
  \begin{equation}\label{eq:133}
     0\longrightarrow \ZZ\longrightarrow \CC\longrightarrow
     \CZ\longrightarrow 0 ,
  \end{equation}
which is equivariant for the $(-1)$-action on~$\ZZ$ and the
action~\eqref{eq:132} on the other two groups.  So when we map to~$H\ZZ$
or~$I\ZZ$ we use the $(-1)$-action in place of the action~\eqref{eq:132}
on~$H\CZ$ and~$I\CZ$.  We remark that the twisted Thom isomorphism involves
more complicated twistings for~$I\CZ$.

An analogous construction works if we replace~$SO_n$ with~$\Spin_n$ and
$O_n$ with~$\Pin_n$.  However, there are two choices for~$\Pin_n$.  If we
view~ $\Pin_n$ as embedded in the real Clifford algebra~$\Cliff_n$, then the
choices depend on the sign in the relation $\gamma _i^2=\pm1$ which
defines~$\Cliff_n$; see~\cite{ABS}.  So it seems there are two distinct
notions of unitarity for spin theories.  We will simply write~`$\Pin$' and
not investigate the distinction further in this paper.

We now sketch precisely how we implement the $(-1)$-involution on spectra.

  \begin{construction}[$(-1)$-action on spectra]\label{thm:48}
 Let $H\to\RP^{\infty}$ denote the real Hopf line bundle.  The Thom spectrum
$(\RP^{\infty})^{H-\otriv}$ of the reduced Hopf bundle is a bundle, or sheaf,
of spectra over~$\RP^{\infty}$ whose typical fiber is the sphere spectrum.
The holonomy around the nontrivial loop acts on the homotopy groups~$\pi
_{\bullet }S^0$ of the sphere as multiplication by~$-1$.  In other words, for
each~$q$ the homotopy groups make a local system over~$\RP^{\infty}$ with
fiber~$\pi _qS^0$ and holonomy~$-1$.  Any other spectrum~$T$ is a module
over~$S^0$ and there is an induced bundle of spectra over~$\RP^{\infty}$ with
fiber~$T$; the holonomy acts on~$\pi _{\bullet }T$ as multiplication by~$-1$.

See~\cite{ABGHR} for a precise definitions, statements, and proofs of
these assertions about bundles of spectra and the twisted Thom isomorphism.
  \end{construction}

 \subsubsection{Interlude on unitarity}\label{subsubsec:4.2.5} 
 
See~\S\ref{subsec:2.8} for a general discussion of unitarity,
and~\S\ref{subsubsec:4.2.4} for its implementation in invertible theories.
Here we compute and interpret groups of low-dimensional oriented unitary
theories, using~\eqref{eq:137}. 
 
The group of $n=1$~spacetime dimensional oriented theories with
Eilenberg-MacLane target is 
  \begin{equation}\label{eq:139}
     [\Sigma ^1MTSO_1,\Sigma ^1H\CZ]\cong H^1(BSO_1;\CZ)=0; 
  \end{equation}
there is only the trivial theory~$F$ which assigns the trivial line~$L=\CC$ to
the oriented point~$\pt_+$ and the number~$1$ to the oriented circle.  This
theory is clearly unitariz\emph{able}.  A unitary structure is data, and
according to~\eqref{eq:137} the group of isomorphism classes of unitarity
data is
  \begin{equation}\label{eq:138}
     H^1(BO_1;\widetilde{\CZ})\cong H^1(\RP^{\infty};\widetilde{\CZ})\cong
     \zt, 
  \end{equation}
where the local system~$\widetilde{\CZ}\to BO_1$ has holonomy~$-1$ around the
nontrivial loop.\footnote{Here are two methods to compute~\eqref{eq:138}:
(1)~use a cell structure on~$\RP^{\infty}$ with a single cell in each
dimension, trivialize the local system over each cell, and derive the cochain
complex
  \begin{equation}\label{eq:140}
      \CZ\xrightarrow{\;1-c\;}\CZ\xrightarrow{\;1+c\;}\CZ\xrightarrow{\;1-c\;}\cdots 
  \end{equation}
in which $c$~is complex conjugation; (2) use the short exact coefficient
sequence ~\eqref{eq:133} where $\CC$, $\CZ$~are local systems with holonomy
$-1$ and $\ZZ$~is untwisted.}  The two equivalence classes of unitarity data
on~$F$ are easily explained.  Writing~$F(\pt_+)=L$ the unitarity data
provides an isomorphism $F(\pt_-)\xrightarrow{\;\cong \;}\overline{L}$.  The
oriented interval with both endpoints incoming is a bordism $\pt_-\amalg
\pt_+\to \emptyset ^0$ and applying~$F$ and the unitarity isomorphism we
obtain a nondegenerate hermitian form $h\:\overline{L}\otimes L\to\CC$.  Such
a form is either positive or negative, which accounts for~\eqref{eq:138}. 

  \begin{remark}[]\label{thm:49}
 The \emph{positivity} condition in a unitary theory means we should exclude
the negative form, so only allow a unique isomorphism class of unitarity
data.  The group~\eqref{eq:137} only implements the correct action of
orientation-reversal, not the positivity, but I do not know how to pick out
the subgroup corresponding to positive unitarity data in higher dimensions. 
  \end{remark}

As a further illustration we consider $n=2$~spacetime dimensional oriented
theories with Eilenberg-MacLane target.  These theories are discussed in
Example~\ref{thm:4} and Example~\ref{thm:13}.  The group of oriented theories
is  
  \begin{equation}\label{eq:141}
     [\Sigma ^2MTSO_2,\Sigma ^2H\CZ]\cong H^2(BSO_2;\CZ)\cong \CZ, 
  \end{equation}
and the theory~$F_z$ corresponding to~$z\in \CZ$ has 
  \begin{equation}\label{eq:144}
     F_z(X)=\lambda ^{\Euler(X)} 
  \end{equation}
for a closed oriented 2-manifold~$X$, where $\lambda =e^{2\pi iz}\in \Cx$.
Since $X$~has an orientation-reversing involution, only those theories with
$F_z(X)$~real can possibly be unitarizable, which forces $z\in i\RR$.  That
is consistent with the computation of ~\eqref{eq:137} for~$n=2$:
  \begin{equation}\label{eq:142}
     H^2(BO_2;\widetilde{\CZ})\cong \zt\oplus i\RR , 
  \end{equation}
together with the computation of the map
  \begin{equation}\label{eq:143}
     H^2(BO_2;\widetilde{\CZ})\longrightarrow H^2(BSO_2;\CZ) ,
  \end{equation}
which includes $\frac 12\ZZ/\ZZ\oplus i\RR\hookrightarrow \CZ$.  For
$z=1/2-ix$ we have~\eqref{eq:144} with $\lambda =-e^{2\pi x}$, but since the
Euler number is even the numerical invariants are positive.  The line
$F_z(\cir)=L$ attached to the oriented circle has a real structure, since
$\cir$~has an orientation-reversing involution (reflection), and the cylinder
with both boundaries incoming gives a nondegenerate real symmetric bilinear
form $h\:L\otimes L\to\CC$.  Glue 2-disks to each incoming boundary component
to deduce that $h(\ell ,\ell )=\lambda ^2$~is positive.  The conclusion is
that all theories with parameter~$\lambda \in \RR^\times \subset \Cx$ are
uniquely unitarizable.

\section{SRE phases}\label{sec:5}
% lastsubsec@  3

There is one more preliminary before we can state our proposed topological
invariant of short-range entangled (SRE) phases: we must specify our choice
of target spectrum for classifying long-range effective topological theories.
Recall that in general to define a topological field theory~\eqref{eq:5} we
need to specify a target higher category~$\sC$, but for an invertible theory
we need the much weaker information of the sub-groupoid~$\sC^{\times }$ of
invertibles; see~\eqref{eq:15}.  While concrete arguments give information
about the highest few homotopy groups of~$\sC^\times $, we can only guess at
the structure lower down, which is increasingly relevant as the dimension of
the theory increases.  In the general fermionic case we take a universal
choice, the \emph{Brown-Comenetz dual to the sphere spectrum}.  In the
bosonic case we have solid arguments to determine the ``top part'' of the
spectrum, but after that only sparse data points.  For that reason we take
the same target spectrum as in the fermionic case---after all, theories with
only bosons are special cases of general theories---though in space
dimension~$d=1$ the ``top part'', an \emph{Eilenberg-MacLane spectrum}, is
all that is relevant and so we use it instead.  We explain these choices
in~\S\ref{subsec:5.1}.  Our main proposal is stated precisely
in~\S\ref{subsec:5.2}.  In~\S\ref{subsec:5.3} we work out the relationship to
the group cohomology classifications in the literature~\cite{CGLW},
\cite{GW}.

  \subsection{Target spectra}\label{subsec:5.1}

The discussions in Example~\ref{thm:13} and especially \S\ref{subsec:3.3}(2)
are relevant here.  

\subsubsection{Preliminary: duals to the sphere
spectrum}\label{subsubsec:5.1.4} 
 
Let $A$~be an abelian group, which might be discrete or have a topology.  (In
the latter case there is an additional hypothesis: $A$~is locally compact.)
Examples: $\zmod n$, $\ZZ$, $\RZ$.  There are two notions of dual group we
might consider.  The first is the \emph{Pontrjagin dual}, which is the group
of continuous homomorphisms $A\to\RZ$.  The Pontrjagin dual of~$\zmod n$ is
isomorphic to~$\zmod n$, the Pontrjagin dual of~$\ZZ$ is isomorphic to~$\RZ$,
and the Pontrjagin dual of~$\RZ$ is isomorphic to~$\ZZ$.  On the other hand,
we can consider an integral dual $\Hom(A,\ZZ)$.  However, the naive
interpretation is not so well behaved.  For example, if $A=\zmod n$ there are
no nonzero homomorphisms $A\to\ZZ$.  The resolution is to consider
$\Hom(-,\ZZ)$ in the \emph{derived} sense, which means that we replace~$A$ by
a free chain complex whose homology in degree~0 is~$A$ and then
compute~$\Hom$.  The result is called~$\Ext^{\bullet }(A,\ZZ)$ and is the
``correct'' integral dual.  For $A=\zmod n$ the only nonzero group is in
degree~$-1$, whereas for~$A=\ZZ$ the only nonzero group is in degree~0.

Both notions of duality exist in the world of spectra; see~\cite{An},
\cite[Appendix~B]{HS}, \cite[Appendix~B]{FMS}, \cite{HeSt} for precise
definitions and discussion..  Recall that a spectrum~$T_{\bullet }$ has an
associated $\ZZ$-graded abelian group~$\pi _{\bullet }T$.  For the sphere
spectrum~$S^0$ the first several homotopy groups are
  \begin{equation}\label{eq:114}
     \pi \mstrut _{\{0,1,2,\dots \}}S^0 \cong
     \{\ZZ\,,\,\zmod2\,,\,\zmod2\,,\,\zmod{24}\,,\,0\,,\,0\,,\,\dots \} 
  \end{equation}
The analog of the Pontrjagin dual is the \emph{Brown-Comenetz dual}.  For the
sphere we denote it as~$I\CZ$.  (We replace~$\RZ$ with~$\CZ$; in topology it
is often $\QQ/\ZZ$ instead.)  Its homotopy groups are the Pontrjagin dual
groups to~\eqref{eq:114}: 
  \begin{equation}\label{eq:115}
     \pi \mstrut _{\{0,-1,-2,\dots \}}I\CZ \cong
     \{\CZ\,,\,\zmod2\,,\,\zmod2\,,\,\zmod{24}\,,\,0\,,\,0\,,\,\dots \} 
  \end{equation}
Note carefully the indexing difference on the left hand side
between~\eqref{eq:115} and~\eqref{eq:114}.  The analog of the integral dual
is the \emph{Anderson dual}.  For the sphere we denote it as~$I\ZZ$.  Its
homotopy groups are the integral dual to~\eqref{eq:114}; the torsion groups
are degree shifted:
  \begin{equation}\label{eq:116}
     \pi \mstrut _{\{0,-1,-2,\dots \}}I\ZZ \cong
     \{\ZZ\,,\,0\,,\,\zmod2\,,\,\zmod2\,,\,\zmod{24}\,,\,0\,,\,0\,,\,\dots \} 
  \end{equation}
There is a fiber sequence $I\ZZ\to I\CC\to I\CZ$ and a corresponding long
exact sequence of homotopy groups.  Note $I\CC\simeq H\CC$ is an
Eilenberg-MacLane spectrum with a single nonzero homotopy group.

These spectra enjoy universal properties which make them universal targets
for invertible field theories.  Recall that the notation `$[X,X']$' is used
for the abelian group of homotopy classes of maps $X\to X'$ between spectra.
If $X$~is any spectrum, then
  \begin{equation}\label{eq:121}
     [X,\Sigma ^nI\CZ] = I\CZ^n(X)\cong \Hom(\pi _nX,\CZ). 
  \end{equation}
(To compute the $I\CZ$~cohomology of a space~$Y$, set $X=\Sigma ^{\infty}Y$
the suspension spectrum, so $\pi _nX$~is the
$n^{\textnormal{th}}$~\emph{stable} homotopy group of~$Y$.)  Thus an
invertible $n$-spacetime dimensional theory whose values on closed
$n$-manifolds are numbers in~$\CZ$ pushes uniquely to a field theory with
values in~$I\CZ$, and any theory with target~$I\CZ$ is determined by its
numerical values on $n$-manifolds.  For maps from any spectrum~$X$ into~$I\ZZ$ 
there is a short exact sequence
  \begin{equation}\label{eq:122}
     0\longrightarrow \Ext^1(\pi _{n-1}X,\ZZ)\longrightarrow
     I\ZZ^n(X)\longrightarrow \Hom(\pi _nX,\ZZ)\longrightarrow 0 
  \end{equation}
which is split, but not canonically.

\subsubsection{Bosonic theories}\label{subsubsec:5.1.1}

We continue with theories~$F$ defined in $d$-space dimensions and assume
first that $F$~ is a bosonic theory.  The target is a
$(d+1)$-category~$\sC_{\textnormal{bose}}$, but since $F$~is invertible we
need only the groupoid~$\sC_{\textnormal{bose}}^\times $.  Standard quantum
mechanics dictates that $F(Y)$~is a complex vector space for any closed
$d$-manifold~$Y$, and since $F$~is invertible it must be 1-dimensional.
Furthermore, a diffeomorphism of~$Y$ acts as an invertible map $F(Y)\to
F(Y)$, which is multiplication by a scalar in\footnote{In a unitary theory
this scalar has unit norm, but the partition function of a general
$(d+1)$-manifold need not.}~$\Cx$.  Thus the truncation~$\Omega
^d\sC_{\textnormal{bose}}^\times $ of the target groupoid to the top two
levels is the groupoid~$\Line$ of complex lines.  A crucial point, discussed
in~\S\ref{subsec:3.3}(2), is that we use the \emph{continuous} topology on
the morphism spaces, in particular on~$\Cx$.  This gives information about
some homotopy groups of~$\sCx_{\textnormal{bose}}$:
  \begin{equation}\label{eq:52}
     \pi _d\sCx_{\textnormal{bose}}=0,\qquad \pi _{d+1}\sCx_{\textnormal{bose}}=0,\qquad \pi _{d+2}\sCx_{\textnormal{bose}}\cong \ZZ. 
  \end{equation}
The first equality expresses the fact that any two lines are isomorphic; the
latter two that $\Cx$~is connected with infinite cyclic fundamental group.
Furthermore, $\pi _k\sCx_{\textnormal{bose}}=0$ for~$k>d+2$ since there are
no non-identity $k$-morphisms for~$k>d+1$.  

It remains to determine the lower homotopy groups.  We have quite solid
information about the next homotopy group down, as we know the nature of the
higher categorical invariant usually attached to manifolds of dimension~$d-1$
by the theory~$F$.  For example, a standard choice is to assign a linear
category to a closed $(d-1)$-manifold, and since all invertible linear
categories are isomorphic deduce
  \begin{equation}\label{eq:54}
     \pi _{d-1}\sCx_{\textnormal{bose}}=0. 
  \end{equation}
An alternative choice is to assign an invertible complex algebra to a closed
$(d-1)$-manifold, and the triviality of the Brauer group of~$\CC$ leads to
the same conclusion~\eqref{eq:54}.  At this stage we have reproduced the
first part of Example~\ref{thm:13}, in which the target is a 2-category and
there are no further homotopy groups.  Note that \eqref{eq:52}~and
\eqref{eq:54}~amount to an Eilenberg MacLane spectrum~$\Sigma ^{d+2}H\ZZ$: a
single nonzero homotopy group.  This is the full story for~$d\le 1$.

What is perhaps surprising is that we postulate a nonzero homotopy group at
the next level down: 
  \begin{equation}\label{eq:117}
     \pi _{d-2}\sCx_{\textnormal{bose}}\not= 0. 
  \end{equation}
One rationale for ~\eqref{eq:117} is the 4-dimensional integral invertible
oriented extended topological field theory
  \begin{equation}\label{eq:119}
     \sigma \:\bord 4(\textnormal{orientation})\longrightarrow
     \sCx_{\textnormal{bose}}(d=2)
  \end{equation}
for which the integer invariant~$\sigma (W)$ of a closed oriented
4-manifold~$W$ is its signature~$\Sign(W)$.  In its analytic incarnation this
invariant involves only differential forms on~$W$, not spinor fields, so in
that sense is bosonic.  The theory factors through a map
  \begin{equation}\label{eq:120}
     \Sigma ^4MTSO_4\longrightarrow \Sigma ^4KO 
  \end{equation}
which can be viewed as the universal symbol~\cite[\S3]{FHT} of the signature
operator in dimension~4.  The relevant stretch of homotopy groups of the
target spectrum in~\eqref{eq:120} is 
  \begin{equation}\label{eq:131}
     \pi \mstrut _{\{0,1,2,3,4\}}\cong \{\ZZ\,,\,0\,,\,0\,,\,0\,,\,\ZZ\, \} .
  \end{equation}
This theory does \emph{not} factor through $\Sigma ^4H\ZZ$, so the
bottom~$\ZZ$ in~\eqref{eq:131} cannot be replaced by~$0$.  Another piece of
evidence for at least a nonzero homotopy group in this spot ($\pi
_{d-2}\sCx_{\textnormal{bose}}$) comes from \emph{conformal nets}.  These are
a possible target for 3-dimensional theories with numerical invariants
in~$\CZ$, and conjecturally not all invertible conformal nets are isomorphic.
(See~\cite{DH} for a discussion of conformal nets in this context.)

We do not have any information about lower homotopy groups.  So we could take
the shifted Postnikov truncation $\Sigma ^{d-6}KO\langle 4,...,8 \rangle $ as
a reasonable choice of target spectrum~$\sCx_{\textnormal{bose}}$.  But the
$\ZZ$~at the bottom of the spectrum is overkill---a smaller torsion group
will do---and so instead for theories in dimension~$d\ge2$ we use the
universal choice, the Anderson dual of the sphere.

  \begin{hypothesis}[]\label{thm:20}
 The target spectrum for classifying long-range effective theories of a
$d$-space dimensional bosonic system is~$\Sigma ^{3}H\ZZ$ for~$d=1$
and~$\Sigma ^{d+2}I\ZZ$ for~$d\ge2$.
  \end{hypothesis}

  \begin{remark}[]\label{thm:21}
 The classification of anomalies has similar target spectra, but as indicated
in~\S\ref{subsec:3.3}(2) we use instead the discrete topology on~$\Cx$.  Thus
for~$d\ge2$ the target which classifies anomaly theories is $\Sigma
^{d+2}I\CZ$; for~$d=1$ it is $\Sigma ^3H\CZ$.
  \end{remark}

\subsubsection{Fermionic theories}\label{subsubsec:5.1.2}

The hypothesis for theories with fermions is different.  Namely, the
dichotomy between bosonic and fermionic states in quantum mechanical systems
is encoded by stipulating that the quantum Hilbert space be $\zt$-graded:
states with even grading are bosonic and states with odd grading are
fermionic.  That persists in the long-range effective theory: vacua are
either bosonic or fermionic.  So the complex lines~$F(Y)$ in an invertible
long-rang theory are either even or odd.  The existence of distinct
isomorphism classes of $\zt$-graded lines modifies~\eqref{eq:52}:
  \begin{equation}\label{eq:53}
     \pi _d\sCfx\cong \zt,\qquad \pi _{d+1}\sCfx=0,\qquad \pi
     _{d+2}\sCfx\cong \ZZ.  
  \end{equation}
The $k$-invariant is nonzero; it is the composition~$\beta _{\ZZ}\circ Sq^2$
of the integer Bockstein and the Steenrod square.

As in the bosonic case higher homotopy groups vanish.  But now we expect many
nontrivial lower homotopy groups.  For example, we expect \eqref{eq:54}~to be
replaced by
  \begin{equation}\label{eq:55}
     \pi _{d-1}\sCfx\cong \zt 
  \end{equation}
since the super Brauer group of invertible $\zt$-graded complex algebras has
two elements (represented by even and odd complex Clifford algebras).  The
idea that modules over Clifford algebras may be used as the state space of a
quantum system is familiar in condensed matter theory; see also~\cite{F5}.
From the super Brauer category we compute how the three Eilenberg-MacLane
spectra fit together, though we do not do so here.  We can gather information
about~$\pi _{d-2}\sCfx$ by arguments analogous to those
in~\S\ref{subsubsec:5.1.1}.  For example, there is an invertible theory
  \begin{equation}\label{eq:126}
     \Sigma ^4MT\Spin_4\longrightarrow KO 
  \end{equation}
which assigns the $\Ahat$-genus to a closed spin 4-manifold.  (The universal
symbol is quaternionic, which explains the absence of shift in~$KO$.)  This
suggests that $\pi _{d-2}\sCfx\not= 0$.  Invertible fermionic conformal nets
are another clue, but the only knowledge is conjectural.

Therefore, by fiat really, we make a \emph{universal} choice for the target
spectrum, namely the {Anderson spectrum}~$\IZ$.  Notice that its first four
homotopy groups ~\eqref{eq:116} agree with ~\eqref{eq:53} and~\eqref{eq:55}
and more precisely the Postnikov truncations are equivalent.

  \begin{hypothesis}[]\label{thm:22}
  The target spectrum for classifying long-range effective theories of a
$d$-space dimensional fermionic system is~$\Sigma ^{d+2}\IZ$.
  \end{hypothesis}

  \begin{remark}[]\label{thm:23}
 We use the shift~$\Sigma ^{d+2}\ICZ$ as a target to classify anomaly
theories.
  \end{remark}

\subsubsection{Antilinear symmetries}\label{subsubsec:5.1.3} 
 
As explained in~\S\ref{subsubsec:2.4.3} antilinear symmetries lead to
anomalous field theories after gauging.  For invertible theories we account
for the anomaly using twisted cohomology.   
 
We implement complex conjugation on the target spectra by the universal
$(-1)$-action defined in Construction~\ref{thm:48}.  Consider, for example,
the Eilenberg-MacLane spectrum ~$H\ZZ$.  In degree~1 a cohomology class is
represented by a map to~$\Cx$, and the $(-1)$-action is complex conjugation
on~$\Cx$.  The classifying space for a degree~1 class is ~$\Cx$, and
Construction~\ref{thm:48} gives a fiber bundle (not principal!)  with
fiber~$\Cx$ over~$\RP^{\infty}$ whose holonomy acts as complex conjugation
on~$\Cx$.  In degree~2, as just explained, a cohomology class is represented
by a complex line bundle and this action is complex conjugation on complex
line bundles.  In degree~3 there is a similar story with bundles of complex
algebras, and it is reasonable to extend this picture to all degrees.  The
same story applies to~$H\CZ$, which one can view as flat elements in~$H\ZZ$
(with a degree shift).  For example, an element in degree~0 in~$H\CZ$ is a
locally constant map to~$\Cx$.  Similar considerations apply to other target
spectra.  The top nonzero homotopy group of ~$\IZ$ (respectively~$\ICZ$) is
the same as that of~$H\ZZ$ (respectively~$H\CZ$), and the $(-1)$-action is
trivial on the $\zt$ homotopy groups in~\eqref{eq:53} and~\eqref{eq:55} is
trivial.  Our tentative grasp on lower homotopy groups puts further
justification beyond reach.

  \subsection{Topological invariants of short-range entangled
  phases}\label{subsec:5.2}

We propose a home for long-range effective topological field theories of
gapped systems with short-range entanglement.  We do not know if the map from
the microscopic phases to the deformation classes of field theories is either
injective or surjective.  Regardless, the evidence presented in the remainder
of the paper suggests that it is a very effective invariant.

Assume the theory is $d$-space dimensional and has a global symmetry
group~$G$, which is a Lie group equipped with a smooth homomorphism
  \begin{equation}\label{eq:56}
     \phi \:G\longrightarrow \pmo=\{\pm1\}
  \end{equation}
which encodes linearity vs.~antilinearity: an element ~$g\in G$ with~$\phi
(g)=1$ acts linearly and an element~ $g\in G$ with~$\phi (g)=-1$ acts
antilinearly.  Recall our assumption, stated before~\eqref{eq:35}, that there
is a $G$-equivariant extension of the long-range effective theory
(\emph{non-anomalous case}) or that there is an anomaly and an anomalous
extension (\emph{anomalous case}).

The hypotheses underlying the proposal are stated in~\S\ref{subsec:3.3}
and~\S\ref{subsec:5.1}.  We use the twisted Thom isomorphism~\eqref{eq:136}
(and its variations for $SO$~and $\Spin$ replacing~$O$).

\subsubsection{Bosonic theories}\label{subsubsec:5.2.1}

According to Hypothesis~\ref{thm:20} the target spectrum is~$\Sigma
^{d+2}H\ZZ$ for~$d=1$ and $\Sigma ^{d+2}I\ZZ$ for~$d\ge2$.  We use the
notation ~`$\Sigma ^{d+2}T\ZZ$' for this target spectrum.  Note that
$T\ZZ$~is a ring spectrum.  Let $\tTZ$~denote the Thom twisting of the
ring spectrum~$T\ZZ$ associated to the virtual bundle (see~\eqref{eq:129})
  \begin{equation}\label{eq:145}
     \vtriv{\RR^d}-S(d)\longrightarrow BSO_d,
  \end{equation}
and $\otTZ$~the analogous Thom twisting for $O_d$~in place of~$SO_d$.  Let
$\wTZ$~denote the $(-1)$-twist of~$T\ZZ$ (Construction~\ref{thm:48})
associated to the double cover $BSO_d\to BO_d$.  The
homomorphism~\eqref{eq:56} determines a $(-1)$-twist of~$T\ZZ$ associated to
the double cover\footnote{If $\phi $~is identically~$+1$, then $G_0=G$ and
$\fTZ=0$~is trivial.} $BG_0\to BG$, where $G_0=\ker\phi $; we denote it~$\fTZ$.
Degree shifts are Thom twistings of trivial bundles, whence sums of twistings
and degrees are defined.

  \begin{proposal}[bosonic theories]\label{thm:24}
  Short-range entangled phases of a $d$-space dimensional bosonic theory with
global symmetry group~$G$ map to the abelian group
  \begin{equation}\label{eq:57}
     \SREb=T\ZZ\mlstrut^{\tTZ+\fTZ +d+2}(BSO_d\times BG)
  \end{equation}
in the non-anomalous case.  The unitarizable theories lie in the image of the
map\footnote{A more precise proposal would identify the subgroup of the
domain of~\eqref{eq:146} representing unitary theories which satisfy
positivity.  As I do not know how to do
this---see~\S\ref{subsubsec:4.2.5}---we settle for a weaker formulation.}
  \begin{equation}\label{eq:146}
     T\ZZ\mlstrut^{\otTZ+\wTZ+\fTZ +d+2}(BO_d\times BG)\longrightarrow
     T\ZZ\mlstrut^{\tTZ+\fTZ +d+2}(BSO_d\times BG). 
  \end{equation}
Anomalies are classified by the abelian group
  \begin{equation}\label{eq:58}
     \Anob=T\CZ\mlstrut^{\tTZ+\fTZ +d+2}(BSO_d\times BG)
  \end{equation}
and the anomalous theories with fixed anomaly map to a torsor for the abelian
group~\eqref{eq:57}.
  \end{proposal}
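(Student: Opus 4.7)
The plan is to combine the physical hypotheses of \S\ref{subsec:3.3} with the choice of target spectrum from Hypothesis~\ref{thm:20}, thereby converting the classification problem into a computation of homotopy classes of maps of spectra, and then to apply the twisted Thom isomorphism \eqref{eq:136} to rewrite that group as twisted cohomology of $BSO_d\times BG$.

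First I would collect the structural data. By \S\ref{subsec:3.3}(1),(4),(5),(6) the long-range effective theory is a fully extended invertible $H$-type theory whose background fields $\tsF$ consist of an orientation together with a principal $G$-bundle (without connection data, per \S\ref{subsec:2.7}). By \S\ref{subsec:2.6} such an invertible theory factors through the symmetric monoidal $\infty$-groupoid obtained by inverting all morphisms in $\Bord_d(\tsF)$, and the resulting data is equivalent to a map of spectra. Applying the $(\infty,d)$-version of the Galatius--Madsen--Tillmann--Weiss theorem cited there, together with the $H$-type shift of \S\ref{subsubsec:4.2.2} and the smash product with $BG_+$ of \S\ref{subsubsec:4.2.1}, I would identify the source as $\Sigma^d MTSO_d\wedge BG_+$. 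Hypothesis~\ref{thm:20} identifies the target as $\Sigma^{d+2}T\ZZ$, so the classification becomes the abelian group
\begin{equation*}
[\Sigma^d MTSO_d \wedge BG_+,\; \Sigma^{d+2}T\ZZ].
\end{equation*}
The twisted Thom isomorphism \eqref{eq:136}, specialized to the $SO$-case and smashed trivially over the $BG$ factor, rewrites this as $T\ZZ^{\tTZ+d+2}(BSO_d\times BG)$.

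Next I would incorporate antilinear symmetries following \S\ref{subsubsec:2.4.3} and \S\ref{subsubsec:5.1.3}: the homomorphism $\phi\:G\to\pmo$ pulls back the universal $(-1)$-action of Construction~\ref{thm:48} along the double cover $BG_0\to BG$, twisting the target spectrum by the class $\fTZ$ and producing the claimed formula \eqref{eq:57}. For unitarizability (\S\ref{subsec:2.8}, \S\ref{subsubsec:4.2.4}) a bosonic theory must extend compatibly across orientation-reversal with complex conjugation on the target; such extensions are maps out of $\Sigma^d MTO_d\wedge BG_+$ into $\Sigma^{d+2}T\ZZ$ with both the Thom twist $\otTZ$ and the orientation-reversing $(-1)$-twist $\wTZ$ in effect, yielding the restriction map \eqref{eq:146} after another application of \eqref{eq:136}. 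The anomaly formula \eqref{eq:58} follows from exactly the same argument with $T\ZZ$ replaced by $T\CZ$, per Remark~\ref{thm:21} and \S\ref{subsec:3.3}(3), since anomaly theories are classified using the discrete topology on $\Cx$. The torsor structure on anomalous theories with a fixed anomaly $\alpha$ is a formal consequence of \S\ref{subsec:2.3}: any two relative theories $\bo\to\ta$ differ by tensoring with an absolute invertible theory, and absolute invertible theories comprise precisely the group \eqref{eq:57}.

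The hard part is not the homotopy-theoretic rewriting but the very first step: justifying from the microscopic hypotheses of \S\ref{subsec:3.3} that the target $\infty$-groupoid $\sCx_{\textnormal{bose}}$ for a long-range bosonic theory in dimension $d\ge 2$ really has the spectrum type prescribed by Hypothesis~\ref{thm:20}, and in particular the nontrivial deep homotopy groups indicated in \eqref{eq:117}. This is precisely the content the paper explicitly declines to prove, replacing it by a hypothesis; a complete proof would first require axiomatizing the higher-categorical invariants that a short-range entangled phase assigns to low-dimensional strata and verifying that they realize the Anderson-dual of the sphere spectrum up to the relevant Postnikov stage.
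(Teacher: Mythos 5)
Your proposal is correct and follows essentially the same route as the paper: the hypotheses of \S\ref{subsec:3.3} plus invertibility reduce the problem to homotopy classes of spectrum maps $\Sigma^d MTSO_d\wedge BG_+\to\Sigma^{d+2}T\ZZ$ via the groupoid completion of \S\ref{subsec:2.6} and the Madsen--Tillmann identification of \S\ref{sec:4}, after which the twisted Thom isomorphism~\eqref{eq:136} and the $(-1)$-twists of Construction~\ref{thm:48} yield \eqref{eq:57}, \eqref{eq:146}, and \eqref{eq:58}, with the torsor statement following formally from \S\ref{subsec:2.3}. You also correctly locate the irreducible gap in Hypothesis~\ref{thm:20} itself, which the paper likewise treats as an unproven input rather than a consequence.
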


\noindent
 An anomaly theory has spacetime dimension~$d+2$, but here is only defined on
manifolds of dimension~$\le d$; this explains the degrees in~\eqref{eq:58}.
The last assertion is that any two choices of anomalous theories with the
same anomaly are related by tensoring with a non-anomalous theory.

\subsubsection{Fermionic theories}\label{subsubsec:5.2.2}

The proposal for fermionic theories is similar: we simply swap out the
Eilenberg-MacLane spectra we used in the $d=1$~bosonic case for the Anderson
and Brown-Comenetz spectra (\S\ref{subsubsec:5.1.2}) and assume all manifolds
are spin in addition to being oriented.  The corresponding Thom twistings
of~\eqref{eq:145} are denoted~$\tIZ$ and~$\otIZ$; the $(-1)$-twisting
of~$\IZ$ associated to the double cover $BSO_d\to BO_d$ is~$\wIZ$; and the
$(-1)$-twisting of~$\IZ$ associated to the double cover $BG_0\to BG$
is~$\fIZ$.

  \begin{proposal}[fermionic theories]\label{thm:25}
  Short-range entangled phases of a $d$-space dimensional fermionic theory
with global symmetry group~$G$ map to the abelian group
  \begin{equation}\label{eq:61}
     \SREf=\IZ\mlstrut^{\tIZ+\fIZ +d+2}(B\Spin_d\times BG)
  \end{equation}
in the non-anomalous case.  The unitarizable theories lie in the image of the
map  
  \begin{equation}\label{eq:147}
     \IZ\mlstrut^{\otIZ+\wIZ+\fIZ +d+2}(B\Pin_d\times BG)\longrightarrow
     \IZ\mlstrut^{\tIZ+\fIZ +d+2}(B\Spin_d\times BG). 
  \end{equation}
Anomalies are classified by the abelian group
  \begin{equation}\label{eq:62}
     \Anof=\ICZ\mlstrut^{\tIZ+\fIZ +d+2}(B\Spin_d\times BG)
  \end{equation}
and the anomalous theories with fixed anomaly map to a torsor for the abelian
group~\eqref{eq:61}.
  \end{proposal}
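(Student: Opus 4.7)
The plan is to assemble Proposal~\ref{thm:25} from the ingredients set up in \S\S\ref{subsec:2.6}, \ref{subsec:3.3}, \ref{subsec:3.4}, \ref{sec:4}, and \ref{subsec:5.1}. By the short-range entanglement assumption~\S\ref{subsec:3.3}(\ref{srehyp}), the long-range effective theory $\tF$ is an invertible fully extended topological field theory on manifolds equipped with a spin structure (\S\ref{subsec:3.3}(4)) and a principal $G$-bundle (\S\ref{subsec:3.4}); in the anomalous case it is a theory relative to an invertible anomaly theory~$\alpha$ in one higher dimension. By the discussion of \S\ref{subsec:2.6}, invertibility lets $\tF$ factor through the geometric realization of the bordism multicategory and be presented as a map of spectra $\widetilde F\:\|\bord{}(\tsF)\|\to\sCfx$.

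Next I would identify the domain. Applying the H-type modification of Galatius--Madsen--Tillmann--Weiss (\S\ref{subsec:4.1} together with \S\ref{subsubsec:4.2.2}) to spin tangential structure with a background $G$-bundle (\S\ref{subsubsec:4.2.1}), the relevant geometric realization is $\Sigma^{d}MT\Spin_d\wedge BG_+$. Pairing with the target $\Sigma^{d+2}\IZ$ from Hypothesis~\ref{thm:22}, the abelian group of deformation classes becomes $[\Sigma^{d}MT\Spin_d\wedge BG_+,\,\Sigma^{d+2}\IZ]$, and the spin analogue of the twisted Thom isomorphism~\eqref{eq:136} converts this into the $\IZ$-cohomology of $B\Spin_d\times BG$ with the Thom twisting $\tIZ$ built from the virtual bundle~\eqref{eq:145} (with $\Spin_d$ in place of $SO_d$).

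To install antilinear symmetries I would follow \S\ref{subsubsec:2.4.3} and \S\ref{subsubsec:5.1.3}: the homomorphism $\phi\:G\to\pmo$ classifies a $(-1)$-action on $\IZ$ pulled back from the double cover $BG_0\to BG$, and Construction~\ref{thm:48} shows this is exactly the $\fIZ$-twisting, whose insertion yields~\eqref{eq:61}. Unitarity, per \S\ref{subsec:2.8} and \S\ref{subsubsec:4.2.4}, demands that orientation-reversal be intertwined with the complex-conjugation $(-1)$-involution on the target; on spin manifolds this extends the structure group to $\Pin$ and adds the Thom twist $\otIZ$ and the $(-1)$-twist $\wIZ$ along $B\Spin_d\to B\Pin_d$, giving precisely the restriction map~\eqref{eq:147}. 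The anomaly statement is then obtained by replacing $\IZ$ with the Brown--Comenetz dual $\ICZ$ throughout, as dictated by \S\ref{subsec:3.3}(3) (discrete topology on $\Cx$), yielding~\eqref{eq:62}. For the torsor assertion, note that if $\tF_1,\tF_2$ are two anomalous extensions with the same anomaly $\alpha$, then $\tF_1\otimes \tF_2^{-1}$ is non-anomalous and invertible, so the set of anomalous extensions with fixed anomaly is acted on freely and transitively by $\SREf$ via tensor product.

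The hard part is that this is avowedly not a theorem, as the paper itself emphasizes before \S\ref{subsec:5.2}: there is no derivation from microscopic data to the deformation class of an effective topological theory, and the principal leap is Hypothesis~\ref{thm:22}, which fixes the target spectrum to the universal Anderson dual~$\IZ$ on the basis of only the top few homotopy groups together with evidence from the $\hat A$-genus theory~\eqref{eq:126} and invertible fermionic conformal nets. Subsidiary gaps are the absence of positivity in the unitarity condition (see Remark~\ref{thm:49}), the unresolved choice between the two $\Pin$ groups in \S\ref{subsubsec:4.2.4}, and the implicit passage from an $H$-type theory to a genuine $(d+1)$-dimensional invertible theory with the claimed stable tangential structure (Remark~\ref{thm:17}). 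Each of these would need independent microscopic justification to upgrade the proposal to a theorem; the present argument only shows that, granting these modeling choices, the homotopy-theoretic bookkeeping assembles unambiguously into the stated cohomology groups.
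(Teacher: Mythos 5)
Your reconstruction follows the paper's own route essentially step for step: invertibility from the SRE hypothesis, the $H$-type Madsen--Tillmann spectrum $\Sigma^{d}MT\Spin_d\wedge BG_+$ as domain via Galatius--Madsen--Tillmann--Weiss, the Anderson dual target from Hypothesis~\ref{thm:22}, the twisted Thom isomorphism to land in $\IZ^{\tIZ+\fIZ+d+2}(B\Spin_d\times BG)$, the $(-1)$-twists for antilinearity and unitarity, the Brown--Comenetz replacement for anomalies, and the tensor-product argument for the torsor statement. You also correctly identify the same leaps of faith (target spectrum, positivity, choice of $\Pin$, the $H$-type caveat) that the paper itself flags, so there is nothing to add.
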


\subsubsection{Symmetry protected topological phases}\label{subsubsec:5.2.3}

Now we address the question of \emph{symmetry protected topological} (SPT)
phases.  The ``symmetry protection'' means that the effective topological
field theory~$F$ is trivial when the symmetry is ignored.  As explained at
the end of~\S\ref{subsec:3.4} this means that the $G$-extension~$\tF$ is
trivial when restricted to the trivial $G$-bundle.  The basepoint of~$BG$
determines an embedding $X\hookrightarrow X\wedge BG_+$ for any spectrum~$X$,
and so a restriction map
  \begin{equation}\label{eq:63}
 [X\wedge BG_+,X']\longrightarrow  [X,X']
  \end{equation}
for any spectrum~$X'$.  When $X$~is a Madsen-Tillmann spectrum we can
rewrite~\eqref{eq:63} using the twisted Thom isomorphism; then the map is
pullback along the inclusion 
  \begin{equation}\label{eq:169}
     BO_d\hookrightarrow BO_d\times BG 
  \end{equation}
defined by the basepoint of~$BG$.

  \begin{proposal}[]\label{thm:26}
 \
  \begin{enumerate}[(i)]

 \item Symmetry protected topological phases of a $d$-space dimensional
bosonic theory with global symmetry group~$G$ map to the kernel
  \begin{equation}\label{eq:64}
     \SPTb=\ker\Bigl(T\ZZ\mlstrut^{\tTZ+\fTZ +d+2}(BSO_d\times BG)
     \longrightarrow T\ZZ\mlstrut^{\tTZ+d+2}(BSO_d) \Bigr) 
  \end{equation}
of the indicated restriction map constructed from~\eqref{eq:169}.

 \item Symmetry protected topological phases of a $d$-space dimensional
fermionic theory with global symmetry group~$G$ map to the kernel
  \begin{equation}\label{eq:65}
     \SPTf=\ker\Bigl(\IZ\mlstrut^{\tIZ+\fIZ +d+2}(B\Spin_d\times BG) 
     \longrightarrow  \IZ\mlstrut^{\tIZ +d+2}(B\Spin_d)\Bigr) 
  \end{equation}
of the indicated restriction map.  
 
  \end{enumerate}
  \end{proposal}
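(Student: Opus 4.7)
The plan is to deduce Proposal~\ref{thm:26} from Proposals~\ref{thm:24} and~\ref{thm:25} together with the defining condition for SPT phases.

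First I would recall, from the end of~\S\ref{subsec:3.4}, that a phase is SPT precisely when its $G$-equivariant long-range topological theory~$\tF$ restricts, along the embedding $\sF\hookrightarrow \tsF$ that selects the trivial $G$-bundle, to a trivializable absolute theory~$F$. Granting the identifications in Proposals~\ref{thm:24} and~\ref{thm:25}, the remaining task is to identify the cohomological shadow of this restriction on the right-hand sides of~\eqref{eq:57} and~\eqref{eq:61}, and to check that it agrees with the map displayed in~\eqref{eq:64} and~\eqref{eq:65}.

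Second I would translate ``restrict to the trivial $G$-bundle'' into homotopy-theoretic language. Under the Galatius--Madsen--Tillmann--Weiss identification, the inclusion of bordism categories $\Bord_d(\sF)\hookrightarrow \Bord_d(\tsF)$ that equips every manifold with the trivial $G$-bundle is represented on geometric realizations by smashing the ambient Madsen--Tillmann spectrum with the basepoint inclusion $S^0\hookrightarrow BG_+$. Pulling maps back along this inclusion is exactly the restriction map~\eqref{eq:63}. Applying the twisted Thom isomorphism~\eqref{eq:136} and its $SO_d$ and $\Spin_d$ variants to both sides converts this into pullback along the basepoint inclusion $BSO_d\hookrightarrow BSO_d\times BG$ in the bosonic case and $B\Spin_d\hookrightarrow B\Spin_d\times BG$ in the fermionic case. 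One then checks that the twisting~$\fTZ$ (respectively~$\fIZ$), being classified by the double cover $BG_0\to BG$, pulls back canonically to the trivial twisting at the basepoint of~$BG$, so that the targets of the restriction are precisely $T\ZZ\mlstrut^{\tTZ+d+2}(BSO_d)$ and $\IZ\mlstrut^{\tIZ+d+2}(B\Spin_d)$. The SPT condition then reads: the cohomology class lies in the kernel of this pullback, which gives exactly the maps in Proposal~\ref{thm:26}.

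The main obstacle is conceptual rather than computational: ``trivializable'' in the deformation setting means nullhomotopic as a map of spectra, not strictly the zero map, so the SPT condition is that the restricted class is merely homotopic to a constant. At the level of path components this is exactly the kernel condition above, which is all that Proposal~\ref{thm:26} claims. A secondary point is the verification that the inclusion of bordism categories descends to the basepoint inclusion on Madsen--Tillmann spectra; this follows from the functoriality of the $(\infty,n)$-version of~\eqref{eq:36} and is essentially the naturality statement implicit in the references cited at the start of~\S\ref{sec:4}. One should nevertheless bear in mind that the space of nullhomotopies carries further homotopical information corresponding to SPT phases in one lower spacetime dimension, which is invisible to the group of path components and which a more refined statement would need to track.
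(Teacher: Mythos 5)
Your argument is correct and follows essentially the same route as the paper: the SPT condition from \S\ref{subsec:3.4} is implemented by the basepoint inclusion $S^0\hookrightarrow BG_+$, rewritten via the twisted Thom isomorphism as pullback along~\eqref{eq:169}, so that SPT phases land in the kernel of that restriction. Your added checks---that the twisting $\fTZ$ (resp.\ $\fIZ$) restricts trivially at the basepoint of~$BG$, and that ``trivial'' means nullhomotopic, which at the level of path components is exactly the kernel condition---are correct elaborations of points the paper leaves implicit.
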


  \subsection{Relation to group (super) cohomology}\label{subsec:5.3}

 From the definition of $T\ZZ$ at the beginning of~\S\ref{subsubsec:5.2.1} we
construct a map of spectra
  \begin{equation}\label{eq:123}
     \Sigma ^{d+2}H\ZZ\longrightarrow \Sigma ^{d+2}\TZ.
  \end{equation}
It induces the second map in the composition
  \begin{equation}\label{eq:67}
     H^{d+2}(BG;\ZZ_\phi )\longrightarrow H^{d+2}(BSO_d\times
     BG;\ZZ_\phi ) \longrightarrow T\ZZ\mlstrut ^{\tTZ+\fTZ+d+2}(BSO_d\times BG);
  \end{equation}
the first is induced from the projection $BSO_d\times BG\to BG$.  Here
$\ZZ_\phi \to BG$ is the local system defined by~\eqref{eq:56}.  Note that
ordinary cohomology is oriented for oriented vector bundles, which explains
why the twisting~$\tTZ$ is trivialized when restricted under~\eqref{eq:123}.
The homomorphism~\eqref{eq:67} maps the group cohomology phases
discussed\footnote{To compare it helps to observe that $H^{d+2}(BG;\ZZ_\phi
)\cong H^{d+1}(BG;\underline{U(1)}_\phi )$, where $\underline{U(1)}$~has its
continuous topology.} in~\cite{CGLW} to~$\SREb$.  Furthermore, the image
of~\eqref{eq:67} lies in the subgroup~$\SPTb$ of symmetry protected phases;
see~\eqref{eq:64}.

  \begin{remark}[]\label{thm:46}
 If $G$~is discrete, then the topological cohomology of~$BG$ is isomorphic to
the group cohomology of the group~$G$.  More generally, if $G$~is a (finite
dimensional) Lie group, then a theorem of D.~Wigner~\cite{Wi} states that the
topological cohomology of~$BG$ with coefficients in a discrete $G$-module is
isomorphic to the Borel cohomology of~$G$; see~\cite{St} for more on group
cohomology.
  \end{remark}

For theories with fermions Gu and Wen~\cite{GW} introduced a group ``super''
cohomology theory.  In fact, it can be identified with a certain generalized
cohomology theory of the classifying space~$BG$.  This generalized cohomology
theory, which we simply call~$E$, had already appeared in at least a few
contexts in theoretical physics: (1)~in spin Chern-Simons theories~\cite{J}
and (2)~in QCD, in the Wess-Zumino term of the long-range effective theory of
pions~\cite{F4}.  The spectrum~$E$ has two nonzero homotopy groups: 
  \begin{equation}\label{eq:71}
     \pi _0E\cong \ZZ, \qquad \pi _{-2}E\cong \zt.  
  \end{equation}
The $k$-invariant which relates them is nonzero.  We defer to~\cite[\S1]{F4}
for generalities on this cohomology theory.

The two nontrivial homotopy groups in~$E$ occur in~\eqref{eq:53}, shifted up
by degree~$d+2$, and as the $k$-invariant match there is a map $E\to \IZ$ of
spectra.  The theory~ $E$ is oriented for spin bundles, as proved in
\cite[Proposition~4.4]{F4}.  Therefore, there is a homomorphism\footnote{The
map~\eqref{eq:68} is part of a long exact sequence; the terms which come
before and after are maps into the spectrum which is the cofiber of~$E\to\IZ$
so has vanishing homotopy group in degrees~$\ge-2$.  The spin orientation
of~$E$ explains why $\tIZ$~does not appear in the domain of~\eqref{eq:68}.}
  \begin{equation}\label{eq:68}
     E\mlstrut ^{\fE+d+2}(B\Spin_d\times BG)
     \longrightarrow \SREf=\IZ\mlstrut^{\tIZ+\fIZ +d+2}(B\Spin_d\times BG) . 
  \end{equation}
The projection $B\Spin_d\times BG\to BG$ induces an inclusion
  \begin{equation}\label{eq:70}
     E^{\fE +d+2}(BG)\longrightarrow E^{\fE +d+2}(B\Spin_d\times BG) 
  \end{equation}
which, after composition with~\eqref{eq:68}, induces a homomorphism of the
group ``super'' cohomology into $\SREf$, as expected.  Note that the image
of~~$E^{\fE +d+2}(BG)$ in~$\SREf$ lies in the subgroup~$\SPTf$ of symmetry
protected phases; see~\eqref{eq:65}.

   \section{Computations and special cases}\label{sec:6}
% lastsubsec@  5

We illustrate how the proposed invariants of gapped short-range entangled
(SRE) phases in~\S\ref{subsec:5.2} detect phases not covered by the group
cohomology classification discussed in~\S\ref{subsec:5.3}.  We organize the
discussion by space dimension~$d$ and by whether or not the theory includes
fermions.  The examples treated here are non-anomalous.  Clearly there are
many more computations and analyses which can be carried out.

  \subsection{$d=1$ bosonic theories: group cohomology}\label{subsec:6.1}

This is the one case in which there is nothing beyond the group cohomology
classification.  There are two reasons for this: (1) the group~$SO_1$ which
governs the spatial tangential structure is trivial, and (2)~for $d=1$ we
have $T\ZZ=H\ZZ$.  More formally, from~\eqref{eq:57} and~\eqref{eq:64} we
deduce
  \begin{equation}\label{eq:72}
   \begin{split}
     \SREba 1G\phi &=\SPTba 1G\phi \\&=H\ZZ\mlstrut ^{\fHZ+3}(BSO_1\times BG)
     \\&\cong H^3(BG;\ZZ_\phi ).
  \end{split}\end{equation}

  \subsection{$d=1$ fermionic theories}\label{subsec:6.2}

Since the shifted Madsen-Tillmann spectra are connective (in this case the
relevant spectrum is $\Sigma ^{1}MT\Spin_1\wedge BG_+$), we can replace the
codomain~$\Sigma ^3\IZ$ of an invertible topological field theory by its
connective cover.  That connective cover is a module for $ko$-theory, which
is connective real $K$-theory.  (See~\cite[\S4]{F5}, for example, where that
connective cover is the theory called~`$R\inv $'.)  In particular, the
connective cover is Spin-oriented so the Thom twisting is trivial.  Hence
  \begin{equation}\label{eq:73}
     \SREfa 1G\phi \cong \IZ\mlstrut^{\fIZ +3}(B\Spin_1\times BG). 
  \end{equation}
The subgroup coming from the $BG$~factor is 
  \begin{equation}\label{eq:74}
     \IZ\mlstrut^{\fIZ +3}(BG). 
  \end{equation}

  \begin{remark}[]\label{thm:30}
 This already goes beyond the group ``super'' cohomology theory~$E$, since
$E$~ has two nonzero cohomology groups~\eqref{eq:71}, whereas the truncation
of~$\IZ$ we are using here has a third nonzero homotopy group~$\pi _{-3}\cong
\zt$.
  \end{remark}

Consider the special case~$G=\pmo$ with $\phi $~nontrivial; this is the case
of a time-reversal symmetry which squares to the identity.  Then
\eqref{eq:74}~is cyclic of order~8:
  \begin{equation}\label{eq:75}
     \IZ\mlstrut^{\fIZ +3}(B\pmo)\cong \zmod8 .
  \end{equation}
One proof of~\eqref{eq:75} is \cite[Theorem~3.13]{DFM}.   

  \begin{remark}[]\label{thm:27}
 One interpretation of the left hand side of~\eqref{eq:75} is the group of
degree shifts of $KO$-theory, which is the Brauer group of \emph{real}
$\zt$-graded central simple algebras.  This is surely very closely related to
the classification in~\cite[\S V]{FK}.
  \end{remark}

Because the group~$\Spin_1\cong \zt$ is nontrivial, there are additional
SRE~phases~\eqref{eq:73} not captured by~\eqref{eq:74}.  One example is the
truncation to 1-space dimension of the 2-spacetime dimensional ``Arf
theory''.  The invariant of a 2-dimensional closed spin manifold is~$\pm1$
according to the Arf invariant of the quadratic form defining the spin
structure: even spin structures have invariant~$+1$ and odd spin structures
have invariant~$-1$.  The invariants on 1- and 0-dimensional manifolds are
also explicit.  The invariant of~$\cir$ is the trivial even line for the
bounding spin structure and the odd line for the nonbounding spin structure.
The invariant of~$\pt_+$ with the standard spin structure is the complex
Clifford algebra~$\Cliff^{\CC}_1$.  (For simplicity we take the target
2-category ~$\sC$ of the theory to be algebras-bimodules-intertwiners.
Equivalently, we can assign to~$\pt_+$ the category of $\zt$-graded
$\Cliff^{\CC}_1$-modules.)  For more on the Arf theory, and a beautiful
geometric application, see~\cite{G}.   
 
It appears that the Arf theory is realized as the long-range effective
topological theory of the Majorana chain~\cite{K6} in its nontrivial phase.
For example, the description of the effective theory~\cite[(15)]{K6} does
assign the Clifford algebra~$\Cliff^{\CC}_1$ to a point.

In the remainder of this section we illustrate some relevant computational
techniques.  More elaborate techniques are needed in higher dimensions, as we
illustrate in the appendix.  The Arf theory is predicted by
  \begin{equation}\label{eq:91}
     \IZ^3(\Sigma ^1MT\Spin_1)\cong \IZ^3(B\Spin_1)\cong \zt. 
  \end{equation}
In fact, since the Arf theory extends to a 2-spacetime dimensional theory, we
can detect it in the group $\IZ^3(\Sigma ^2MT\Spin_2)$.  It is illuminating
to first compute
  \begin{equation}\label{eq:92}
  \begin{split}
     \ICZ^2(\Sigma ^2MT\Spin_2)&\cong \ICZ^2(B\Spin_2)\\&\cong
     \ICZ^2(\CP^{\infty})\\&\cong \Hom(\pi _2^s\CP^{\infty}_+,\CZ)\\&\cong
     \Hom(\pi _2^s\CP^{\infty}\times \pi _2^sS^0,\CZ)\\&\cong \CZ\times \zt. 
  \end{split}
  \end{equation}
The first line is the Thom isomorphism; the third the defining property
of~$\ICZ$; the fourth the general fact $\Sigma ^{\infty}(X_+)\simeq \Sigma
^{\infty}X\vee S^0$; and the last line the results $\pi
_2^s(\CP^{\infty})\cong \ZZ$ (computed in~\cite{Li}, for example) and $\pi
_2^sS^0\cong \zt$.  This is the set of 2-dimensional invertible spin theories
with target~$\ICZ$.  It includes the family of Euler theories
(Example~\ref{thm:4}) parametrized by~$ \CZ$ and the Arf theory.  The group
of path components of~\eqref{eq:92} is
  \begin{equation}\label{eq:95}
      \IZ^3(\Sigma ^2MT\Spin_2)\cong \IZ^3(\CP^{\infty})\cong \zt. 
  \end{equation}
One computation of this group uses the split short exact
sequence~\eqref{eq:122}
  \begin{equation}\label{eq:93}
     0 \longrightarrow \Ext(\pi _2\Sigma ^2MT\Spin_2)\longrightarrow
     \IZ^3(\Sigma ^2MT\Spin_2) \longrightarrow \Hom(\pi _3\Sigma
     ^2MT\Spin_2,\ZZ)\longrightarrow 0 
  \end{equation}
and the computations $\pi _3^s(\CP^{\infty})=0$ and $\pi _3^s(S^0)\cong
\zmod{24}$.  

  \begin{figure}[ht]
\bigskip
  \centering
  \includegraphics[scale=.8]{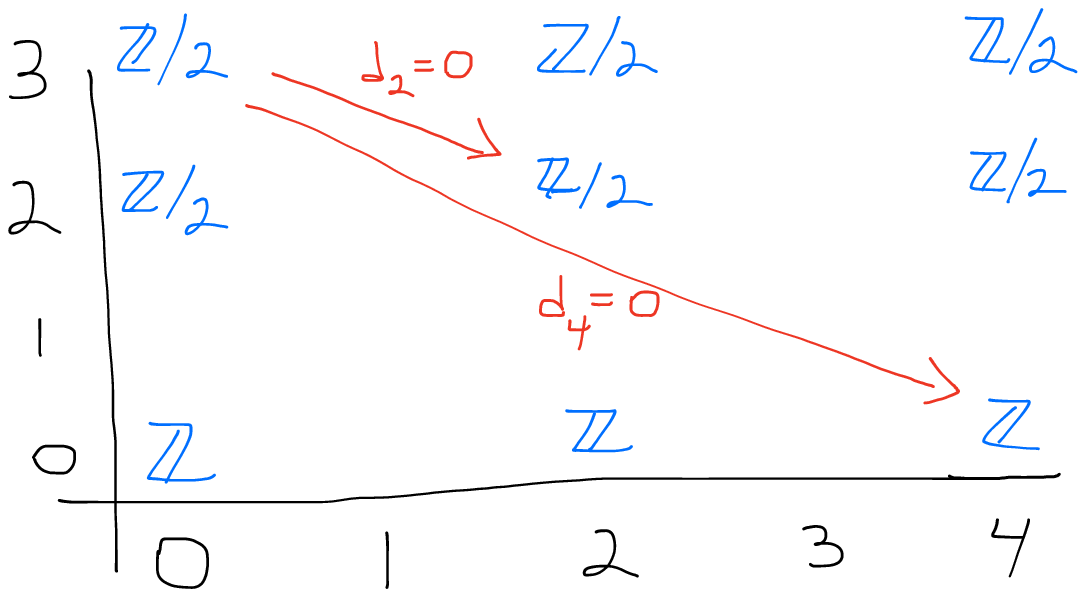}\phantom{MMMMMMMMMM}
  \caption{Computation of $\IZ^3(\CP^{\infty})$}\label{fig:7}
  \end{figure}
\bigskip

The Atiyah-Hirzebruch spectral sequence provides another means to compute the
generalized cohomology groups~\eqref{eq:92} and~\eqref{eq:95}.  The relevant
portion of the $E_2$~page of the spectral sequence 
  \begin{equation}\label{eq:94}
      E_2^{pq}=H^p\bigl(\CP^{\infty};\IZ^q(\pt) \bigr)\Longrightarrow
     \IZ^{p+q}(\CP^{\infty}) 
  \end{equation}
for computing~\eqref{eq:95} is shown in Figure~\ref{fig:7}.  The
differentials emanating from the initial column vanish, as can be seen from
the splitting $\pt\to \CP^{\infty}\to \pt$ of the projection to a point.
That reasoning also applies to the spectral sequence
  \begin{equation}\label{eq:96}
     E_2^{pq}=H^p\bigl(\CP^{\infty};\ICZ^q(\pt) \bigr)\Longrightarrow
     \ICZ^{p+q}(\CP^{\infty}),
  \end{equation}
a portion of which is shown in Figure~\ref{fig:8}, and it also applies to
prove that the short exact sequence 
  \begin{equation}\label{eq:97}
     0\longrightarrow \CZ\longrightarrow \ICZ^2(\CP^{\infty})\longrightarrow
     \zt\longrightarrow 0 
  \end{equation}
splits.  (One reads off~\eqref{eq:97} from the $E_{\infty}$~page.)

  \begin{figure}[ht]
\bigskip
  \centering
  \includegraphics[scale=.8]{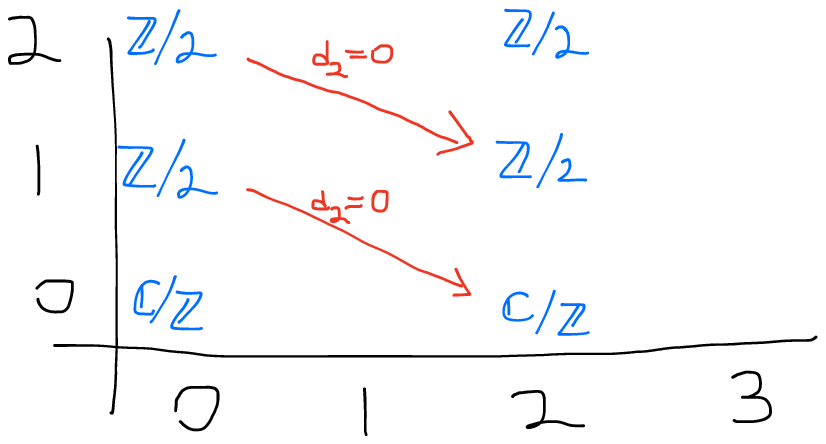}\phantom{MMMMMMMMMM}
  \caption{Computation of $\ICZ^2(\CP^{\infty})\cong \CZ\times
  \zt$}\label{fig:8} 
  \end{figure}
\bigskip

\subsection{$d=2$ bosonic theories: Kitaev  $E_8$ phase and chiral central
charge}\label{subsec:6.3}

To investigate SRE phases at the other extreme from those detected by group
cohomology, we set~$G=\{1\}$ to be the trivial group.  Then necessarily $\phi
\:G\to\pmo$ is the trivial homomorphism.  Unwinding Proposal~\ref{thm:24} we
have
  \begin{equation}\label{eq:76}
     \SREba 2{\{1\}}1=[\Sigma ^2MTSO_2,\Sigma ^4\IZ],
  \end{equation}
the group of $H$-type oriented theories with target the Anderson dual of the
sphere.  The following computations are carried out in Appendix~\ref{sec:8}.
Recall that the group~$\pi _4MSO$ is Thom's oriented bordism group of
4-manifolds, which is isomorphic to~$\ZZ$ via homomorphism which attaches to
each closed oriented 4-manifold~$M$ its signature~$\Sign(M)$.

  \begin{proposition}[]\label{thm:50}
 \ 

  \begin{enumerate}[{\textnormal(}i{\textnormal)}]

 \item $\pi _3\Sigma ^2MTSO_2=0$.

 \item $\pi _4\Sigma ^2MTSO_2\cong \ZZ$ and the composition 
  \begin{equation}\label{eq:170}
     \pi _4\Sigma ^3MTSO_2\longrightarrow \pi _4MSO \xrightarrow
     {\;\;\Sign\;\;}\ZZ  
  \end{equation}
maps the generators to~$\pm4$.

 \item $\pi _3\Sigma ^3MTSO_3=0$. 

 \item $\pi _4\Sigma ^3MTSO_3\cong \ZZ$ and the composition
  \begin{equation}\label{eq:148}
     \pi _4\Sigma ^3MTSO_3\longrightarrow \pi _4MSO \xrightarrow
     {\;\;\Sign\;\;}\ZZ   
  \end{equation}
maps the generators to~$\pm2$.

  \end{enumerate}
  \end{proposition}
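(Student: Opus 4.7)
The plan relies on the cofiber sequence of spectra
\begin{equation*}
\Sigma^{n} MTSO_n \longrightarrow \Sigma^{n+1} MTSO_{n+1} \longrightarrow \Sigma^{n+1}\,\Sigma^\infty_+ BSO_{n+1},
\end{equation*}
obtained by Thom-twisting the Serre--James cofibration $(BSO_n)_+ \to (BSO_{n+1})_+ \to (BSO_{n+1})^{V_{n+1}}$ (associated to the sphere bundle $S(V_{n+1}) \simeq BSO_n$ of the universal rank-$(n{+}1)$ bundle on $BSO_{n+1}$) by the virtual bundle $-V_{n+1}$, together with the base-case identification $\Sigma MTSO_1 \simeq S^0$ (since $BSO_1$ is a point and $V_1$ is the trivial line).

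For parts (i) and (ii) I would specialize to $n = 1$ to obtain $S^0 \to \Sigma^2 MTSO_2 \to \Sigma^2\,\Sigma^\infty_+ \CP^{\infty}$ (using $BSO_2 \simeq \CP^{\infty}$), and run the long exact sequence in homotopy in degrees $2 \le k \le 5$. The required input is the low stable stems of the sphere spectrum ($\pi^s_1 = \pi^s_2 \cong \ZZ/2$, $\pi^s_3 \cong \ZZ/24$, $\pi^s_4 = 0$) together with the stable homotopy of $\CP^{\infty}$ in the same range (with $\pi^s_2 \CP^{\infty} \cong \ZZ$ already cited in the paper, and the nearby groups accessible from the Atiyah--Hirzebruch spectral sequence). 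Parts (iii) and (iv) then follow from a second application with $n = 2$, namely $\Sigma^2 MTSO_2 \to \Sigma^3 MTSO_3 \to \Sigma^3\,\Sigma^\infty_+ BSO_3$, feeding in the conclusions of (i), (ii) and the low-dimensional stable homotopy of $BSO_3$ (computable from $H^*(BSO_3;\ZZ)$ via the AHSS).

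For the signature identifications in~\eqref{eq:170} and~\eqref{eq:148}, Pontrjagin--Thom interprets $\pi_4 \Sigma^n MTSO_n$ as the bordism group of triples $(M^4, V, \varphi)$ with $V$ an oriented rank-$n$ bundle on $M$ and $\varphi$ a stable isomorphism $V \simeq TM$; the stabilization map $\Sigma^n MTSO_n \to MSO$ induces the homomorphism $[M,V,\varphi]\mapsto \Sign(M)$. For $n = 2$, $V = L_{\RR}$ for a complex line bundle $L$, so $p_1(TM) = c_1(L)^2$, and the condition $c_1(L) \equiv w_2(M) \pmod 2$ makes $c_1(L)$ a characteristic element of the intersection form on $H^2(M;\ZZ)$. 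Van der Blij's congruence $c_1(L)^2[M] \equiv \Sign(M) \pmod 8$ combined with the Hirzebruch formula $\Sign(M) = c_1(L)^2[M]/3$ then forces $\Sign(M) \in 4\ZZ$; a suitable connected sum of copies of $\CP^{2}$ with $L$ chosen to have the right first Chern class realizes $\pm 4$. For $n = 3$ the rank-3 reduction imposes only an even Euler characteristic and the Wu relations for oriented rank-3 bundles, and a parallel analysis yields $\Sign \in 2\ZZ$, realized by an explicit 4-manifold carrying such a reduction.

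The main obstacle is controlling the boundary homomorphisms in the long exact sequences---in particular the map $\pi^s_2(\CP^{\infty})_+ \to \pi^s_3 S^0 = \ZZ/24$ whose kernel must contain the free rank-one summand in (ii)---and certifying that no bordism class realizes a signature strictly smaller than the claimed minimum. The former amounts to identifying the attaching map of the bottom cell $S^0 \hookrightarrow \Sigma^2 MTSO_2$ in the cellular decomposition provided by the cofiber sequence; the latter rests on the Van der Blij-type congruence above and its rank-3 analog.
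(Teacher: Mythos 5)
Your skeleton is the paper's: the cofiber sequence you start from is exactly \eqref{eq:153}, and the Pontrjagin--Thom reading of $\pi _4\Sigma ^nMTSO_n$ as bordism of $4$-manifolds equipped with a rank-$n$ destabilization of the tangent bundle is also how the paper handles the signature. Within that skeleton your divisibility argument for $n=2$ is a genuinely different and correct route: writing the rank-$2$ reduction as $L_{\RR}$, the class $c_1(L)$ is characteristic, van der Blij gives $c_1(L)^2\equiv \Sign(M)\pmod 8$, Hirzebruch gives $c_1(L)^2=3\Sign(M)$, and $4\mid \Sign(M)$ follows. The paper instead obtains the $4$ by composing two multiplication-by-$\pm2$ maps, namely $\pi _4\Sigma ^2MTSO_2\to\pi _4\Sigma ^3MTSO_3$ (read off from \eqref{eq:153} with $n=3$) with the signature on $\pi _4\Sigma ^3MTSO_3$; your version is more self-contained at this step, though you still owe the check that a signature-$4$ manifold really carries a rank-$2$ stable reduction of its tangent bundle (matching $w_2$ and $p_1$ on a simply connected $4$-manifold suffices, but it must be said), and for $n=3$ the phrase ``a parallel analysis'' should be replaced by the one-line argument that $w_4$ vanishes for rank-$3$ bundles, so $\chi (M)$, and hence $\Sign(M)$, is even.

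The genuine gap is that every long exact sequence you invoke turns on a connecting homomorphism that you leave undetermined, and these are where all the content sits. In your sequence for $n=2$, exactness (together with $\pi _4S^0=0$ and the fact that $\pi _2S^0\cong \zt$ must die in $\pi _2\Sigma ^2MTSO_2\cong \ZZ$) gives $\pi _4\Sigma ^2MTSO_2=\ker\partial $ and $\pi _3\Sigma ^2MTSO_2=\coker\partial $ for $\partial \:\pi _2^s(\CP^{\infty})_+\to\pi _3S^0\cong \zmod{24}$. So (i) requires $\partial $ to be \emph{surjective}; since the $\zt$~summand can only hit $\{0,12\}$, the free summand $\pi _2^s\CP^{\infty}\cong \ZZ$ must map \emph{onto} $\zmod{24}$ --- the opposite of your expectation that the kernel ``must contain the free rank-one summand'' --- and (ii) additionally requires $\partial $ to be injective on $\pi _2^sS^0\cong \zt$. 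Neither statement is supplied by an Atiyah--Hirzebruch computation of $\pi _*^s\CP^{\infty}$; the surjectivity is a nontrivial fact about the circle transfer. Likewise your route to (iii) needs the connecting map $\pi _0^s(BSO_3)_+\cong \ZZ\to\pi _2\Sigma ^2MTSO_2\cong \ZZ$ to be injective. The paper deliberately avoids all of these: it proves (iii) by going one level \emph{up}, to \eqref{eq:153} with $n=4$, where the connecting map is identified geometrically as the Euler characteristic and is surjective because $\chi (\CP^2\#(S^1\times S^3))=1$; and it excludes the possible extra $\zt$ in $\pi _4$ not by computing $\partial $ but by a Steenrod-algebra argument on the mod~$2$ cohomology of the fiber of a Postnikov truncation (a class must map onto $Sq^5$). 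You need either to compute your connecting maps or to import arguments of that kind; as written, (i)--(iv) are not established.
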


We interpret these computations using~\eqref{eq:122}.  First, (i)~and
(ii)~imply 
  \begin{equation}\label{eq:171}
     [\Sigma ^2MTSO_2,\Sigma ^4I\ZZ]\cong \ZZ. 
  \end{equation}
Furthermore, the generating field theory extends to a $\ZZ$-valued invertible
4-dimensional oriented theory whose numerical invariant is \emph{4~times} the
signature.  Assertions~(iii) and~(iv) together imply 
  \begin{equation}\label{eq:149}
     [\Sigma ^3MTSO_3,\Sigma ^4\IZ]\cong \ZZ 
  \end{equation}
and the generating field theory extends to a $\ZZ$-valued invertible
4-dimensional theory whose numerical invariant is \emph{2~times} the
signature.  A 2-dimensional theory which generates the group~\eqref{eq:171}
does not extend to 3~dimensions, and a 3-dimensional theory which generates
the group~\eqref{eq:149} does not extend to 4~dimensions.  We relate these
factors to known geometric facts about $\eta $-invariants and determinant
lines.

The interpretation of the homotopical computations~\eqref{eq:76},
\eqref{eq:149} is not obvious at first glance; indeed that mystery was the
motivation for~\S\ref{subsec:2.7}.  Furthermore, the example discussed
there---in particular, the contractible choice replacing~$\Ft$ with~
$\Fg$---has a clear analog in our current situation.  If we include a
Riemannian metric as a field, then there is a 3-spacetime dimensional theory
whose invariant of a closed oriented Riemannian 3-manifold is the
exponentiated Atiyah-Patodi-Singer $\eta
$-invariant~\cite{APS}.\footnote{More precisely, for a self-adjoint
operator~$B$ the invariant is~$\exp\bigl(\pi i (\eta_B +h_B)\bigr)$, where
$h_B=\dim\ker B$.}  As in Remark~\ref{thm:28} there is a related
\emph{4-dimensional} theory (called~$\tia _8$ below) whose value on a closed
oriented 4-manifold~$M$ is a multiple of the signature~$\Sign(M)$.  If we use
the $\eta $-invariant associated to the signature operator, then the multiple
is~1.  But we can use instead the $\eta $-invariant associated to the
self-duality operator; the corresponding boundary operator~$B$ is $1/2$~that
of the signature operator.  (This $\eta $-invariant appears in quantum
Chern-Simons theory~\cite[(1.27)]{FG}.)  This theory represents a generator
of~\eqref{eq:149}; the multiple of the signature is~1/2.  The
invariants of 2-manifolds are determinant lines, and the determinant line of
the 2-dimensional signature operator on a closed oriented Riemannian manifold
has a natural $4^{\textnormal{th}}$~root: the determinant line of the
$\dbar$-operator.  Determinant lines of~$\dbar$ provide a generator
of~\eqref{eq:171}; the multiple of the signature is~1/4.

  \begin{remark}[]\label{thm:54}
 The discussion in~\S\ref{subsec:4.1}, especially Remark~\ref{thm:53}, and
also Remark~\ref{thm:17} are relevant here.  A theory classified
by~\eqref{eq:149} gives integer invariants of 4-manifolds whose stable
tangent bundle can be represented by a rank~3 vector bundle.  Such manifolds
have even signature.  (For example, the $4^{\textnormal{th}}$~Stiefel-Whitney
number vanishes.  It is the reduction modulo~2 of the Euler number which is
equal to the signature modulo~2.)  An example of such a 4-manifold is the
mapping cylinder of a 3-manifold, which fibers over the circle, but then the
signature vanishes.  A nontrivial example is the connected sum
$M=(\CP^2)^{\#2} \;\#\; (S^1\times S^3)^{\#3}$, which has signature~2 and
Euler number~0.  The vanishing Euler number implies that $M$~admits a nonzero
vector field, which splits an oriented line bundle off of~$TM$.  Similarly,
4~divides the signature of a compact oriented 4-manifold whose stable tangent
bundle is 2-dimensional.  In this case there are nontrivial examples which
are fiber bundles; the base and fiber are both compact oriented 2-manifolds.
The first examples are due to Atiyah~\cite{A3}; an example with signature
exactly~4 is constructed in~\cite[Theorem~1]{EKKOS}. 
  \end{remark}

We now argue that the SRE phase referred to in the literature as ``Kitaev's
$E_8$~phase'' or ``Kitaev's $E_8$~state'' (see~\cite{K5}, \cite{K2},
\cite{LV}) has the field theory whose invariant is exactly the signature as
its low energy approximation.

To make a first connection to $E_8$~Chern-Simons, we recall that the
gravitational Chern-Simons invariant enters into the quantization of
classical Chern-Simons theory as a counterterm~\cite[(2.20)]{W1}.  Its
appearance means that in general quantum Chern-Simons theory is
\emph{anomalous} as an oriented theory.  The anomaly is an invertible
4-dimensional theory
  \begin{equation}\label{eq:78}
     \alpha _{\bc}\:\Sigma ^4MTSO_4\longrightarrow \Sigma ^4I\Cx
  \end{equation}
whose invariant on a closed oriented 4-manifold~$M$ is 
  \begin{equation}\label{eq:79}
     e^{2\pi i\,c\Sign(M)/8}=e^{2\pi i\,c\,p_1(M)/24}. 
  \end{equation}
The anomaly depends only on the mod~8 reduction~$\bc$ of the \emph{chiral
central charge}~$c\in \RR$ of the corresponding 2-dimensional chiral
Wess-Zumino-Witten model.

  \begin{remark}[]\label{thm:29}
 Walker's approach~\cite{Wa} to quantum Chern-Simons theory uses bounding
4-manifolds to control the framing dependence.  In joint work with Constantin
Teleman (so far unpublished) we prove that a modular tensor category is
invertible as an object in the 4-category of braided tensor categories, and
we use it to define an invertible oriented 4-dimensional topological field
theory which is precisely~$\alpha _{\bc}$.  Note that the modular tensor
category determines~$\bc=c\pmod8$---see~\cite[(172)]{K5}, for example---but
it does not determine~$c\in \RR$.  The usual approach to quantum Chern-Simons
theory in the mathematics literature is to lift to a theory of manifolds with
a $(w_1,p_1)$-structure.  A $w_1$-structure is a trivialization of~$w_1$: an
orientation.  A $p_1$-structure is similar~\cite{BHMV}, but its geometric
avatars are not as simple as an orientation.  For example, a $p_1$-structure
on a 3-manifold can be given by a ``2-framing''~\cite{A2}.  In this way one
obtains Chern-Simons as an extended theory of 1-, 2-, and 3-dimensional
manifolds, but to do so one needs to lift $c\pmod8$ to $c\pmod{24}$.  Of
course, given $c\in \RR$ there is a preferred choice, but starting from a
modular tensor category there are 3~choices.
  \end{remark}

We observe that given a chiral central charge~$c\in \RR$ there is a
4-dimensional invertible theory 
  \begin{equation}\label{eq:80}
     \tia_c\:\Sigma ^4MTSO_4\longrightarrow \Sigma ^4H\RR 
  \end{equation}
whose invariant on a closed oriented 4-manifold~$M$ is the real number
  \begin{equation}\label{eq:81}
     c\,\Sign(M)/8. 
  \end{equation}
The anomaly theory~$\alpha _{\bc}$ in~\eqref{eq:78} is obtained by
composing~\eqref{eq:80} with the map $\Sigma ^4H\RR\to\Sigma ^4I\Cx$ induced
by the exponential map $e^{2\pi i(-)}\:\RR\to\Cx$; see~\eqref{eq:121}.  If
$c=8n$ for some~$n\in \ZZ$, then \eqref{eq:80}~ factors through an integral
theory
  \begin{equation}\label{eq:82}
     \tia_{8n}\:\Sigma ^4MTSO_4\longrightarrow \Sigma ^4\IZ
  \end{equation}
These integral topological theories are not part of the usual quantum
Chern-Simons theory: only the exponential~\eqref{eq:78} of~\eqref{eq:80}
occurs (as the ``framing anomaly'' theory).  For the theories
in~\eqref{eq:82} the framing anomaly is trivial.  What does occur in
Chern-Simons is the invertible metric 3-dimensional theory whose partition
function is the exponentiated $\eta $-invariant to a suitable
power~\cite{W1}.  $E_8$~Chern-Simons at level~1, or Chern-Simons theory for
the maximal torus of~$E_8$ (with its Cartan matrix specifying the level, or
``$K$-matrix''), has chiral central charge~$c=8$.  The $\eta $-invariant
which occurs is associated to the signature operator, and this is the class
of theories we associate to Kitaev's $E_8$~phase. 

  \begin{remark}[]\label{thm:55}
 Proposition~\ref{thm:50} implies that there are additional possibilities for
an $H$-type 2-dimensional theory: a ``$4^{\textnormal{th}}$~root'' of the
effective theory of Kitaev's $E_8$~phase.  Such theories are associated with
chiral central charge~$c=2$, and in general the 2-dimensional $H$-type
theories are associated with chiral central charge divisible by~2.  This
matches the conformal anomaly in 2-dimensional conformal field
theory---see~\cite[(5.9)]{S1}, for example: if the chiral central charge is
not divisible by~2, then a $p_1$-structure is needed to define the theory.
Even if we require the theory to extend to 3-manifolds, there is still the
possibility of dividing by~2, so having chiral central charge divisible by~4.
So it appears that our proposal allows for more effective topological
theories than have been seen so far by SRE phases.
  \end{remark}

  \subsection{$d=2$ bosonic theories: mixed gravitational/gauge phases}\label{subsec:6.4}

Continuing with $d=2$ space dimensional theories, we give an example to
illustrate the unitarizability restriction in~\eqref{eq:146}.  Now we allow a
global symmetry group~$G$.

We focus on the Kunneth component
  \begin{multline}\label{eq:83}
     H^2(BSO_2;\ZZ)\otimes H^2(BG;\ZZ_\phi )\subset H^4(BSO_2\times
     BG;\ZZ_\phi ) \\\longrightarrow \IZ^{\tau \mstrut _{\IZ}+\phi\mstrut 
     _{\IZ}+4}(BSO_2\times BG) = \SPTba 2G\phi . 
  \end{multline}
The group~$H^2(BSO_2;\ZZ)$ is infinite cyclic with generator the Euler
class~$e$.  For simplicity let $G$~be finite and $\phi $~the trivial
homomorphism.  Then $H^2(BG;\ZZ)\cong H^1(BG;\Cx)$ is isomorphic to the group
of abelian characters $\chi \:G\to\Cx$.  Fix one and let $\lambda _\chi \in
H^2(BG;\ZZ)$ be the corresponding class.  Let $F$~be the theory which
corresponds to~$e\otimes \lambda _\chi $ in~\eqref{eq:83}.  We remark in
passing that $F$~does \emph{not} extend to a 3-spacetime dimensional theory:
the Euler class~$e$ is not the restriction of a class in~$H^2(BSO_3;\ZZ)=0$.
The main point: this theory is \emph{not} unitarizable.  To see this it
suffices to restrict along $H\ZZ\longrightarrow I\ZZ$ in~\eqref{eq:146},
since we are trying to hit~$e\otimes \lambda _\chi $ which lies in ordinary
cohomology.  There is an isomorphism~${\bar\tau \mstrut _{H\ZZ}}\cong w
\mstrut _{H\ZZ}$ of twistings in the domain of~\eqref{eq:146}, so the
relevant group is $H^4(BO_2\times BG;\ZZ_\phi )$ and the relevant Kunneth
component is $H^2(BO_2;\ZZ)\otimes H^2(BG;\ZZ_\phi )$.  The Euler class~$e$
does \emph{not}\footnote{It does drop to a class on~$BO_2$ with twisted
coefficients, but that is not relevant here.} drop to a class
in~$H^2(BO_2;\ZZ)$.

  \begin{remark}[]\label{thm:51}
 It is instructive to compute something nontrivial in the theory~$F$.  Let
$Y$~be a closed oriented 2-manifold and $P\to Y$ a principal $G$-bundle.
Then $F(P\to Y)$ is a complex line.  Suppose $\varphi $~is an automorphism of
$P\to Y$, so a map
  \begin{equation}\label{eq:84}
     \begin{gathered} \xymatrix{P\ar[r]^{\varphi } \ar[d]_{} & P\ar[d]^{} \\
     Y\ar[r]^{\bar\varphi } & Y} \end{gathered} 
  \end{equation}
of principal $G$-bundles covering an orientation-preserving diffeomorphism
of~$Y$.  Gluing the ends of $[0,1]\times P\to[0,1]\times Y$ using~$\varphi $
we obtain a principal $G$-bundle $Q_\varphi \to X_\varphi $ over the mapping
cylinder~$X_\varphi $.  Note that the mapping cylinder is a 3-manifold which
is the total space of a fiber bundle $X_\varphi \to \cir$ with typical
fiber~$Y$.  The rank~2 relative tangent bundle $T(X_\varphi /\cir)\to
X_{\varphi}$ is oriented so has an Euler class $e(X_\varphi /\cir)\in
H^2(X_\varphi ;\ZZ)$.  The $G$-bundle $Q_\varphi \to X_\varphi $ has a
characteristic class $\lambda _\chi (Q_\varphi )\in H^1(X_\varphi ;\Cx)$.
Then the action of~$\varphi $ on the line $F(P\to Y)$ is multiplication by
  \begin{equation}\label{eq:85}
     \langle e(X_\varphi /\cir)\smile \lambda _\chi (Q_\varphi ),[X_\varphi ]
     \rangle\in \Cx. 
  \end{equation}
In~\eqref{eq:85} we pair the cup product of the characteristic classes with
the fundamental class of the oriented 3-manifold~$X_\varphi $.  A special case
of note: $\bar\varphi $~is the identity diffeomorphism, $P\to Y$ is the trivial
bundle, and the gauge transformation~ $\varphi $ is given by an element~$g\in G$
(assuming $Y$~is connected).  Then \eqref{eq:85}~reduces to $\chi
(g)^{\Euler(Y)}$, where $\Euler(Y)\in \ZZ$ is the Euler number. 
  \end{remark}

  \subsection{$d=3$ bosonic theories: time-reversal symmetry}\label{subsec:6.5}

Set $G=\pmo$ and $\phi \:\pmo\to\pmo$ the identity map.  The group cohomology
captures a subgroup of the group of SRE phases:
  \begin{equation}\label{eq:86}
     H^5(B\pmo;\ZZ_\phi )=H^5(\RP^{\infty};\ZZ_\phi )\cong \zt, 
  \end{equation}
as appears in~\cite{CGLW}.  Another nontrivial SRE phase of order~2 was
introduced in~\cite{VS} where it was emphasized that this goes beyond the
group theory computation.  This SRE phase is predicted by
Proposal~\ref{thm:24}.

  \begin{proposition}[]\label{thm:52}
We have 
  \begin{equation}\label{eq:87}
     \SREba3{\pmo}{\id} = \IZ^{\tIZ+\fIZ+5}(BSO_3\times \RP^{\infty}) \cong
     \zt\times \zt.
  \end{equation} 
Furthermore, the map 
  \begin{equation}\label{eq:150}
     i \:H^5(BSO_3\times \RP^{\infty};\ZZ_\phi )\longrightarrow
     \IZ^{\tIZ+\fIZ+5}(BSO_3\times \RP^{\infty}) 
  \end{equation}
is surjective and 
  \begin{equation}\label{eq:151}
     H^5(BSO_3\times \RP^{\infty};\ZZ_\phi )\cong \zt\times \zt\times \zt. 
  \end{equation}
  \end{proposition}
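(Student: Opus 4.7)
First, I would compute the ordinary cohomology by K\"unneth with local coefficients. The cellular cochain complex for $\RP^{\infty}$ twisted by the sign representation yields $H^q(\RP^{\infty};\ZZ_\phi)\cong\zt$ for odd $q$ and $0$ for even $q$. Combined with $H^*(BSO_3;\ZZ)\cong\ZZ[p_1,W_3]/(2W_3)$ (so $H^0=\ZZ$, $H^3=\zt\{W_3\}$, $H^4=\ZZ\{p_1\}$, and $H^5=0$), only three summands survive on the $p+q=5$ diagonal: $(0,5)$ giving $\zt$ generated by $a^5$; $(4,1)$ giving $\zt$ generated by $p_1\cdot a$; and the Tor position $(3,3)$ giving $\Tor(\zt,\zt)\cong\zt$. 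This proves $H^5(BSO_3\times\RP^{\infty};\ZZ_\phi)\cong(\zt)^{\times 3}$.

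Next, for $\IZ^{\tIZ+\fIZ+5}(BSO_3\times\RP^{\infty})$ I would run the Atiyah--Hirzebruch spectral sequence with $E_2^{p,q}=H^p(BSO_3\times\RP^{\infty};\pi^q\IZ_{\text{tw}})$ converging to $\IZ^{\tIZ+\fIZ+p+q}$. From~\eqref{eq:116} the coefficients $\pi^q\IZ$ are $\ZZ,0,\zt,\zt,\zmod{24}$ in degrees $0,1,2,3,4$. Complex conjugation acts as $-1$ on the $\ZZ$ and $\zmod{24}$ summands and trivially on the two $\zt$'s, and $\tIZ$ contributes no further twist since $BSO_3$ is simply connected. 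On the $p+q=5$ diagonal the raw $E_2$ entries are thus $(5,0):\zt^3$, $(3,2):\zt^3$, $(2,3):\zt^2$, and $(1,4):\zt$, the last three computed by untwisted K\"unneth ($H^*(BSO_3;\zt)=\zt[w_2,w_3]$ and $H^*(\RP^{\infty};\zt)=\zt[a]$) and by $H^1(\RP^{\infty};(\zmod{24})^-)=\zmod{24}/2\zmod{24}=\zt$. The $d_2=Sq^2$ differential induced by the $k$-invariant between $\pi^2=\zt$ and $\pi^3=\zt$ of $\IZ$, together with the Bockstein joining these $\zt$-layers to $\pi^0=\ZZ$, should collapse everything off the $(5,0)$ column and kill one $\zt$ within it, leaving the stated $\zt\times\zt$. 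I would cross-check this via the universal coefficient sequence~\eqref{eq:122} applied to the twisted bordism spectrum $Y=\Sigma^3 MTSO_3\wedge(B\pmo)_+^{\fIZ}$, computing $\pi_4 Y$ and $\pi_5 Y$ directly.

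Finally, the map $i\colon H^5(BSO_3\times\RP^{\infty};\ZZ_\phi)\to\IZ^{\tIZ+\fIZ+5}(BSO_3\times\RP^{\infty})$ is induced by the bottom Postnikov section $H\ZZ\to\IZ$, so its image equals $E_\infty^{5,0}$ in the above spectral sequence. Surjectivity then reduces to the claim that among the three $\zt$-summands at $E_2^{5,0}$ it is exactly the Tor class at position $(3,3)$ that is killed, while $a^5$ and $p_1\cdot a$ survive permanently and generate the two $\zt$-factors of the target. These two surviving classes correspond to the two known phases: $a^5$ reproduces the group-cohomology SPT phase of~\eqref{eq:86}, while $p_1\cdot a$ detects the bosonic $E_8$ phase with half-quantized surface thermal Hall effect (the $p_1$ factor encoding the signature/$E_8$ content, the $a$ factor encoding the time-reversal twist).

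The main obstacle is pinning down the AHSS differentials off the $(5,0)$ column---identifying which monomials in $H^p(BSO_3;\zt)\otimes H^q(\RP^{\infty};\zt)$ are $d_2$-cycles and which are $d_2$-boundaries, and which receive the higher differential to the $\pi^0=\ZZ$ row. Because each differential on these $\zt$-rows is (up to Bockstein) a cup product with a fixed characteristic class, the K\"unneth structure on $E_2$ reduces the check to a finite tabulation; this is tedious but elementary. The delicate point is to verify that the total cancellation leaves precisely $\zt\times\zt$ in the $(5,0)$ corner and hits exactly the Tor class there, rather than, say, the $p_1\cdot a$ class.
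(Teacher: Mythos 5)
Your computation of $H^5(BSO_3\times\RP^{\infty};\ZZ_\phi)\cong(\zt)^{\times 3}$ by the twisted K\"unneth formula is correct and is essentially the paper's own derivation of~\eqref{eq:151}. The rest of the proposal has a genuine gap. You run the Atiyah--Hirzebruch spectral sequence for $\IZ$-cohomology directly over $BSO_3\times\RP^{\infty}$ and correctly tabulate the total-degree-$5$ entries of $E_2$, namely $(\zt)^{\times3},\,0,\,(\zt)^{\times3},\,(\zt)^{\times2},\,\zt,\,0$ in positions $(5,0)$ through $(0,5)$; but then the entire content of the proposition is deferred to the assertion that the differentials ``should collapse everything off the $(5,0)$ column and kill one $\zt$ within it.'' Six copies of $\zt$ in positions with $q>0$ must be shown to die, and exactly one $\zt$ in $E_2^{5,0}$ must be hit. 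The maps you would need are not just $Sq^2$ and a Bockstein: the incoming differentials to $(5,0)$ are a $d_3$ from $(2,2)$, a $d_4$ from $(1,3)$, and a $d_5$ from the $\zmod{24}$-row at $(0,4)$, and the differentials on the $q=2,3,4$ rows are governed by the full Postnikov tower of $\IZ$ (dual to $\pi_{\le 3}S^0$), further corrected by the Thom twist $\tIZ$ --- the twist is invisible on $E_2$ because $BSO_3$ is simply connected, but it does enter the differentials. None of these is identified or computed, and ``tedious but elementary'' understates the problem; without them neither the isomorphism with $\zt\times\zt$ nor the surjectivity of~$i$ is established. Your further claim that the kernel of~$i$ is exactly the Tor class is an additional unproved assertion (the proposition does not need it, and the paper's proof does not determine the kernel).

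For contrast, the paper factors the computation so that no nontrivial differential ever has to be evaluated. It first establishes $\pi_{\{0,\dots,5\}}\Sigma^3MTSO_3\cong\{\ZZ,0,0,0,\ZZ,\textnormal{finite}\}$ (Proposition~\ref{thm:50} together with a rational comparison to $MSO$), which packages all of the $BSO_3$-direction information into known homotopy groups. By~\eqref{eq:122} the mapping spectrum $A=\Map(\Sigma^3MTSO_3,\Sigma^5\IZ)$ then has only two nonzero homotopy groups ($\ZZ$ in degrees $1$ and $5$) in the relevant range, so the twisted AHSS over $\RP^{\infty}$ alone degenerates for degree reasons, leaving only the extension question $(\zt)^{\times2}$ versus $\zmod{4}$. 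That extension, and the surjectivity of~$i$, are settled by mapping in from $B=\Map(\Sigma^3MTSO_3,\Sigma^5H\ZZ)$, whose analogous group is the already-computed $(\zt)^{\times3}$. The honest version of your approach is in fact your own proposed cross-check --- compute the homotopy groups of the (twisted) Madsen--Tillmann spectrum first and then apply~\eqref{eq:122} --- which is precisely what the paper does; I would recommend restructuring your argument along those lines rather than attempting the direct spectral sequence over the product.
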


\noindent 
 We defer most of the proof to the appendix; here we briefly sketch two ways
to compute~\eqref{eq:151}.
 
The first is a direct approach.  Use the chain complex\footnote{This is the
\emph{minimal chain complex}~\cite[Proposition~3E.3]{Ha} derived from the
homology of~$BSO_3$.  }
  \begin{equation}\label{eq:88}
     \ZZ\xleftarrow{\;\;0\;\;} 0\xleftarrow{\;\;0\;\;}
     \ZZ\xleftarrow{\;\;2\;\;} \ZZ\xleftarrow{\;\;0\;\;}
     \ZZ\xleftarrow{\;\;0\;\;} \ZZ\xleftarrow{\;\;2\;\;} \ZZ \cdots 
  \end{equation}
for~$BSO_3$ and the chain complex 
  \begin{equation}\label{eq:89}
     \ZZ\xleftarrow{\;\;2\;\;} \ZZ\xleftarrow{\;\;0\;\;}
     \ZZ\xleftarrow{\;\;2\;\;} \ZZ\xleftarrow{\;\;0\;\;}
     \ZZ\xleftarrow{\;\;2\;\;} \ZZ\xleftarrow{\;\;0\;\;} \ZZ \cdots 
  \end{equation}
for $\RP^{\infty}$ with the nontrivial local system $\ZZ_\phi
\to\RP^{\infty}$.  Compute the cohomology of the cochain complex obtained by
applying $\Hom(-,\ZZ)$ to the tensor product of~\eqref{eq:88}
and~\eqref{eq:89}.  An alternative approach is to apply the Kunneth formula
for cohomology~\cite[\S5.5]{Sp}, which in this case gives a split short exact
sequence
  \begin{multline}\label{eq:90}
     0\longrightarrow \bigl[H^{\bullet }(BSO_3;\ZZ)\otimes H^{\bullet
     }(\RP^{\infty};\Zf) \bigr]^5 \longrightarrow H^5(BSO_3\times
     \RP^{\infty};\Zf) \\\longrightarrow \bigl[H^{\bullet
     }(BSO_3;\ZZ)*H^{\bullet }(\RP^{\infty};\Zf) \bigr]^6 \longrightarrow 0 
  \end{multline}
Here `$*$'~denotes the torsion product of abelian groups.  The tensor product
in the kernel of~\eqref{eq:90} is isomorphic to~$\zt\times \zt$ generated by
the nonzero class in~$H^5(\RP^{\infty};\Zf)\cong \zt$ and the tensor product
of $p_1\in H^4(BSO_3;\ZZ)$ and the nonzero class $a\in
H^1(\RP^{\infty};\Zf)$.  The quotient group in~\eqref{eq:90} is isomorphic
to~$\zt$, which is the torsion product
$H^3(BSO_3;\ZZ)*H^3(\RP^{\infty};\Zf)$.

  \begin{claim}[]\label{thm:32}
 The image~$i(p_1\otimes a)$ of $p_1\otimes a$ under~\eqref{eq:150} ~is the
long-range effective topological theory of the SRE phase identified
in~\cite{VS}.   
  \end{claim}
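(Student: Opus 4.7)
The plan is to verify Claim~\ref{thm:32} by making the bulk invertible field theory $i(p_1\otimes a)$ explicit enough to extract its boundary behavior, and then matching that behavior with the defining property of the [VS] phase.

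First, I would unpack $i(p_1\otimes a)$ as a concrete $H$-type theory. The integral class $p_1\otimes a \in H^5(BSO_3\times \RPi;\Zf)$ pulls back along classifying maps for (stabilized) tangent bundle and time-reversal bundle to yield a $\CZ$-valued invariant of closed 5-manifolds $X$ with these structures; because $p_1\otimes a$ has order $2$ after $i$, its numerical values lie in $\tfrac12\ZZ/\ZZ$. Evaluating on product families $X = M^4\times S^1_\sigma$, where $S^1_\sigma$ is the circle with the nontrivial $\pmo$-bundle (representing a time-reversal twisted boundary condition), twisted Poincar\'e duality reduces the pairing to $\tfrac12\langle p_1(TM^4),[M^4]\rangle \pmod\ZZ$. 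This already identifies the content of the bulk theory with the signature of $M^4$ weighted by the time-reversal twist.

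Second, I would apply the boundary/termination framework of \S\ref{sec:7} to a 4-dimensional bulk with a codimension-$1$ domain wall $N^3$ across which the time-reversal bundle changes trivialization (i.e.\ the classifying map to $\RPi$ sweeps a hemisphere transverse to $N^3$). Because the holonomy of the $\pmo$-bundle is concentrated on $N^3$, the bulk class $p_1\otimes a$ localizes to a 4-dimensional secondary theory on any null-bordism $W^4$ of $N^3$, whose partition function is $\tfrac12\langle p_1(TW^4),[W^4]\rangle$; by Proposition~\ref{thm:50}(iii)--(iv) this is a generator of $[\Sigma^3MTSO_3,\Sigma^4I\ZZ]\cong\ZZ$, realized analytically by the APS $\eta$-invariant of the signature operator as in \S\ref{subsec:6.3}. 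Thus the induced boundary/domain-wall theory is a ``half-$E_8$'' theory of chiral central charge $c=4$.

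Third, I would match this with the characterization of the [VS] phase. That phase is defined by the fact that on any 2-surface where time-reversal is broken it exhibits a half-quantized thermal Hall response, i.e.\ the $E_8$ state of chiral central charge $c=8$ on one side of a time-reversal domain wall and its conjugate on the other, so that the wall itself carries a $c=4$ theory. By Proposition~\ref{thm:52} the group $\SREba3{\pmo}{\id}$ is $\zt\times\zt$ with exactly one nontrivial class beyond the group cohomology subgroup $H^5(\RPi;\Zf)$, and the computation above shows that $i(p_1\otimes a)$ realizes exactly the domain-wall response characterizing the [VS] phase; the two must therefore coincide.

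The main obstacle is the second step: making the bulk-boundary correspondence numerically precise, so that the coefficient of $\Sign$ on the wall is exactly $\tfrac12$ (corresponding to $c=4$) rather than some other integer multiple. This requires tracking the twisting by $w_1$ of the normal bundle to the domain wall, identifying the resulting secondary invariant with the APS $\eta$-invariant for the signature operator discussed in \S\ref{subsec:6.3}, and explaining why the bulk theory requires the integral class $p_1$ (not $\tfrac12p_1$) while its boundary trace is half-integral. The key input is that $a\in H^1(\RPi;\Zf)$ has order $2$, so pairing $p_1$ against $a$ naturally produces a factor of $\tfrac12$, but translating this into the correct normalization of the boundary $\eta$-invariant is the delicate step that \S\ref{sec:7} must carry out.
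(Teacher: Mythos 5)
Your endpoint is the right one --- identify the surface behavior of $i(p_1\otimes a)$ with a chiral theory of central charge $c\equiv4\pmod 8$ and then match against the characterization of the phase in [VS] via its $SO_8$-torus Chern--Simons surface termination --- and your final matching step (Proposition~\ref{thm:52} plus trichotomy~(iii) of~\eqref{eq:105}) is essentially what the paper does. But the route you take to extract the number $c\equiv 4\pmod 8$ from $p_1\otimes a$ has two genuine gaps.

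First, evaluation on products $M^4\times S^1_\sigma$ does not detect the class. The domain of a $d=3$ theory of $H$-type consists of manifolds carrying a rank-$3$ bundle stably isomorphic to the tangent bundle (Remark~\ref{thm:17}, Remark~\ref{thm:53}); for $M^4\times S^1_\sigma$ this forces $\chi(M^4)$, and hence $\Sign(M^4)$, to be even, and then $\tfrac12\langle p_1(TM^4),[M^4]\rangle=\tfrac32\Sign(M^4)$ is an integer, so the quantity you propose to read off vanishes modulo~$\ZZ$ on every admissible product. Evaluating instead on arbitrary oriented $5$-manifolds computes the image of $p_1\otimes a$ through $MSO$ rather than through the Madsen--Tillmann spectrum, which is not the map~\eqref{eq:150}. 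This is precisely the subtlety of \S\ref{subsec:2.7} and Remark~\ref{thm:44}: an order-$2$ class is pinned down not by closed-manifold partition functions but by desuspending it. The paper's mechanism is the coefficient sequence $0\to\Zt\to\tfrac12\Zt\to\tfrac12\Zt/\Zt\to0$: $p_1\otimes a$ is the Bockstein of $\tfrac12 p_1\bmod\ZZ\in H^4(BSO_3;\tfrac12\ZZ/\ZZ)$, and that degree-$4$ class maps to the order-$2$ element of $[\Sigma ^4MTSO_3,\Sigma ^4I\CZ]\cong\CZ$ (via Proposition~\ref{thm:50}(iv)), which comparison with~\eqref{eq:79} identifies as the framing-anomaly theory $\alpha_{\bar c}$ with $\bar c\equiv4\pmod 8$.

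Second, your domain-wall localization is asserted rather than derived, and it conflates a time-reversal domain wall inside the bulk with the spatial surface termination that \S\ref{sec:7} actually treats. In the paper the surface termination is a \emph{relative} theory whose anomaly is the (desuspended) bulk class; once that anomaly is identified with $\alpha_{\bar c=4}$, the known fact that $\alpha_{\bar c=4}$ is the framing anomaly of any quantum Chern--Simons theory satisfying~\eqref{eq:111} --- in particular the $SO_8$ maximal-torus theory proposed in [VS] and realized in [BCFV] --- finishes the argument. No localization onto a null-bordism $W^4$ of the wall is needed, and making your version precise (well-definedness independently of $W^4$, and the normalization $\tfrac12$) would in effect force you back to the $\eta$-invariant discussion of \S\ref{subsec:6.3} that the anomaly-theory route already packages.
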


\noindent
 We argue for this claim in~\S\ref{sec:7}.

  \begin{remark}[]\label{thm:56}
 Since $i(p_1\otimes a)$~is torsion we can lift it to the group
$\ICZ^{\tIZ+\fIZ+4}(BSO_3\times \RP^{\infty}) $ where it is easier to
identify a particular theory in this class; see Remark~\ref{thm:44}. 
  \end{remark}

   \section{Boundary conditions and long-range topological field theories}\label{sec:7}
% lastsubsec@  2

We begin in~\S\ref{subsec:7.1} with a general discussion of spatial boundary
conditions for field theories and condensed matter systems.  We specialize to
the case at hand: the bulk theory is gapped and the long-range topological
theory is invertible.  We apply these general ideas in~\S\ref{subsec:7.2} to
argue for Claim~\ref{thm:32}, which locates the 3d $E_8$~phase with
half-quantized surface thermal Hall effect introduced in~\cite{VS} and
further investigated in~\cite{BCFV}.  We recover some key aspects of that
theory from our topological viewpoint.

  \subsection{Boundary terminations}\label{subsec:7.1}

Suppose we are given a theory~$F$ in $n$ spacetime dimensions.  It may be a
quantum field theory or a condensed matter theory.  Then to a compact
$(n-1)$-manifold~$Y$ \emph{with empty boundary} we obtain a complex vector
space of states.  We would like to extend to allow compact
$(n-1)$-dimensional manifolds~$Y$ which have nonempty boundary.  These
boundaries are \emph{spatial}, not \emph{temporal}.  In this case we expect
to impose boundary conditions~$\beta $ which essentially close off the
boundary.  In other words, the pair~$(Y,\beta )$ behaves as a closed
$(n-1)$-manifold for the pair of theories~$(F,\beta )$, and the $(F,\beta
)$~theory attaches to it a vector space of states.  (Without the boundary
condition~$\beta $ we expect instead a module for an algebra more complicated
than~$\CC$, or an object in a category more complicated than~$\Vect$.)
Furthermore, we expect~$\beta $ to be \emph{local}.  In classical physics a
spatial boundary condition is typically a \emph{local constraint} on fields:
a boundary condition for a system of classical partial differential
equations.  In quantum physics a spatial boundary condition is a
\emph{relative field theory} (\S\ref{subsec:2.3})
  \begin{equation}\label{eq:103}
     \beta \:\tau \mstrut _{\le \,n-1}F\longrightarrow\bo . 
  \end{equation}
The theory~$F$ evaluates on~$Y$ to a map $F(Y)\:\Vect\to F(\partial Y)$ and
the boundary condition~$\beta $ evaluates on~$\partial Y$ to a map $\beta
(\partial Y)\:F(\partial Y)\to\Vect$.  The composition $\beta (\partial
Y)\circ F(Y)\:\Vect\to\Vect$ is tensor product with the vector space
associated to~$Y$ in the theory~$(F,\beta )$.  If $F$~is an invertible field
theory, then $\beta $~is an anomalous theory with anomaly ~$F$.  We call
~$(F,\beta )$ a \emph{bulk-boundary pair}.

  \begin{remark}[Vocabulary]\label{thm:36}
 A quantum boundary condition~$\beta $ in quantum field theory is sometimes
called a \emph{D-brane}, a term most appropriate in the context of
2-spacetime dimensional conformal field theories.  In condensed matter
physics $\beta $~ goes by a name like \emph{edge termination} or
\emph{surface termination}, depending on the dimension of the theory.
Sometimes the word `excitation' is used in place of `termination'.
  \end{remark}

  \begin{remark}[]\label{thm:38}
 If $F$~is a \emph{topological} field theory with values in an $(\infty
,n)$-category~$\sC$, then a boundary condition is a 1-morphism $F(\pt)\to
\bo$ in~$\sC$.  The dual\footnote{The choice of direction of the arrow
in~\eqref{eq:103} reflects our choice that $\partial Y$~is outgoing rather
than incoming.  There is an equivalent exposition with the other choice, and
we would not need the dual here.} map $\bo\to F(\pt)$ may be considered as an
``object in~$F(\pt)$''.  For example, if $n=2$ and $\sC$~is a 2-category of
categories, then $\beta $~is literally an object in the 1-category~$F(\pt)$.
Boundary conditions in topological theories are a special case of a much more
general construction~\cite[Example 4.3.22]{L}.
  \end{remark}

  \begin{remark}[]\label{thm:37}
 If $F$~ is a $d$-space dimensional theory of $H$-type, then \eqref{eq:103}~
is replaced by
  \begin{equation}\label{eq:104}
     \beta \:\tau \mstrut _{\le \,d-1}F\longrightarrow\bo . 
  \end{equation}
  \end{remark}

There is not a unique boundary condition for a given~$F$, but rather
$F$~determines a collection of boundary conditions.  These formal
considerations can lead to physical consequences.  One important example in
condensed matter physics is the \emph{integer quantum Hall effect};
see~\cite{W4} for an account aimed at mathematicians.  

Suppose the theory~$F$ is the long-range topological approximation to a
gapped $d$-space dimensional system of $H$-type.  Then if a boundary
condition produces a combined bulk-boundary pair which is still gapped, we
expect that the long-range topological approximation is a bulk-boundary
pair~$(F,\beta )$ of topological theories.  If $F$~describes a short-range
entangled phase---that is, $F$~is invertible---then $\beta $~is a
$(d-1)$-space dimensional anomalous theory with anomaly~$F$.  We implicitly
assume that the truncation~$\tau \mstrut _{\le\,d-1}F$ of~$F$ is nontrivial.
If, furthermore, $F$~describes an SPT~phase, then we arrive at the following
trichotomy.
  \begin{equation}\label{eq:105}
  \begin{aligned}
     \textit{A long-range}&\textit{  effective boundary
     condition:}\\\textrm{(i)}&\textit{ produces a 
      gapless bulk-boundary pair which preserves the symmetry,}\\
     \textrm{(ii)}&\textit{ is non-anomalous and 
     breaks the symmetry, or}\\ \textrm{(iii)}&\textit{ is anomalous,
     symmetric, and 
     exhibits long-range entanglement.}
  \end{aligned}
  \end{equation}
Possibility~(ii) arises since by the definition of an SPT phase $F$~restricts
to a trivial theory when the symmetry is broken, and a theory relative to the
trivial theory is non-anomalous.  For~(iii) we observe that an invertible
theory relative to an invertible theory~$\trF$ is a trivialization of~$\trF$,
so if $\trF$~is not trivial then any relative theory must not be invertible.
In the physics lingo it is not short-range entangled but rather is long-range
entangled---it ``exhibits topological order''.  The trichotomy~\eqref{eq:105}
is a restatement of an assertion in the introduction to~\cite{VS}.

\newpage
  \subsection{The invertible field theory of an exotic $d=3$ bosonic phase}\label{subsec:7.2}

We turn now to the SPT phase identified as $i(p_1\otimes a)$ at the end of
\S\ref{subsec:6.5}.  

The argument that $i(p_1\otimes a)$ corresponds to the 3d $E_8$~phase with
half-quantized surface thermal Hall effect is based on~(iii) in the
trichotomy~\eqref{eq:105}.  Consider a long-range effective boundary
condition---surface termination---which is time-reversal symmetric and
anomalous with anomaly~$i(p_1\otimes a)$.  This boundary condition is a
relative 3-spacetime dimensional topological theory.  We claim that any
Chern-Simons theory with chiral central charge
  \begin{equation}\label{eq:111}
     c\equiv 4\pmod8 
  \end{equation}
is such an effective boundary condition: it satisfies~(iii) in the
trichotomy.  (See~\S\ref{subsec:6.3} for a topological discussion of chiral
central charge.)  One of the simplest examples is Chern-Simons theory for the
maximal torus of~$SO_8$, which was proposed for this role in~\cite[\S
VII]{VS} and was realized as a boundary condition in the exactly soluble
Hamiltonian constructed in~\cite{BCFV}.
 
To justify the claim begin with the short exact coefficient sequence
  \begin{equation}\label{eq:112}
     0\longrightarrow \Zt\longrightarrow \tfrac 12\Zt\longrightarrow \tfrac
     12\Zt/\Zt \longrightarrow 0 
  \end{equation}
in which the first map is the inclusion.  Identify $\tfrac 12\Zt/\Zt\cong
\tfrac 12\ZZ/\ZZ\cong \zt$, which in particular is untwisted, and so
write\footnote{This maneuver allows us to write the torsion class~$p_1\otimes
a$ as a class in one lower degree, and so identify it with particular field
theories.  This circumvents the issues raised in~\S\ref{subsec:2.7} about
nontorsion classes; see Remark~\ref{thm:44}.} $p_1\otimes a$ as the image of
  \begin{equation}\label{eq:113}
     \tfrac 12 p_1\!\!\pmod \ZZ\; \in H^4(BSO_3;\tfrac 12\ZZ/\ZZ) \subset
     H^4(BSO_3\times \RP^{\infty};\tfrac 12\ZZ/\ZZ) 
  \end{equation}
under the connecting homomorphism in the long exact sequence deduced
from~\eqref{eq:112}.  The image~$i$ in the Brown-Comenetz dual of the sphere
is computed via the sequence of maps 
  \begin{multline}\label{eq:152}
     H^4(BSO_3;\tfrac 12\ZZ/\ZZ)\cong H^4(\Sigma ^4MTSO_3;\tfrac
     12\ZZ/\ZZ)\\\longrightarrow H^4(\Sigma ^4MTSO_3;\CZ)\cong [\Sigma
     ^4MTSO_3,\Sigma ^4H\CZ]\\\longrightarrow [\Sigma ^4MTSO_3,\Sigma
     ^4I\CZ]\cong \CZ. 
  \end{multline}
The last isomorphism follows from Proposition~\ref{thm:50}(iv).  Since
\eqref{eq:113}~has order~2, so does its image, and checking
against~\eqref{eq:79} we identify it with the anomaly theory~$\alpha _{\bar
c}$ with~$\bar c \equiv 4\pmod 8$.  As explained in~\S\ref{subsec:6.3},
$\alpha _{\bar c=4}$~ is the (framing) anomaly of any quantum Chern-Simons
theory whose chiral central charge satisfies~\eqref{eq:111}.

\appendix

   \section{Some homotopy groups of Madsen-Tillmann spectra}\label{sec:8}
% lastsubsec@000
 
We prove Proposition~\ref{thm:50} and Proposition~\ref{thm:52}.  I thank
Oscar Randal-Williams for sharing his expertise, for correcting a mistake in
a previous version of Proposition~\ref{thm:50}, and for providing a few
arguments in the proof.

First recall some facts about Madsen-Tillmann spectra.  Set $X_n=\Sigma
^nMTSO_n$.  Then $X_1\simeq S^0$.  Let $\Sigma ^{\infty}Y$~denote the
suspension spectrum of a pointed space~$Y$.  The fibration
  \begin{equation}\label{eq:153}
     X_{n-1}\longrightarrow X_n\longrightarrow \Sigma
     ^n\Sigma ^{\infty}(BSO_n)_+  
  \end{equation}
is proved in \cite[Proposition~3.1]{GMTW} and \cite[Lemma~3.8]{FHT}.  For any
space~$Y$ we have
  \begin{equation}\label{eq:154}
     \Sigma ^{\infty}(Y_+)\simeq S^0\vee \Sigma ^{\infty}Y. 
  \end{equation}
The stable homotopy groups of a pointed space~$Z$ are $\pi _j^sZ=\pi \mstrut
_j\Sigma ^{\infty}Z$.  The oriented version of~\eqref{eq:44} expresses the
Thom spectrum~$MSO$ as the colimit of a sequence of maps $X_1\to X_2\to
X_3\to\cdots$.  The homotopy groups of~$MSO$ are Thom's oriented bordism
groups.  The long exact sequence of homotopy groups deduced
from~\eqref{eq:153} implies that
  \begin{equation}\label{eq:155}
     \pi _jX_n\xrightarrow{\;\;\cong \;\;}\pi _jMSO,\qquad j<n, 
  \end{equation}
is an isomorphism and that there is an exact sequence 
  \begin{equation}\label{eq:156}
     \pi _{n+1}X_{n+1}\xrightarrow{\;\;\chi \;\;}\ZZ\longrightarrow \pi
     _nX_n\longrightarrow \pi _nMSO\longrightarrow 0. 
  \end{equation}
An element of~$\pi _{n+1}X_{n+1}$ is represented by a closed oriented
$(n+1)$-manifold~$W$, and its image under $\chi$ is the Euler number
of~$W$.  For $n$~even the map~$\chi $ is zero.  For~$n=3$ the map~$\chi $ is
surjective, since $\chi (\CP^2\;\#\;S^1\times S^3)=1$.  

  \begin{proof}[Proof of Proposition~\ref{thm:50}]
 First apply~\eqref{eq:156} with~$n=3$ to derive the exact sequence 
  \begin{equation}\label{eq:172}
     \pi _4X_3\longrightarrow \pi _4X_4\xrightarrow{\;\;\chi \;\;} \ZZ
     \longrightarrow \pi_3X_3\longrightarrow \pi _3MSO 
  \end{equation}
As remarked above the Euler characteristic map~$\chi $ is onto, and since
$\pi _3MSO=0$ we deduce $\pi _3X_3=0$, which is (iii) in the proposition.
Next, apply~\eqref{eq:156} with~$n=4$ to deduce $\pi _4X_4\cong \ZZ\times
\ZZ$ and the composition $\pi _4X_4\to \pi _4MSO\xrightarrow{\;\Sign\;}\ZZ$
is surjective.  Then a stretch of the long exact sequence of homotopy
groups deduced from~\eqref{eq:153} with~$n=4$ is
  \begin{equation}\label{eq:159}
     \begin{gathered} \xymatrix@R=.7em{\pi ^s_1(BSO_4)_+\ar[r]\ar@{=}[d] & 
     \pi _4X_3\ar[r]& \pi _4X_4\ar[r]^<<<<<\chi \ar@{=}[d] & \pi _0^s
     (BSO_4)_+\ar[r] \ar@{=}[d] & \pi \mstrut _3X\mstrut _3\ar@{=}[d]\\ \zt
     && \ZZ\times \ZZ & \ZZ & 0} \end{gathered} 
  \end{equation}
from which $\pi _4X_3 \cong \ZZ\textnormal{ or }\ZZ\times \zt$.  To see that
it is the former, let $F$~be the fiber of the spectrum map $X_3\to H\ZZ$
which represents the generator of~$H^0(X_3;\ZZ)\cong \ZZ$.  So there is a
cofiber sequence 
  \begin{equation}\label{eq:175}
     F\longrightarrow  X_3\longrightarrow  H\ZZ. 
  \end{equation}
Then \eqref{eq:155}~and the vanishing of~$\pi _3X_3$ imply that $\pi
_jF=0,\,j\le3$, whence the Hurewicz map $\pi _4F\to H_4F$ is an isomorphism.
In addition, the map $\pi _4F\to \pi _4X_3$ is an isomorphism.
Figure~\ref{fig:11} is a schematic depiction of the long exact sequence of
  \begin{figure}[ht]
  \centering
  \includegraphics[scale=.8]{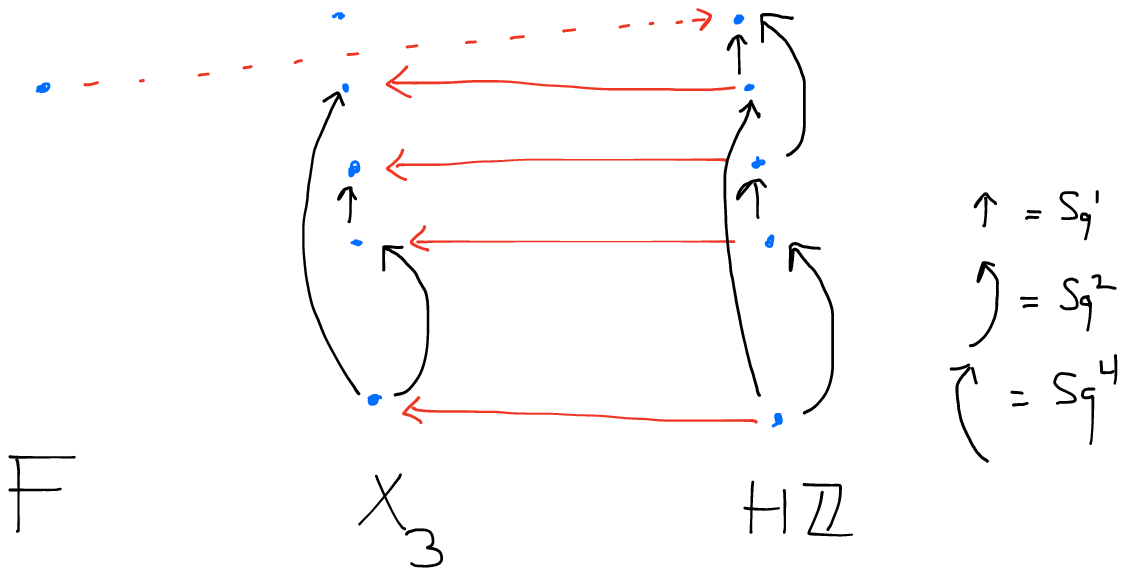}
  \caption{Long exact sequence in $\ft$-cohomology induced by $F\to X_3\to
  H\ZZ$}\label{fig:11}
  \end{figure}
$\ft$-cohomology groups induced by the cofibration~\eqref{eq:175}, where
$\ft=\zt$ is the field of 2~elements.  The $\ft$-cohomology of a spectrum is
a $\ZZ$-graded $\ft$-vector space which is a module for the Steenrod algebra.
The dots indicate basis elements and the vertical arrows the action of the
Steenrod operations~$Sq^1,Sq^2$.  The degrees in the figure ascend from~0
to~5.  The $\ft$-cohomology of~$H\ZZ$ is isomorphic to the group of
cohomology operations $H\ZZ\to H\zt$ and is computed by a theorem of Serre.
The generators are the operations~$Sq^2,Sq^3, Sq^4, Sq^5$ (preceded by
reduction modulo~2), and the action of the Steenrod operations is given by
the Adem relations.  For~$X_3$ the generators are $w_2u, w_3u, w_2^2u,
w_2w_3u$ where $u\in H^0(X_3;\ft)$ is the mod~2 Thom class of the virtual
bundle~\eqref{eq:145}, which stably is \emph{minus} the canonical rank~3
bundle ~$S(3)\to BSO_3$.  Its total Stiefel-Whitney class is  
  \begin{equation}\label{eq:180}
     \tilde w=\frac{1}{1+w_2+w_3}= 1 + w_2 + w_3 + w_2^2 + \cdots, 
  \end{equation}
the inverse of the Stiefel-Whitney class of $S(3)\to BSO_3$.  The action of
the total Steenrod operation~$Sq=1 + Sq^1 + Sq^2 +\cdots$ is $Sq(u)=\tilde
wu$.  The horizontal arrow in degree~0 follows from the definition of $X_3\to
H\ZZ$, and the arrows in degrees~2,3,4 from the module structure, as does the
lack of a horizontal arrow in degree~5.  Exactness then implies the existence
of a class in~$H^4(F;\ft)$ which maps to~$Sq^5\in H^5(H\ZZ;\ft)$.  Thus
  \begin{equation}\label{eq:176}
     \ft\cong H^4(F;\ft)\cong \Hom(H_4F,\ft)\cong \Hom(\pi _4F,\ft)\cong
     \Hom(\pi _4X_3,\ft) 
  \end{equation}
and we conclude $\pi _4X_3\not\cong \ZZ\times \zt$, whence $\pi _4X_3\cong
\ZZ$.  For the last claim in~(iv) we revisit~\eqref{eq:172}.  A 4-manifold
which represents a class in~$\pi _4X_3$ has vanishing~$w_4$, since $w_4$~is a
stable characteristic class and vanishes for rank~3 bundles.  Thus its Euler
number is even, and since the Euler number and signature are congruent
modulo~2 its signature is also even.  Then observe that the 4-manifold
$(\CP^2)^{\#2} \;\#\; (S^1\times S^3)^{\#3}$ represents an element of~$\pi
_4X_3$ (since it has vanishing Euler characteristic, so a nonvanishing vector
field which splits a line bundle off its tangent bundle) and has signature~2.

Similar techniques prove~(i) and~(ii).  (An alternative is to use the
Madsen-Weiss theorem~\cite{MaWe} and known facts about the stable homology of
mapping class groups of surfaces.)  First 
  \begin{equation}\label{eq:177}
     \pi \mstrut _{\{0,1,2\}}X_2\cong \{\ZZ\,,\,0\,,\,\ZZ\} 
  \end{equation}
from~\eqref{eq:155} and \eqref{eq:156} with~$n=2$.  There is a nontrivial
$k$-invariant connecting these homotopy groups; if not, then
$H^2(X_2;\ft)\cong H^2(BSO_2;\ft)$~would be 2-dimensional.  Let $C$~denote
the spectrum with these two nonzero homotopy groups and nontrivial
$k$-invariant.  Its $\ft$-cohomology 
  \begin{figure}[ht]
  \centering
  \includegraphics[scale=.8]{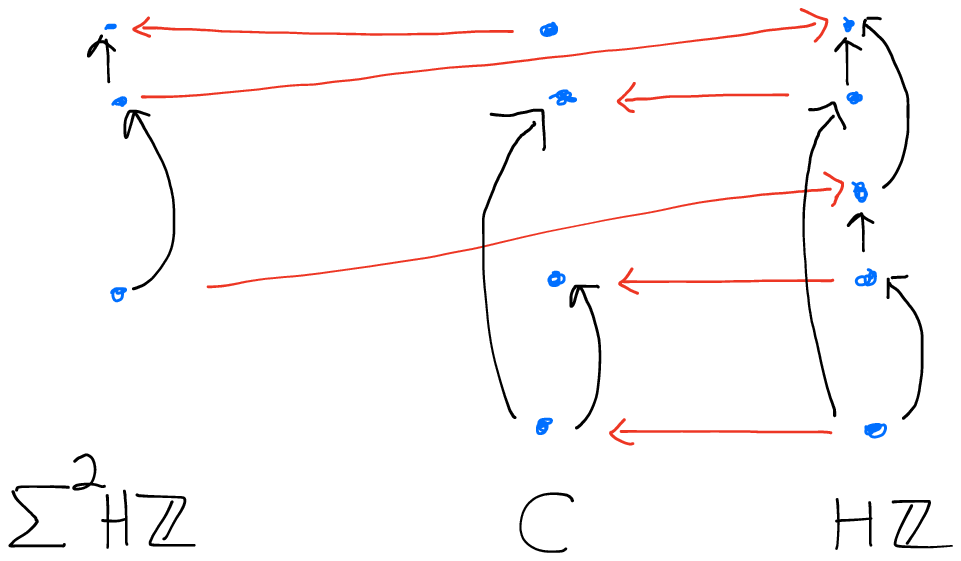}
  \caption{Long exact sequence in $\ft$-cohomology induced by $\Sigma ^2H\ZZ\to C\to
  H\ZZ$}\label{fig:12} 
  \end{figure}
is worked out in Figure~\ref{fig:12} using the cofiber sequence $\Sigma
^2H\ZZ\to C\to H\ZZ$.  All cohomology in degree~1 vanishes.  Also, the
nontrivial connecting map $H^2(\Sigma ^2H\ZZ;\ft)\to H^3(H\ZZ;\ft)$ is the
$k$-invariant.  Let $F'$ be the fiber of the Postnikov map $X_2\to C$, and
consider the cofiber sequence $F'\to X_2\to C$.
  \begin{figure}[ht]
  \centering
  \includegraphics[scale=.6]{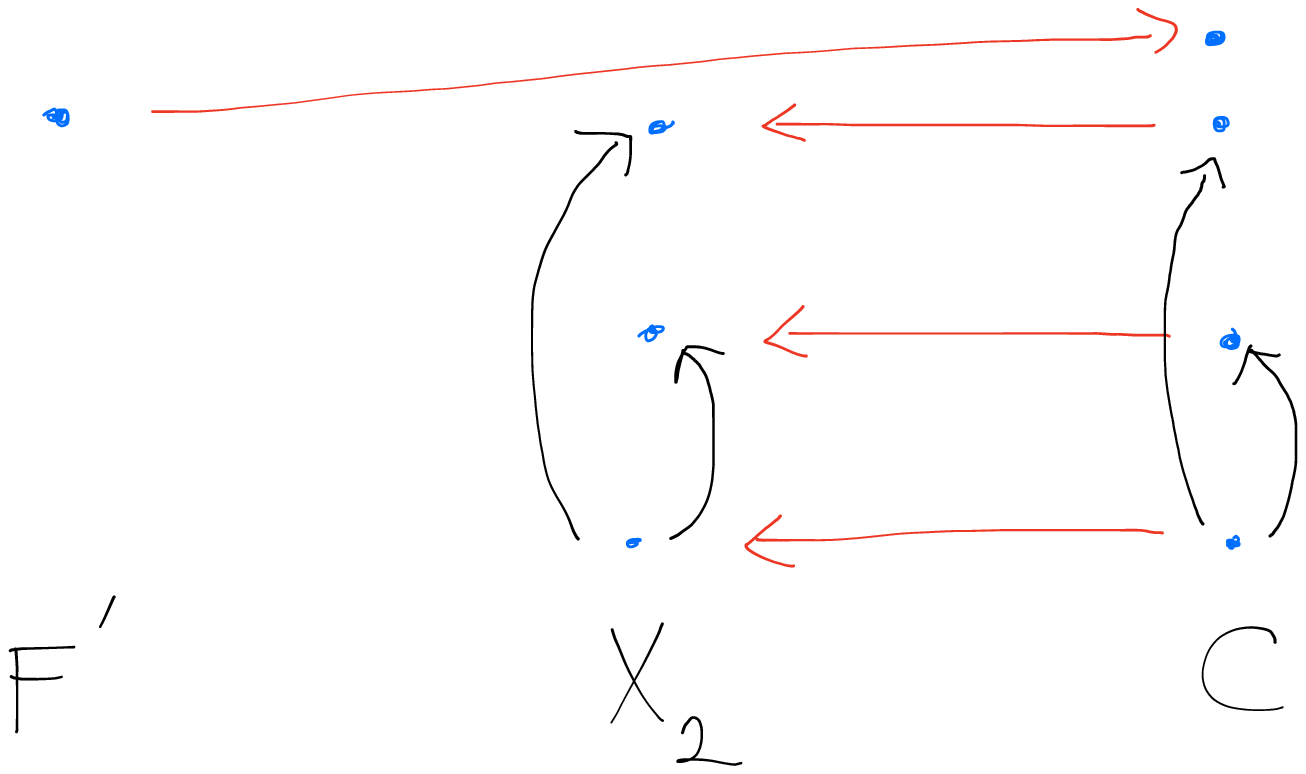}
  \caption{Long exact sequence in $\ft$-cohomology induced by $F'\to X_2\to
  C$}\label{fig:13}
  \end{figure}
The induced maps on $\ft$-cohomology are worked out in Figure~\ref{fig:13};
the nonzero cohomology is in degrees~0,2,4,5.  We deduce 
  \begin{equation}\label{eq:178}
     H^4(F';\ft)\cong \ft, 
  \end{equation}
and from Hurewicz $\pi _{\le3}F'=0$ and $\pi _4F'\to H_4F'$ is an
isomorphism.  Now the long exact sequence of homotopy groups induced from ~
\eqref{eq:153} with~$n=2$ includes the stretch
  \begin{equation}\label{eq:157}
     \begin{gathered} \xymatrix@R=.7em{\pi _4S^0\ar[r]\ar@{=}[d]&\pi
     _4X_2\ar[r] &  
     \pi _2^s(BSO_2)_+\ar[r]\ar@{=}[d]& \pi _3S^0 \ar@{=}[d] \\ 
     0&&\ZZ\times \zt& \zmod{24}} \end{gathered} 
  \end{equation}
This implies $\pi _4X_2\cong \ZZ \textnormal{ or }\ZZ\times \zt$;
\eqref{eq:178}~rules out the latter since $\pi _4F'\cong \pi _4X_2$ and
$H^4(F';\ft) \cong \Hom(H _4F',\ft)\cong \Hom(\pi _4F',\ft)$.  For the last
statement in~(ii) consider the long exact sequence of homotopy groups induced
from~\eqref{eq:153} with~$n=3$:
  \begin{equation}\label{eq:179}
     \begin{gathered} \xymatrix@R=.7em{\pi _4X_2\ar[r]\ar@{=}[d]&\pi
     _4X_3\ar[r] \ar@{=}[d]& \pi _1^s(BSO_3)_+\ar[r]\ar@{=}[d]& \pi _3X_2
     \ar@{=}[d] \\ \ZZ&\ZZ&\zt& 0} \end{gathered} 
  \end{equation}
and so the first homomorphism is multiplication by~$\pm2$ on generators.

  \end{proof}

  \begin{proof}[Proof of Proposition~\ref{thm:52}]
 As a preliminary we prove that $\pi _5X_3$~is finite.  Since this group is
finitely generated, an equivalent assertion is $\pi _5X_3\otimes \QQ=0$.  To
prove this observe that $X_3\to MSO$ induces an isomorphism on rational
homology in degrees~$\le 7$, whence also on rational homotopy groups in that
range.  Collating with~\eqref{eq:155} and facts in the previous proof we have
  \begin{equation}\label{eq:162}
     \pi \mstrut _{\{0,1,2,3,4,5\}}X_3\cong
     \{\ZZ,0,0,0,\ZZ,\textnormal{finite}\}.  
  \end{equation}
 
Introduce the mapping spectrum 
  \begin{equation}\label{eq:163}
     A=\Map(X_3,\Sigma ^5I\ZZ). 
  \end{equation}
From\footnote{More simply, the $\ZZ$-graded homotopy group of the Anderson
dual to~$X_3$ is the derived $R\Hom(\pi _{\bullet }X_3,\ZZ)$; there is a
shift of~5 in~\eqref{eq:164}}~\eqref{eq:122} and~\eqref{eq:162} we deduce
  \begin{equation}\label{eq:164}
     \pi\mstrut  _{\{0,1,2,3,4,5\}}A\cong \{0,\ZZ,0,0,0,\ZZ\},
  \end{equation}
and $\pi \mstrut _{\ge6}A=0$.  The cohomology group in~\eqref{eq:87}
is~$A^{\fA}(\RP^{\infty})$, which we compute using the Atiyah-Hirzebruch
spectral sequence.  The rows in the $E_2$~page, shown in Figure~\ref{fig:9},
  \begin{figure}[ht]
  \centering
  \includegraphics[scale=.8]{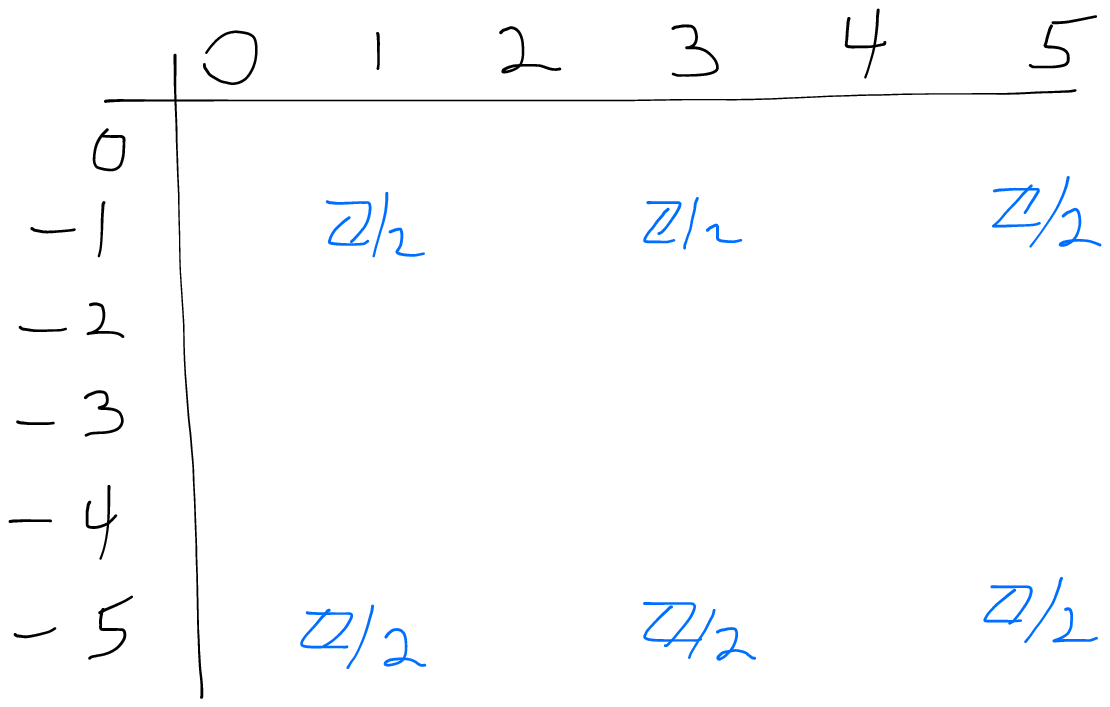}
  \caption{Computation of $A^{\fA}(\RP^{\infty})$}\label{fig:9}
  \end{figure}
are twisted cohomology groups of~$\RP^{\infty}$.  All differentials vanish in
this range, for degree reasons, whence $A^{\fA}(\RP^{\infty})$~is isomorphic
to $\zt\times \zt$ or~$\zmod4$, depending on whether there is a group
extension.
 
Define 
  \begin{equation}\label{eq:165}
     B=\Map(X_3,\Sigma ^5H\ZZ). 
  \end{equation}
Then 
  \begin{equation}\label{eq:166}
     \pi _jB = [\Sigma ^jX_3,\Sigma ^5H\ZZ]\cong H^{5-j}(X_3;\ZZ)\cong
     H^{5-j}(BSO_3;\ZZ), 
  \end{equation}
where the last step is the Thom isomorphism.  Hence 
  \begin{equation}\label{eq:167}
     \pi\mstrut _{\{0,1,2,3,4,5\}}B\cong \{0,\ZZ,\zt,0,0,\ZZ\}. 
  \end{equation}
The map $H\ZZ\to\IZ$ induces a map $B\to A$ and so a map of Atiyah-Hirzebruch
spectral sequences.  The $E_2$~page of the spectral sequence
for~$B^{\fB}(\RP^{\infty})$ is shown in Figure~\ref{fig:10}.  The group
  \begin{figure}[ht]
  \centering
  \includegraphics[scale=.8]{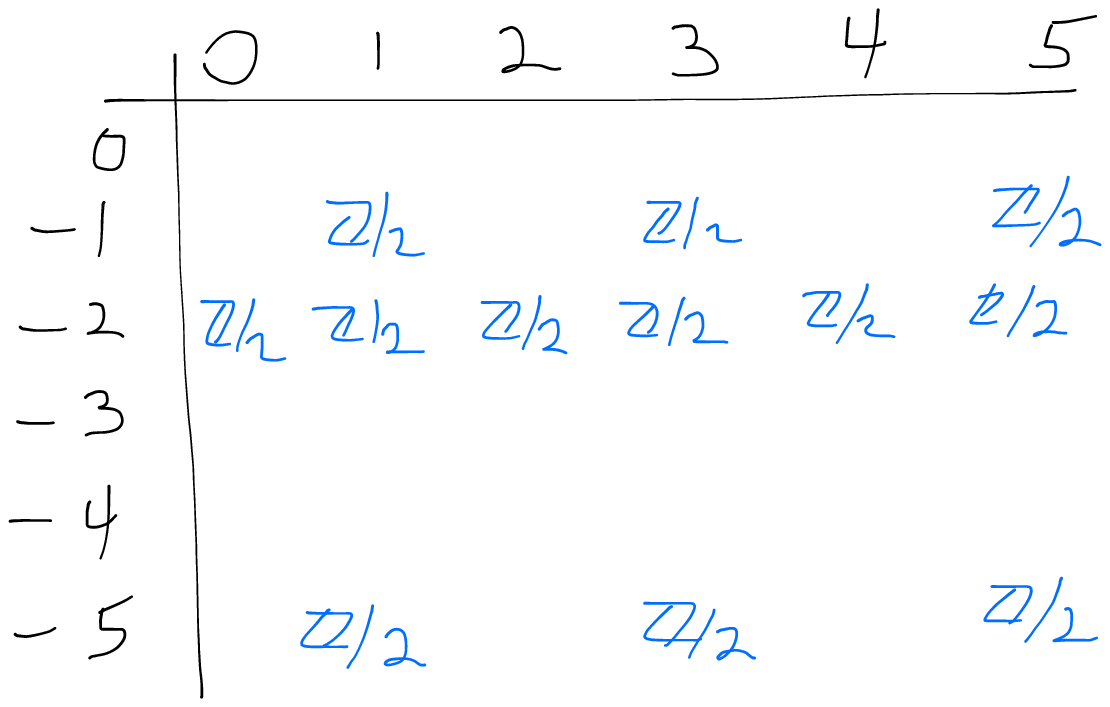}
  \caption{Computation of $B^{\fB}(\RP^{\infty})$}\label{fig:10}
  \end{figure}
$B^{\fB}(\RP^{\infty})\cong (\zt)^{\times 3}$ was computed
after~\eqref{eq:151}, and it follows that $d_2\:E_2^{1,-1}\to E_2^{3,-2}$
vanishes and there is no group extension passing from the degree~0 part of
the $E_\infty $~page to~$B^{\fB}(\RP^{\infty})$.  The map of spectral
sequences now implies that there is no group extension in the $A$-spectral
sequence either, that $A^{\fA}(\RP^{\infty})\cong (\zt)^{\times 2}$, and that
$i\:B^{\fB}(\RP^{\infty})\to A^{\fA}(\RP^{\infty})$ is surjective. 
  \end{proof}

\newcommand{\etalchar}[1]{$^{#1}$}
\providecommand{\bysame}{\leavevmode\hbox to3em{\hrulefill}\thinspace}
\providecommand{\MR}{\relax\ifhmode\unskip\space\fi MR }
% \MRhref is called by the amsart/book/proc definition of \MR.
\providecommand{\MRhref}[2]{%
  \href{http://www.ams.org/mathscinet-getitem?mr=#1}{#2}
}
\providecommand{\href}[2]{#2}


\begin{thebibliography}{GTMW09}

\bibitem[A1]{A1}
M.~F. Atiyah, \emph{Topological quantum field theories}, Inst. Hautes \'Etudes
  Sci. Publ. Math. (1988), no.~68, 175--186 (1989).

\bibitem[A2]{A2}
\bysame, \emph{On framings of {$3$}-manifolds}, Topology \textbf{29} (1990),
  no.~1, 1--7.

\bibitem[A3]{A3}
\bysame, \emph{The signature of fibre-bundles}, Global {A}nalysis ({P}apers in
  {H}onor of {K}. {K}odaira), Univ. Tokyo Press, Tokyo, 1969, pp.~73--84.

\bibitem[ABGHR]{ABGHR}
Matthew Ando, Andrew~J. Blumberg, David Gepner, Michael~J. Hopkins, and Charles
  Rezk, \emph{An $\infty$-categorical approach to $R$-line bundles, $R$-module
  Thom spectra, and twisted $R$-homology},
  \href{http://arxiv.org/abs/arXiv:1403.4325}{{\tt arXiv:1403.4325}}.

\bibitem[ABS]{ABS}
M.~F. Atiyah, R.~Bott, and A.~Shapiro, \emph{Clifford modules}, Topology
  \textbf{3} (1964), no.~suppl. 1, 3--38.

\bibitem[An]{An}
D.~W. Anderson, \emph{Universal coefficient theorems for $K$-theory}.
  mimeographed notes, University of California at Berkeley, 1969.

\bibitem[APS]{APS}
M.~F. Atiyah, V.~K. Patodi, and I.~M. Singer, \emph{Spectral asymmetry and
  {R}iemannian geometry. {I}}, Math. Proc. Cambridge Philos. Soc. \textbf{77}
  (1975), 43--69.

\bibitem[Ay]{Ay}
David Ayala, \emph{Geometric cobordism categories}, ProQuest LLC, Ann Arbor,
  MI, 2009. Thesis (Ph.D.)--Stanford University.

\bibitem[BCFV]{BCFV}
FJ~Burnell, Xie Chen, Lukasz Fidkowski, and Ashvin Vishwanath, \emph{Exactly
  Soluble Model of a 3D Symmetry Protected Topological Phase of Bosons with
  Surface Topological Order}, \href{http://arxiv.org/abs/arXiv:1302.7072}{{\tt
  arXiv:1302.7072}}.

\bibitem[BD]{BD}
John~C. Baez and James Dolan, \emph{Higher-dimensional algebra and topological
  quantum field theory}, \href{http://dx.doi.org/10.1063/1.531236}{J. Math.
  Phys. \textbf{36} (1995)}, no.~11, 6073--6105,
  \href{http://arxiv.org/abs/arXiv:q-alg/9503002}{{\tt arXiv:q-alg/9503002}}.

\bibitem[BHMV]{BHMV}
C.~Blanchet, N.~Habegger, G.~Masbaum, and P.~Vogel, \emph{Topological quantum
  field theories derived from the {K}auffman bracket}, Topology \textbf{34}
  (1995), no.~4, 883--927.

\bibitem[CFV]{CFV}
Xie Chen, Lukasz Fidkowski, and Ashvin Vishwanath, \emph{Symmetry Enforced
  Non-Abelian Topological Order at the Surface of a Topological Insulator},
  Phys. Rev. B \textbf{89} (2014), 165132,
  \href{http://arxiv.org/abs/1306.3250}{{\tt 1306.3250}}.

\bibitem[CGLW]{CGLW}
Xie Chen, Zheng-Cheng Gu, Zheng-Xin Liu, and Xiao-Gang Wen, \emph{{Symmetry
  protected topological orders and the cohomology class of their symmetry
  group}}, \href{http://dx.doi.org/10.1103/PhysRevB.87.155114}{Phys.Rev.
  \textbf{B87} (2013)}, 155114,
\href{http://arxiv.org/abs/1106.4772}{{\tt arXiv:1106.4772 [cond-mat.str-el]}}.
%%CITATION = ARXIV:1106.4772;%%.

\bibitem[CGW1]{CGW1}
Xie Chen, Zheng-Cheng Gu, and Xiao-Gang Wen, \emph{Local unitary
  transformation, long-range quantum entanglement, wave function
  renormalization, and topological order}, Phys. Rev. B \textbf{82} (2010),
  155138, \href{http://arxiv.org/abs/arXiv:1004.3835}{{\tt arXiv:1004.3835}}.

\bibitem[CGW2]{CGW2}
Xie Chen, Zheng-Cheng Gu, and Xiao-Gang Wen, \emph{Complete classification of
  one-dimensional gapped quantum phases in interacting spin systems}, Physical
  Review B \textbf{84} (2011), no.~23, 235128,
  \href{http://arxiv.org/abs/arXiv:1103.3323}{{\tt arXiv:1103.3323}}.

\bibitem[CLW]{CLW}
Xie Chen, Zheng-Xin Liu, and Xiao-Gang Wen, \emph{Two-dimensional
  symmetry-protected topological orders and their protected gapless edge
  excitations}, Physical Review B \textbf{84} (2011), no.~23, 235141,
  \href{http://arxiv.org/abs/arXiv:1106.4752}{{\tt arXiv:1106.4752}}.

\bibitem[Detal]{Detal}
Pierre Deligne, Pavel Etingof, Daniel~S. Freed, Lisa~C. Jeffrey, David Kazhdan,
  John~W. Morgan, David~R. Morrison, and Edward Witten (eds.), \emph{Quantum
  fields and strings: a course for mathematicians. {V}ol. 1, 2}, American
  Mathematical Society, Providence, RI, 1999. Material from the Special Year on
  Quantum Field Theory held at the Institute for Advanced Study, Princeton, NJ,
  1996--1997.

\bibitem[DF]{DF}
Pierre Deligne and Daniel~S. Freed, \emph{Classical field theory}, Quantum
  Fields and Strings: a course for mathematicians, {V}ol. 1, 2 ({P}rinceton,
  {NJ}, 1996/1997), Amer. Math. Soc., Providence, RI, 1999, pp.~137--225.

\bibitem[DFM]{DFM}
Jacques Distler, Daniel~S. Freed, and Gregory~W. Moore, \emph{Spin structures
  and superstrings}, Surveys in differential geometry. {V}olume {XV}.
  {P}erspectives in mathematics and physics, Surv. Differ. Geom., vol.~15, Int.
  Press, Somerville, MA, 2011, pp.~99--130.
  \href{http://arxiv.org/abs/arXiv:0906.0795}{{\tt arXiv:0906.0795}}.

\bibitem[DH]{DH}
Christopher~L. Douglas and Andr{{\'e}}~G. Henriques,
  \href{http://dx.doi.org/10.1090/pspum/083/2742433}{\emph{Topological modular
  forms and conformal nets}}, Mathematical foundations of quantum field theory
  and perturbative string theory, Proc. Sympos. Pure Math., vol.~83, Amer.
  Math. Soc., Providence, RI, 2011, pp.~341--354.
  \href{http://arxiv.org/abs/arXiv:1103.4187}{{\tt arXiv:1103.4187}}.

\bibitem[DW]{DW}
Robbert Dijkgraaf and Edward Witten, \emph{Topological gauge theories and group
  cohomology}, Comm. Math. Phys. \textbf{129} (1990), no.~2, 393--429.

\bibitem[EKKOS]{EKKOS}
H.~Endo, M.~Korkmaz, D.~Kotschick, B.~Ozbagci, and A.~Stipsicz,
  \emph{Commutators, {L}efschetz fibrations and the signatures of surface
  bundles}, \href{http://dx.doi.org/10.1016/S0040-9383(01)00011-8}{Topology
  \textbf{41} (2002)}, no.~5, 961--977,
  \href{http://arxiv.org/abs/arXiv:math/0103176}{{\tt arXiv:math/0103176}}.

\bibitem[F1]{F1}
Daniel~S. Freed, \emph{The cobordism hypothesis},
  \href{http://dx.doi.org/10.1090/S0273-0979-2012-01393-9}{Bull. Amer. Math.
  Soc. (N.S.) \textbf{50} (2013)}, no.~1, 57--92,
  \href{http://arxiv.org/abs/arXiv:1210.5100}{{\tt arXiv:1210.5100}}.

\bibitem[F2]{F2}
\bysame, \emph{Characteristic numbers and generalized path integrals},
  Geometry, topology, \& physics, Conf. Proc. Lecture Notes Geom. Topology, IV,
  Int. Press, Cambridge, MA, 1995, pp.~126--138.
  \href{http://arxiv.org/abs/arXiv:dg-ga/9406002}{{\tt arXiv:dg-ga/9406002}}.

\bibitem[F3]{F3}
\bysame, \emph{Bordism: Old and New}.
  \url{http://www.ma.utexas.edu/users/dafr/M392C/index.html}.

\bibitem[F4]{F4}
\bysame, \emph{Pions and generalized cohomology}, J. Differential Geom.
  \textbf{80} (2008), no.~1, 45--77,
  \href{http://arxiv.org/abs/arXiv:hep-th/0607134}{{\tt arXiv:hep-th/0607134}}.

\bibitem[F5]{F5}
\bysame, \emph{Anomalies and invertible field theories},
  \href{http://arxiv.org/abs/arXiv:1404.7224}{{\tt arXiv:1404.7224}}.

\bibitem[FG]{FG}
Daniel~S. Freed and Robert~E. Gompf, \emph{Computer calculation of {W}itten's
  {$3$}-manifold invariant}, Comm. Math. Phys. \textbf{141} (1991), no.~1,
  79--117.

\bibitem[FH]{FH}
Daniel~S. Freed and Michael~J. Hopkins, \emph{Chern-{W}eil forms and abstract
  homotopy theory},
  \href{http://dx.doi.org/10.1090/S0273-0979-2013-01415-0}{Bull. Amer. Math.
  Soc. (N.S.) \textbf{50} (2013)}, no.~3, 431--468,
  \href{http://arxiv.org/abs/arXiv:1301.5959}{{\tt arXiv:1301.5959}}.

\bibitem[FHT]{FHT}
Daniel~S. Freed, Michael~J. Hopkins, and Constantin Teleman,
  \href{http://dx.doi.org/10.1093/acprof:oso/9780199534920.003.0019}{\emph{Consistent
  orientation of moduli spaces}}, The many facets of geometry, Oxford Univ.
  Press, Oxford, 2010, pp.~395--419.
  \href{http://arxiv.org/abs/arXiv:0711.1909}{{\tt arXiv:0711.1909}}.

\bibitem[FK]{FK}
Lukasz Fidkowski and Alexei Kitaev, \emph{Topological phases of fermions in one
  dimension}, \href{http://dx.doi.org/10.1103/PhysRevB.83.075103}{Phys. Rev. B
  \textbf{83} (2011)}, 075103, \href{http://arxiv.org/abs/arXiv:1008.4138}{{\tt
  arXiv:1008.4138}}.

\bibitem[FM1]{FM1}
Daniel~S. Freed and Gregory~W. Moore, \emph{Setting the quantum integrand of
  {M}-theory}, \href{http://dx.doi.org/10.1007/s00220-005-1482-7}{Commun. Math.
  Phys. \textbf{263} (2006)}, 89--132,
\href{http://arxiv.org/abs/hep-th/0409135}{{\tt arXiv:hep-th/0409135}}.
%%CITATION = HEP-TH/0409135;%%.

\bibitem[FM2]{FM2}
\bysame, \emph{Twisted equivariant matter},
  \href{http://dx.doi.org/10.1007/s00023-013-0236-x}{Ann. Henri Poincar{\'e}
  \textbf{14} (2013)}, no.~8, 1927--2023,
  \href{http://arxiv.org/abs/arXiv:1208.5055}{{\tt arXiv:1208.5055}}.

\bibitem[FMS]{FMS}
Daniel~S. Freed, Gregory~W. Moore, and Graeme Segal, \emph{The uncertainty of
  fluxes}, \href{http://dx.doi.org/10.1007/s00220-006-0181-3}{Commun. Math.
  Phys. \textbf{271} (2007)}, 247--274,
\href{http://arxiv.org/abs/hep-th/0605198}{{\tt arXiv:hep-th/0605198}}.
%%CITATION = HEP-TH/0605198;%%.

\bibitem[FQ]{FQ}
Daniel~S. Freed and Frank Quinn, \emph{Chern-{S}imons theory with finite gauge
  group}, Comm. Math. Phys. \textbf{156} (1993), no.~3, 435--472,
  \href{http://arxiv.org/abs/arXiv:hep-th/911100}{{\tt arXiv:hep-th/911100}}.

\bibitem[FT]{FT}
Daniel~S. Freed and Constantin Teleman, \emph{Relative quantum field theory},
  \href{http://dx.doi.org/10.1007/s00220-013-1880-1}{Comm. Math. Phys.
  \textbf{326} (2014)}, no.~2, 459--476,
  \href{http://arxiv.org/abs/arXiv:1212.1692}{{\tt arXiv:1212.1692}}.

\bibitem[G]{G}
Sam Gunningham, \emph{Spin Hurwitz numbers and topological quantum field
  theory}, \href{http://arxiv.org/abs/arXiv:1201.1273}{{\tt arXiv:1201.1273}}.

\bibitem[GJ]{GJ}
James Glimm and Arthur Jaffe, \emph{Quantum physics}, second ed.,
  Springer-Verlag, New York, 1987. A functional integral point of view.

\bibitem[GMTW]{GMTW}
S{\o}ren Galatius, Ulrike Tillmann, Ib~Madsen, and Michael Weiss, \emph{The
  homotopy type of the cobordism category},
  \href{http://dx.doi.org/10.1007/s11511-009-0036-9}{Acta Math. \textbf{202}
  (2009)}, no.~2, 195--239, \href{http://arxiv.org/abs/arXiv:math/0605249}{{\tt
  arXiv:math/0605249}}.

\bibitem[GW]{GW}
Zheng-Cheng Gu and Xiao-Gang Wen, \emph{Symmetry-protected topological orders
  for interacting fermions---fermionic topological non-linear $\sigma$-models
  and a group super-cohomology theory},
  \href{http://arxiv.org/abs/1201.2648}{{\tt 1201.2648}}.

\bibitem[Ha]{Ha}
Allen Hatcher, \emph{Algebraic topology}, Cambridge University Press,
  Cambridge, 2002.

\bibitem[HeSt]{HeSt}
Drew Heard and Vesna Stojanoska, \emph{K-theory, reality, and duality},
  \href{http://arxiv.org/abs/1401.2581}{{\tt 1401.2581}}.

\bibitem[HS]{HS}
M.~J. Hopkins and I.~M. Singer, \emph{Quadratic functions in geometry,
  topology, and {M}-theory}, J. Diff. Geom. \textbf{70} (2005), 329--452,
\href{http://arxiv.org/abs/math/0211216}{{\tt arXiv:math/0211216}}.
%%CITATION = MATH/0211216;%%.

\bibitem[HW]{HW}
Ling-Yan Hung and Xiao-Gang Wen, \emph{Universal symmetry-protected topological
  invariants for symmetry-protected topological states}, Phys. Rev. B
  \textbf{89} (2014), \href{http://arxiv.org/abs/arXiv:1311.5539}{{\tt
  arXiv:1311.5539}}.

\bibitem[J]{J}
J.~A. Jenquin, \emph{Spin Chern-Simons and spin TQFTs},
  \href{http://arxiv.org/abs/arXiv:math/0605239}{{\tt arXiv:math/0605239}}.

\bibitem[K1]{K1}
Alexei Kitaev, \emph{{Periodic table for topological insulators and
  superconductors}}, \href{http://dx.doi.org/10.1063/1.3149495}{AIP Conf.Proc.
  \textbf{1134} (2009)}, 22--30, \href{http://arxiv.org/abs/0901.2686}{{\tt
  arXiv:0901.2686 [cond-mat.mes-hall]}}.

\bibitem[K2]{K2}
\bysame, \emph{Toward Topological Classification of Phases with Short-range
  Entanglement}, 2011.
  \url{http://online.kitp.ucsb.edu/online/topomat11/kitaev/}. Lecture at KITP.

\bibitem[K3]{K3}
\bysame, \emph{On the Classification of Short-Range Entangled States}, June,
  2013. \url{http://scgp.stonybrook.edu/archives/7874}. Lecture at SCGP.

\bibitem[K4]{K4}
\bysame, \emph{Short range entangled quantum states}, May, 2014. lecture at
  MSRI.

\bibitem[K5]{K5}
\bysame, \emph{Anyons in an exactly solved model and beyond}, Annals of Physics
  \textbf{321} (2006), no.~1, 2--111,
  \href{http://arxiv.org/abs/arXiv:cond-mat/0506438}{{\tt
  arXiv:cond-mat/0506438}}.

\bibitem[K6]{K6}
\bysame, \emph{Unpaired Majorana fermions in quantum wires}, Physics-Uspekhi
  \textbf{44} (2001), no.~10S, 131,
  \href{http://arxiv.org/abs/arXiv:cond-mat/0010440}{{\tt
  arXiv:cond-mat/0010440}}.

\bibitem[Ka1]{Ka1}
Anton Kapustin, \emph{Symmetry Protected Topological Phases, Anomalies, and
  Cobordisms: Beyond Group Cohomology},
  \href{http://arxiv.org/abs/1403.1467}{{\tt 1403.1467}}.

\bibitem[Ka2]{Ka2}
\bysame, \emph{Bosonic Topological Insulators and Paramagnets: a view from
  cobordisms}, \href{http://arxiv.org/abs/1404.6659}{{\tt 1404.6659}}.

\bibitem[KT]{KT}
Anton Kapustin and Ryan Thorngren, \emph{Anomalies of discrete symmetries in
  various dimensions and group cohomology},
  \href{http://arxiv.org/abs/1404.3230}{{\tt 1404.3230}}.

\bibitem[KTTW]{KTTW}
Anton Kapustin, Ryan Thorngren, Alex Turzillo, and Zitao Wang, \emph{Fermionic
  symmetry protected topological phases and cobordisms},
  \href{http://arxiv.org/abs/arXiv:1406.7329}{{\tt arXiv:1406.7329}}.

\bibitem[L]{L}
Jacob Lurie, \emph{On the classification of topological field theories},
  Current developments in mathematics, 2008, Int. Press, Somerville, MA, 2009,
  pp.~129--280. \href{http://arxiv.org/abs/arXiv:0905.0465}{{\tt
  arXiv:0905.0465}}.

\bibitem[Li]{Li}
Arunas Liulevicius, \emph{A theorem in homological algebra and stable homotopy
  of projective spaces}, Trans. Amer. Math. Soc. \textbf{109} (1963), 540--552.

\bibitem[LV]{LV}
Yuan-Ming Lu and Ashvin Vishwanath, \emph{Theory and classification of
  interacting integer topological phases in two dimensions: A Chern-Simons
  approach}, \href{http://dx.doi.org/10.1103/PhysRevB.86.125119}{Phys. Rev. B
  \textbf{86} (2012)}, 125119, \href{http://arxiv.org/abs/arXiv:1203.3156}{{\tt
  arXiv:1203.3156}}.

\bibitem[MaWe]{MaWe}
Ib~Madsen and Michael Weiss, \emph{The stable moduli space of {R}iemann
  surfaces: {M}umford's conjecture},
  \href{http://dx.doi.org/10.4007/annals.2007.165.843}{Ann. of Math. (2)
  \textbf{165} (2007)}, no.~3, 843--941,
  \href{http://arxiv.org/abs/arXiv:math/0212321}{{\tt arXiv:math/0212321}}.

\bibitem[Mi]{Mi}
John~W. Milnor, \emph{Topology from the differentiable viewpoint}, Based on
  notes by David W. Weaver, The University Press of Virginia, Charlottesville,
  Va., 1965.

\bibitem[MT]{MT}
Ib~Madsen and Ulrike Tillmann, \emph{The stable mapping class group and
  {$Q(\mathbb{C}\mathbb{P}^\infty_+)$}},
  \href{http://dx.doi.org/10.1007/PL00005807}{Invent. Math. \textbf{145}
  (2001)}, no.~3, 509--544.

\bibitem[MW]{MW}
Scott Morrison and Kevin Walker, \emph{Higher categories, colimits, and the
  blob complex}, \href{http://dx.doi.org/10.1073/pnas.1018168108}{Proc. Natl.
  Acad. Sci. USA \textbf{108} (2011)}, no.~20, 8139--8145,
  \href{http://arxiv.org/abs/arXiv:1108.5386}{{\tt arXiv:1108.5386}}.

\bibitem[PMN]{PMN}
Eugeniu Plamadeala, Michael Mulligan, and Chetan Nayak, \emph{Short-Range
  Entangled Bosonic States with Chiral Edge Modes and $T$-duality of Heterotic
  Strings}, Phys. Rev. B \textbf{88} (2013), 045131,
  \href{http://arxiv.org/abs/1304.0772}{{\tt 1304.0772}}.

\bibitem[S1]{S1}
Graeme Segal, \emph{The definition of conformal field theory}, Topology,
  geometry and quantum field theory, London Math. Soc. Lecture Note Ser., vol.
  308, Cambridge Univ. Press, Cambridge, 2004, pp.~421--577.

\bibitem[S2]{S2}
\bysame, \emph{Felix Klein Lectures 2011}.
  \url{http://www.mpim-bonn.mpg.de/node/3372/abstracts}.

\bibitem[Sp]{Sp}
Edwin~H. Spanier, \emph{Algebraic topology}, Springer-Verlag, New York, 1981.
  Corrected reprint.

\bibitem[St]{St}
James~D. Stasheff, \emph{Continuous cohomology of groups and classifying
  spaces}, Bull. Amer. Math. Soc. \textbf{84} (1978), no.~4, 513--530.

\bibitem[Ta]{Ta}
Terence Tao, \emph{Compactness and contradiction}, American Mathematical
  Society, Providence, RI, 2013.

\bibitem[Th]{Th}
Ren{{\'e}} Thom, \emph{Quelques propri{\'e}t{\'e}s globales des
  vari{\'e}t{\'e}s diff{\'e}rentiables}, Comment. Math. Helv. \textbf{28}
  (1954), 17--86.

\bibitem[VS]{VS}
Ashvin Vishwanath and T~Senthil, \emph{Physics of three-dimensional bosonic
  topological insulators: surface-deconfined criticality and quantized
  magnetoelectric effect}, Physical Review X \textbf{3} (2013), no.~1, 011016,
  \href{http://arxiv.org/abs/arXiv:1209.3058}{{\tt arXiv:1209.3058}}.

\bibitem[W1]{W1}
Edward Witten, \emph{Quantum field theory and the {J}ones polynomial}, Comm.
  Math. Phys. \textbf{121} (1989), no.~3, 351--399.

\bibitem[W2]{W2}
\bysame, \emph{Dynamics of quantum field theory}, Quantum fields and strings: a
  course for mathematicians, {V}ol. 1, 2 ({P}rinceton, {NJ}, 1996/1997), Amer.
  Math. Soc., Providence, RI, 1999, pp.~1119--1424.

\bibitem[W3]{W3}
\bysame, \emph{The {V}erlinde algebra and the cohomology of the
  {G}rassmannian}, Geometry, topology, \& physics, Conf. Proc. Lecture Notes
  Geom. Topology, IV, Int. Press, Cambridge, MA, 1995, pp.~357--422.
  \href{http://arxiv.org/abs/arXiv:hep-th/9312104}{{\tt arXiv:hep-th/9312104}}.

\bibitem[W4]{W4}
\bysame, \emph{The Chern-Simons functional in condensed matter physics}.
  \url{http://videostreaming.gc.cuny.edu/videos/video/792/in/channel/55/}.
  Lecture at CUNY graduate center, May, 2013.

\bibitem[W5]{W5}
\bysame, \emph{On holomorphic factorization of {WZW} and coset models}, Comm.
  Math. Phys. \textbf{144} (1992), no.~1, 189--212.

\bibitem[Wa]{Wa}
K.~Walker, \emph{TQFTs}.
  \href{http://arxiv.org/abs/http://canyon23.net/math/tc.pdf}{{\tt
  http://canyon23.net/math/tc.pdf}}.

\bibitem[We]{We}
Xiao-Gang Wen, \emph{Symmetry-protected topological invariants of
  symmetry-protected topological phases of interacting bosons and fermions},
  Phys. Rev. B \textbf{89} (2014),
  \href{http://arxiv.org/abs/arXiv:1301.7675}{{\tt arXiv:1301.7675}}.

\bibitem[WGW]{WGW}
Juven Wang, Zheng-Cheng Gu, and Xiao-Gang Wen, \emph{A field theory
  representation of pure gauge and mixed gauge-gravity symmetry-protected
  topological invariants, group cohomology and beyond},
  \href{http://arxiv.org/abs/arXiv:1405.7689}{{\tt arXiv:1405.7689}}.

\bibitem[Wi]{Wi}
David Wigner, \emph{Algebraic cohomology of topological groups}, Trans. Amer.
  Math. Soc. \textbf{178} (1973), 83--93.

\bibitem[WPS]{WPS}
Chong Wang, Andrew~C Potter, and T~Senthil, \emph{Classification of interacting
  electronic topological insulators in three dimensions}, Science \textbf{343}
  (2014), no.~6171, 629--631, \href{http://arxiv.org/abs/arXiv:1306.3238}{{\tt
  arXiv:1306.3238}}.

\bibitem[WS]{WS}
Chong Wang and T~Senthil, \emph{Interacting fermionic topological
  insulators/superconductors in 3D},
  \href{http://arxiv.org/abs/arXiv:1401.1142}{{\tt arXiv:1401.1142}}.

\end{thebibliography}
  \end{document}